\renewcommand{\section}{\@startsection%
{section}%
{1}%
{0em}%
{1.7em}%
{1.2em}%
{\normalfont\large\centering\bfseries}}
\renewcommand{\@seccntformat}[1]%
{\csname the#1\endcsname.\hspace{0.5em}}
\numberwithin{equation}{section}
\renewcommand\appendix{\par
\setcounter{section}{0}%
\setcounter{subsection}{0}%
\setcounter{theorem}{0}
\setcounter{table}{0}
\setcounter{figure}{0}
\gdef\thetable{\Alph{table}}
\gdef\thefigure{\Alph{figure}}
\section*{Appendix}
\gdef\thesection{\Alph{section}}
\setcounter{section}{1}}
\newtheorem{theorem}{Theorem}[section]
\newtheorem{proposition}{Proposition}[section]
\newtheorem{lemma}{Lemma}[section]
\newtheorem{corollary}{Corollary}[section]
\theoremstyle{definition}
\newtheorem{definition}{Definition}
\newtheorem{remark}{Remark}
\newtheorem{example}{Example}
\newcommand{\norm}[1]{\left\|#1\right\|}
\newcommand{\inner}[2]{\left\langle#1,#2\right\rangle}
\newcommand{\integers}{\mathbb{Z}}
\newcommand{\tb}[1]{\widetilde{\boldsymbol{#1}}}
\newcommand{\reals}{\mathbb{R}}
\newcommand{\complex}{\mathbb{C}}
\newcommand{\nats}{\mathbb{N}}
\newcommand{\pb}[1]{\boldsymbol{#1}}
\DeclareMathOperator{\nul}{null}
\DeclareMathOperator{\spec}{spec}
\DeclareMathOperator{\Span}{span}
\DeclareMathOperator{\diag}{diag}
\begin{document}
\begin{titlepage}
\title{Spectral analysis of infinite Marchenko-Slavin matrices
\footnotetext{%
Mathematics Subject Classification(2010):
34K29,  % Inverse problems (34-XX Ordinary differential equations 34Kxx
       % Functional-differential and differential-difference
       % equations)
47B39  % Linear difference operators 

70F17, % Inverse problems (70-XX Mechanics of particles and systems
       % 70Fxx Dynamics of a system of particles, including celestial mechanics)
}
\footnotetext{%
Keywords:
Inverse spectral problem;
Band symmetric matrices;
Spectral measure.
}\hspace{-5mm}
\thanks{%
Research partially supported by UNAM-DGAPA-PAPIIT IN105414
}%
}
\author{
\textbf{Sergio Palafox}
\\
%% ----- Institution ---------
\small Centro de Modelación Matemática, Vinculación y Consultoría\\[-1.6mm]
\small Universidad Tecnológica de la Mixteca\\[-1.6mm]
\small C.P. 69004, Huajuapan de León, Oaxaca\\[-1.6mm]
\small \texttt{palafox@mixteco.utm.mx}
\\[2mm]
\textbf{Luis O. Silva}
\\
%% ----- Institution --------
\small Departamento de F\'{i}sica Matem\'{a}tica\\[-1.6mm]
\small Instituto de Investigaciones en Matem\'aticas Aplicadas y en Sistemas\\[-1.6mm]
\small Universidad Nacional Aut\'onoma de M\'exico\\[-1.6mm]
\small C.P. 04510, M\'exico D.F.\\[-1.6mm]
\small \texttt{silva@iimas.unam.mx} }
%%%%%%%%
\date{}
\maketitle
\vspace{-6mm}
\begin{center}
\begin{minipage}{5in}
  \centerline{{\bf Abstract}} \bigskip This work tackles the problem of
  spectral characterization of a class of infinite matrices arising from the
  modelling of small oscillations in a system of interacting particles. The
  class of matrices under discussion corresponds to the infinite
  Marchenko-Slavin class. The spectral functions of these matrices are
  completely characterized and an algorithm is provided for the
  reconstruction of the matrix from its spectral function. The techniques
  used in this work are based on recent results for the spectral
  characterization of infinite band symmetric matrices with
  so-called degenerations.
\end{minipage}
\end{center}
\thispagestyle{empty}
\end{titlepage}
%%%%%%%%%%%%%%%%%%%%%%%%%%%%%%

\section{Introduction}
\label{sec:intro}
It follows from Lagrangian mechanics that the mechanical characteristics of a
system of interacting particles, when the potential energy depends only on
each particle's spatial position, is given by a finite Hermitian block matrix
\cite[Chap.\,8]{MR3839313}. This matrix encodes the potential energy in the
approximation obtained when each particle is very close to the equilibrium
state. The dimension of the blocks corresponds to the degrees of freedom of
the particles. Thus, in the theory of small oscillations, any linear
arrangement of particles interacting only between immediate neighbors and
moving only along the line of interaction is mechanically equivalent to a
string of masses and springs. In this case the matrix of the system's
mechanical characteristics is a finite scalar Jacobi matrix \cite{MR2915295}.
When interactions occur not only between immediate neighbors and any particle
still has a one degree of freedom, the mechanical matrix is a scalar
Hermitian band matrix. More intricate arrangments of interacting particles
with more degrees of freedom corresponds in general to block Hermitian
matrices.

Due to the fact that Jacobi matrices are simple selfadjoint operators, the
spectral function is completely determined by the scalar measure given by its
projection onto the cyclic vector. In turn, this means that the Jacobi matrix
is reconstructed by this scalar measure. Physically, this entails that, by
observing the small oscillations of just one particle of the system, one
determines the mechanical characteristics of it and therefore the
oscillations of all the interacting particles close to the equilibrium state.
In the more general setting of arbitrary block Hermitian matrices, Marchenko
and Slavin characterized in \cite{MR3839313} a class of block Hermitian
matrices corresponding to systems of interacting particles for which the
observation of a subset of particles is sufficient for unambiguously find
their mechanical parameters. Solving this problem not only amounts to finding
the spectral multiplicity and the generating space of the corresponding
operator. The theory of extendable sets is introduced in
\cite[Chap.\,12]{MR3839313} which yields the above mentioned
characterization. A reconstruction algorithm is also provided in
\cite[Chap.\,14]{MR3839313}.

In this work, we study direct and inverse spectral problems for the
infinite dimensional generalization of the Marchenko-Salvin matrices
\cite[Chap.\,15]{MR3839313}. We restrict ourselves to the scalar case
and focus on developing a general approach suitable for a
generalization to infinite block matrices. It is worth mentioning that
the theory of extendable sets is inherently finite and, thus, for the
infinite case, one has to rely on other techniques which are rather
build up on the theory developed in \cite{MR3543793}.

The theory pertaining to the class of infinite matrices studied in
\cite{MR3543793}, denoted in this work by $\widetilde{\mathfrak M}$
and illustrated in Fig.~\ref{fig:structure-class}, is based on the
results of \cite{MR3389906} where a linear multidimensional
interpolation problem is treated (see
Section~\ref{sec:band-diag-matr}). Actually, the finite and infinite
classes of matrices given in \cite{MR3711273} and \cite{MR3543793},
respectively, were tailor-made to fit the structure of the solutions
to the interpolation problem. Indeed, the vector polynomials arising
from the so-called ``degenerations'' of the matrix (see
Fig.~\ref{fig:structure-class}) generate the solutions to the
corresponding linear multidimensional interpolation problem
\cite[Sec.\,4]{MR3389906}. Noteworthly, the degenerations of the
matrix are interpreted as \emph{inner boundary conditions}
\cite[Sec.\,2]{MR3711273} of the associated difference equation and
are relevant in the theory of such equations.

The infinite dimensional Marchenko-Slavin class, introduced in
Definition~\ref{def:class}, illustrated in Fig.~\ref{fig:gral-example}
and denoted by $\mathfrak M$, contains properly the class
$\widetilde{\mathfrak M}$. However, it is proven in
Theorem~\ref{thm:main}, that every matrix in $\mathfrak M$ is unitary
equivalent to a matrix in $\widetilde{\mathfrak M}$. This equivalence
between $\mathfrak M$ and $\widetilde{\mathfrak M}$ has implications
in both the linear multidimensional interpolation problem and the
modelling of systems of interacting particles. Indeed, a system of the
more general type modelled by a matrix in $\mathfrak M$ can always be
reduced to a system whose mechanical characteristics are given by a
simpler matrix in $\widetilde{\mathfrak M}$.

Proving the equivalence of the clases $\mathfrak M$ and
$\widetilde{\mathfrak M}$ requires firstly a complete characterization
of the spectral functions of a given matrix $M\in\mathfrak M$ (see
Definition~\ref{def:spectral-measure-gen-case}) and, secondly, the
application of the reconstruction algorithm developed in
\cite{MR3543793}. For finding the spectral functions of the infinite
matrix, one needs to characterize the spectral functions of finite
submatrices of a matrix $M$ (see Section~\ref{sec:preliminaries}) and
recur to a limit process. This involves solving a problem related to
the finite matrix moment problem (see
Theorem~\ref{thm:moments-finite}).

\section{The class $\mathfrak M$ of semi-infinite matrices}
\label{sec:preliminaries}
Let $l_{2}(\nats)$ be the Hilbert space of square summable sequences
with entries in $\complex$ and $\{\delta_k\}_{k=1}^{\infty}$ be the
canonical orthonormal basis of it. Along the text, a sequence
$\{u_{k}\}_{k\in\nats}$ is identified with
$u=\sum_{k=1}^{\infty}u_{k}\delta_{k}$.  Also, we consider square
summable sequences $\{u_{k}\}_{k\in G}$, where $G\subset\nats$, and
denote the corresponding space by $l_{2}(G)$.
\begin{definition}
  \label{def:projectors}
  For $F\subset G$ and
  any $u\in l_{2}(G)$, define:
  \begin{enumerate}[(i)]
  \item $\Pi_{G\to F}u:=\{u_{k}\}_{k\in G}$, where
    $u_{k}=
    \begin{cases}
      u_{k}=u_{k}, & k\in F\\
      u_{k}=0, & k\in G\setminus F
    \end{cases}
    $
  \item $\widetilde\Pi_{G\to F}u:=\{u_{k}\}_{k\in F}$
\end{enumerate}
\end{definition}
Note that $\Pi_{G\to F}$ is an orthonormal projector while
$\widetilde\Pi_{G\to F}$ is not.

Let us agree on the following terminology. The nonzero entries of a
matrix to the right of which there are only zero entries on the same
row are called \emph{row-edge entries}. Similarly, the nonzero entries
of a matrix for which there are only zero entries on the same column
are called \emph{column-edge entries}.

The following matrices form the class of infinite Marchenko-Slavin
matrices (see in \cite{MR3839313} the finite analogue of this class).
\begin{definition}
  \label{def:class}
  The matrix $M=\{m_{jk}\}_{j,k\in\nats}$ is in the class
  $\mathfrak M$ when $m_{jk}=\overline{m}_{kj}$ for any $j,k\in\nats$
  and the following conditions are met.
  \begin{enumerate}[(1)]
  \item There is $l\in\nats$ such that there is only one
    \emph{row-edge entry} on each of the columns
    $l+1,l+2,\dots$. Denote either by $n_{M}$ or $n$ the minimal $l$
    with this property.\label{cond-q}
  \item There are numbers $j,k\in\nats$ with $k>j$ such that
    \begin{equation*}
    m_{jk},\,m_{j+1k+1},\,m_{j+2k+2},\dots
  \end{equation*}
  are simultaneously row-edge and column-edge entries. Denote by
  $j_{0},k_{0}$ the smallest $j,k$ having this
  property.\label{cond-tail}
\end{enumerate}
\end{definition}
The condition (2) of Definition~\ref{def:class} gives the ``tail'' of
the matrix. An illustration of a matrix in $\mathfrak M$ is given in
Fig.~\ref{fig:gral-example}.

\begin{figure}[h]
    \centering
\begin{tikzpicture}[scale=.18]\footnotesize
 \pgfmathsetmacro{\xone}{0}
 \pgfmathsetmacro{\xtwo}{ 22.6}
 \pgfmathsetmacro{\yone}{7.6}
 \pgfmathsetmacro{\ytwo}{30}
  \draw[step=1cm,gray,opacity=0.5,very thin] (\xone,\yone) grid (\xtwo,\ytwo);
  % \draw[step=1cm,gray,opacity=0.5,very thin] (35,18) grid (36,19);
  % \draw[step=1cm,gray,opacity=0.5,very thin] (35,16) grid (36,17);
  % \draw[step=1cm,gray,opacity=0.5,very thin] (35,14) grid (36,15);
\draw(22.3,7.3)node[scale=.8]{$\ddots$};
\draw(21.3,7.3)node[scale=.8]{$\ddots$};
\draw(20.3,7.3)node[scale=.8]{$\ddots$};
\draw(23.3,8.5)node[scale=.8]{$\ddots$};
\draw(23.3,9.5)node[scale=.8]{$\ddots$};
\draw(23.3,10.5)node[scale=.8]{$\ddots$};
\draw(23.3,7.6)node[scale=.8]{$\ddots$};
% \draw(39.5,18.5)node[scale=1]{ceros};
% \draw(39.5,16.7)node[scale=1]{reales};
% \draw(40,14.5)node[scale=1]{no ceros};
\draw(36.5,17.5)node[text width=2.5cm,scale=1]{A zero entry};
\draw(36.5,15.5)node[text width=2.5cm,scale=1]{Any entry}; 
\draw(36.5,13.5)node[text width=2.5cm,scale=1]{A nonzero entry};
\draw[step=1cm,gray,opacity=0.5,very thin] (27,17)
  grid (28,18); 
\draw[step=1cm,gray,opacity=0.5,very thin] (27,15) grid
  (28,16); 
\filldraw[thin,gray,opacity=.4] (27,15) rectangle (28,16);
\filldraw[thin,gray,opacity=1] (27,13) rectangle (28,14);

\begin{scope}
\foreach \x in {9,10,11,12,13,14,15,16,17,18,19,20,21,22,23,24,25,26,27}
\draw(23.3,2.5+\x)node[scale=.8]{$\dots$}
;
\end{scope}

\begin{scope}
\foreach \x in {0,1,2,3,4,5,6,7,8,9,10,11,12,13,14,15,16,17,18}
\draw(\x+.5,7.5)node[scale=.8]{$\vdots$}
;
\end{scope}
% March submatrix
\begin{scope}
  \filldraw[thin,gray,opacity=.9] (1, 19) rectangle (2,20);
  \filldraw[thin,gray,opacity=.9] (2, 17) rectangle (3,18);
  \filldraw[thin,gray,opacity=.9] (4, 20) rectangle (5,21);
  \filldraw[thin,gray,opacity=.9] (5, 14) rectangle (6,15);
  \filldraw[thin,gray,opacity=.9] (6, 18) rectangle (7,19);
  \filldraw[thin,gray,opacity=.9] (7, 16) rectangle (8,17);
  \filldraw[thin,gray,opacity=.9] (9, 15) rectangle (10,16);
  \filldraw[thin,gray,opacity=.9] (11, 13) rectangle (12,14);
  \filldraw[thin,gray,opacity=.9] (10, 28) rectangle (11,29);
  \filldraw[thin,gray,opacity=.9] (12, 27) rectangle (13,28);
  \filldraw[thin,gray,opacity=.9] (9, 25) rectangle (10,26);
  \filldraw[thin,gray,opacity=.9] (11, 23) rectangle (12,24);
  \filldraw[thin,gray,opacity=.9] (13, 22) rectangle (14,23);
  \filldraw[thin,gray,opacity=.9] (14, 20) rectangle (15,21);
  \filldraw[thin,gray,opacity=.9] (15, 24) rectangle (16,25);
  \filldraw[thin,gray,opacity=.9] (16, 18) rectangle (17,19);
% March inferior
  \filldraw[thin,gray,opacity=.4] (0, 22) rectangle (1,26);
  \filldraw[thin,gray,opacity=.4] (1, 20) rectangle (2,25);
  \filldraw[thin,gray,opacity=.4] (2, 18) rectangle (3,24);
  \filldraw[thin,gray,opacity=.4] (3, 12) rectangle (4,23);
  \filldraw[thin,gray,opacity=.4] (4, 21) rectangle (5,22);
  \filldraw[thin,gray,opacity=.4] (5, 15) rectangle (6,21);
  \filldraw[thin,gray,opacity=.4] (6, 19) rectangle (7,20);
  \filldraw[thin,gray,opacity=.4] (7, 17) rectangle (8,19);
  \filldraw[thin,gray,opacity=.4] (8, 12) rectangle (9,18);
  \filldraw[thin,gray,opacity=.4] (9, 16) rectangle (10,17);
  \filldraw[thin,gray,opacity=.4] (10, 12) rectangle (11,16);
  \filldraw[thin,gray,opacity=.4] (11, 14) rectangle (12,15);
  \filldraw[thin,gray,opacity=.4] (12, 12) rectangle (13,14);
%March superior
  \filldraw[thin,gray,opacity=.4] (4, 30) rectangle (8,29);
  \filldraw[thin,gray,opacity=.4] (5, 29) rectangle (10,28);
  \filldraw[thin,gray,opacity=.4] (6, 28) rectangle (12,27);
  \filldraw[thin,gray,opacity=.4] (7, 27) rectangle (18,26);
  \filldraw[thin,gray,opacity=.4] (8, 26) rectangle (9,25);
  \filldraw[thin,gray,opacity=.4] (9, 25) rectangle (15,24);
  \filldraw[thin,gray,opacity=.4] (10, 24) rectangle (11,23);
  \filldraw[thin,gray,opacity=.4] (11, 23) rectangle (13,22);
  \filldraw[thin,gray,opacity=.4] (12, 22) rectangle (18,21);
  \filldraw[thin,gray,opacity=.4] (13, 21) rectangle (15,20);
  \filldraw[thin,gray,opacity=.4] (14, 20) rectangle (18,19);
  \filldraw[thin,gray,opacity=.4] (15, 19) rectangle (16,18);
  \filldraw[thin,gray,opacity=.4] (16, 18) rectangle (18,17);
\end{scope}
   % 8th-diagonal
\begin{scope}
\foreach \x in {0}
{
  \filldraw[thin,gray,opacity=.9] (0+\x, 21-\x)
    rectangle (1+\x,22-\x)
 ;
   \filldraw[thin,gray,opacity=.9] (8+\x, 30-\x)
     rectangle (9+\x,29-\x);}
 \end{scope}
%4th-digonal
\begin{scope}
\foreach \x in {13}
{
  \filldraw[thin,gray,opacity=.9] (0+\x, 25-\x)
    rectangle (1+\x,26-\x)
 ;
   \filldraw[thin,gray,opacity=.9] (4+\x, 30-\x)
     rectangle (5+\x,29-\x);}
\end{scope}
%thirst-diagonal
\begin{scope}
\foreach \x in {0,1,2,3,4,5,6,7,8,9,10,11,12,13,14}
{
  \filldraw[thin,gray,opacity=.4] (0+\x, 26-\x)
    rectangle (1+\x,27-\x)
 ;
   \filldraw[thin,gray,opacity=.4] (3+\x, 30-\x)
     rectangle (4+\x,29-\x);}
\end{scope}
\begin{scope}
\foreach \x in {15,16,17,18}
{
  \filldraw[thin,gray,opacity=.9] (0+\x, 26-\x)
    rectangle (1+\x,27-\x)
 ;
   \filldraw[thin,gray,opacity=.9] (3+\x, 30-\x)
     rectangle (4+\x,29-\x);}
\end{scope}
%second diagonal
\begin{scope}
\foreach \x in {0,1,2,3,4,5,6,7,8,9,10,11,12,13,14,15,16,17,18,19}
{
  \filldraw[thin,gray,opacity=.4] (0+\x, 27-\x)
    rectangle (1+\x,28-\x)
 ;
   \filldraw[thin,gray,opacity=.4] (2+\x, 30-\x)
     rectangle (3+\x,29-\x);}
\end{scope}
% first diagonal
\begin{scope}
\foreach \x in {0,1,2,3,4,5,6,7,8,9,10,11,12,13,14,15,16,17,18,19,20}
{
  \filldraw[thin,gray,opacity=.4] (0+\x, 28-\x)
    rectangle (1+\x,29-\x)
 ;
   \filldraw[thin,gray,opacity=.4] (1+\x, 30-\x)
     rectangle (2+\x,29-\x);}
\end{scope}
%main diagonal
\begin{scope}
  \foreach \x in
  {0,1,2,3,4,5,6,7,8,9,10,11,12,13,14,15,16,17,18,19,20,21}
  { \filldraw[thin,gray,opacity=.25] (0+\x, 29-\x) rectangle
    (1+\x,30-\x) ; \filldraw[thin,gray,opacity=.2] (0+\x, 29-\x)
    rectangle (1+\x,30-\x);}
\end{scope}
\end{tikzpicture}
\caption{An example of an element of $\mathfrak{M}$}
\label{fig:gral-example}
\end{figure}
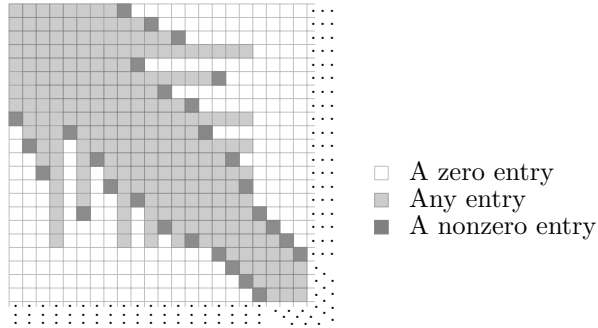

According to \cite[Sec.\,47 Thm.\,4]{MR1255973}, given an orthonormal
basis in $l_{2}(\nats)$, one can construct uniquely from the matrix
$M$ a closed symmetric operator, which is denoted by $\mathcal{M}$, in
such a way that $M$ is its matrix representation with respect to the
given orthonormal basis (see \cite[Sec.\,47]{MR1255973}).

The analysis of the spectral function of operator $\mathcal M$ is
carried out by means of the auxiliary operator
$\mathcal M_{N}:=\widetilde\Pi_{\nats\to G_{N}}\mathcal M
\upharpoonright_{l_{2}(G_{N})}$, where $G_{s}:=\{1,\dots,s\}$ and
$N>n$. Note that $\mathcal M_{N}$ can be identified with the operator
whose matrix representation is the finite dimensional submatrix
corresponding to the $N\times N$ upper-left corner of $M$ which is
denoted by $M_{N}$. Note that, since $N>n$, any matrix $M_N$
determines a set $K\subset G_{N}$ such that for each $k\in K$ the
$k$-th row has no row-edge entry and the cardinality of $K$ is always
equal to $n$ (see Definition~\ref{def:class} (1)).

\section{Spectral functions of submatrices}
\label{sec:spectr-meas-subm}

For a matrix $M\in\mathfrak M$, consider the finite submatrix
$M_{N}$. To study the spectral properties of the operator
$\mathcal{M}_{N}$, one recurs to the equation
\begin{equation}
  \label{eq:formal-difference}
  M_{N}\psi=z \psi\,,
\end{equation}
for $\psi\in l_{2}(G_{N})$. To solve this equation, one starts by
finding the solution to
\begin{equation}
  \label{eq:formal-difference_p}
  \Pi_{G_{N}\to K^{\perp}}(M_{N}-zI) \psi=0
\end{equation}
where $K^{\perp}$ is the complement in $G_{N}$ of the set $K$ given at
the end of the previous section.

The solution to \eqref{eq:formal-difference_p} is constructed
recursively as follows: one gives the first $n$ elements of the
sequence $\psi$ and finds $\psi_{n+1}$ from the first row equation of
\eqref{eq:formal-difference_p} (see \eqref{eq:example-system1} in
Example~\ref{ex:simple-difference-equation} below). It is possible to
solve this equation due to the fact that the row-edge entry, which is
on the $n+1$-th column, is not zero. Having found $\psi_{n+1}$, one
finds the subsequent elements of the solution using the row equations
where there are row-edge entries (that is row equations corresponding
to $K^{\perp}$).
\begin{example}
  \label{ex:simple-difference-equation}
  If one has the matrix $M_{7}$ corresponding to
  Fig.~\ref{fig:simple-example}, then the difference equations
  corresponding to the rows are
  \begin{align}
    \label{eq:example-system1}
    m_{11}\psi_1+m_{12}\psi_2+m_{13}\psi_3+\pb{m}_{14}\psi_4&=z \psi_1\\
    m_{21}\psi_1+m_{22}\psi_2+m_{23}\psi_3+m_{24}\psi_4+m_{25}\psi_5+\pb{m}_{26}\psi_6&=z\psi_2\nonumber
    \\\label{eq:example-system3}
    m_{31}\psi_1+m_{32}\psi_2+m_{33}\psi_3+m_{34}\psi_4+m_{35}\psi_5+m_{36}\psi_6&=z \psi_3\\
    m_{41}\psi_1+m_{42}\psi_2+m_{43}\psi_3+m_{44}\psi_4+\pb{m}_{45}\psi_5&=z \psi_4\nonumber\\\label{eq:example-system5}
    m_{52}\psi_2+m_{53}\psi_3+m_{54}\psi_4+m_{55}\psi_5+m_{56}\psi_6&=z
                                                                      \psi_5\\
    m_{62}\psi_2+m_{63}\psi_3+m_{65}\psi_5+m_{66}\psi_6+\pb{m}_{67}\psi_7&=z \psi_6\nonumber\\
    m_{76}\psi_6+m_{77}\psi _7&=z \psi_7\label{eq:example-system7}
  \end{align}
  where the row-edge entries have been donoted in bold typeface. Note
  that in \eqref{eq:example-system3}, \eqref{eq:example-system5} and
  \eqref{eq:example-system7} there are no row-edge entries. In this
  case, the system \eqref{eq:formal-difference}, without equations
  \eqref{eq:example-system3}, \eqref{eq:example-system5},
  \eqref{eq:example-system7} (\emph{i.\,e.}, the system corresponding
  to \eqref{eq:formal-difference_p}) and with boundary conditions
  given by the numbers $\psi _{1},\psi _{2},\psi _{3}$, has the
  following solution:
\begin{equation}
\begin{pmatrix}
  \psi _1\\
  \psi _2\\
  \psi _3\\
  \psi_4\\
  \psi_5\\
  \psi_6\\
  \psi_7\\
\end{pmatrix}=
\begin{pmatrix}
  \psi_1\\
  \psi_2\\
  \psi_3\\
  \pb{m}_{14}^{-1}(z \psi_1-m_{11}\psi_1-m_{12}\psi_2-m_{13}\psi_3)\\
  \pb{m}_{45}^{-1}(z \psi_4-m_{41}\psi_1-m_{42}\psi_2-m_{43}\psi_3-m_{44}\psi_4)\\
  {\pb m}_{26}^{-1}(z \psi_2-m_{21}\psi_1-m_{22}\psi_2-m_{23}\psi_3-m_{24}\psi_4-m_{25}\psi_5)\\
  {\pb m}_{67}^{-1}(z \psi_6-m_{62}\psi_2-m_{63}\psi_3-m_{65}\psi_5-m_{66}\psi_6)
\end{pmatrix}\,.
\label{eq:solution-U}
\end{equation}
\end{example}
\begin{figure}[h]
    \centering
\begin{tikzpicture}[scale=.4]\footnotesize
 \pgfmathsetmacro{\xone}{0}
 \pgfmathsetmacro{\xtwo}{7}
 \pgfmathsetmacro{\yone}{1}
 \pgfmathsetmacro{\ytwo}{8}
  \draw[step=1cm,gray,opacity=0.5,very thin] (\xone,\yone) grid (\xtwo,\ytwo);
% \draw(8.9,-.3)node[scale=.8]{$\ddots$};
% \draw(8.9,.7)node[scale=.8]{$\ddots$};
% \draw(7.9,-.3)node[scale=.8]{$\ddots$};
% \begin{scope}
% \foreach \x in {0,1,2,3,4,5,6}
% \draw(9.3,1
% .5+\x)node[scale=.8]{$\dots$}
% ;
% \end{scope}
% \begin{scope}
% \foreach \x in {0,1,2,3,4,5,6}
% \draw(\x+.5,-.5)node[scale=.8]{$\vdots$}
% ;
% \end{scope}
% March submatrix
\begin{scope}
  \filldraw[thin,gray,opacity=.9] (3, 7) rectangle (4,8);
  \filldraw[thin,gray,opacity=.9] (4, 4) rectangle (5,5);
  \filldraw[thin,gray,opacity=.9] (5, 6) rectangle (6,7);
  \filldraw[thin,gray,opacity=.9] (6, 2) rectangle (7,3);
%  \filldraw[thin,gray,opacity=.9] (7, 1) rectangle (8,2);
  \filldraw[thin,gray,opacity=.9] (0, 4) rectangle (1,5);
 \filldraw[thin,gray,opacity=.9] (1, 2) rectangle (2,3);
 \filldraw[thin,gray,opacity=.9] (3, 3) rectangle (4,4);
 \filldraw[thin,gray,opacity=.9] (5, 2) rectangle (6,1);
% \filldraw[thin,gray,opacity=.9] (6, 1) rectangle (7,0);
\end{scope}
%third diagonal
\begin{scope}
\foreach \x in {0,1}
{
  \filldraw[thin,gray,opacity=.4] (1+\x, 3-\x)
    rectangle (2+\x,4-\x)
 ;
   \filldraw[thin,gray,opacity=.4] (4+\x, 7-\x)
     rectangle (5+\x,6-\x);}
\end{scope}
%second diagonal
\begin{scope}
\foreach \x in {0,1,2}
{
  \filldraw[thin,gray,opacity=.4] (0+\x, 5-\x)
    rectangle (1+\x,6-\x)
 ;
   \filldraw[thin,gray,opacity=.4] (2+\x, 8-\x)
     rectangle (3+\x,7-\x);}
\end{scope}
% first diagonal
\begin{scope}
\foreach \x in {0,1,2,4}
{
  \filldraw[thin,gray,opacity=.4] (0+\x, 6-\x)
    rectangle (1+\x,7-\x)
 ;
   \filldraw[thin,gray,opacity=.4] (1+\x, 8-\x)
     rectangle (2+\x,7-\x);}
\end{scope}
%main diagonal
\begin{scope}
  \foreach \x in
  {0,1,2,3,4,5,6}
  { \filldraw[thin,gray,opacity=.25] (0+\x, 7-\x) rectangle
    (1+\x,8-\x) ; \filldraw[thin,gray,opacity=.2] (0+\x, 7-\x)
    rectangle (1+\x,8-\x);}
\end{scope}
\end{tikzpicture}
\caption{A submatrix of $\mathfrak{M}$ with $N=7$ and $n=3$.}
\label{fig:simple-example}
\end{figure}
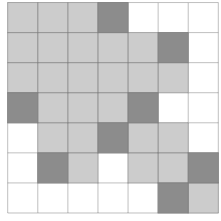

As was already mentioned, the first $n$ elements of the sequence
$\psi$ play the role of boundary conditions for the difference
equation. For assigning these conditions, we introduce an $n\times n$
upper triangular, invertible matrix $\mathscr{T}=\{t_{ij}\}_{i,j=1}^n$
which will be called \emph{boundary matrix}. Let
$\psi^j_{\mathscr{T}}(z)$ ($j\in G_n$) be an $N$-dimensional vector
solution to \eqref{eq:formal-difference_p} such that
$\inner{\delta_i}{\psi_{\mathscr T}^j(z)}=t_{ji}$ for $i\in
G_n$. Thus,
\begin{equation}
  \label{eq:psi_tau}
  \widetilde\Pi_{G_{N}\to G_{n}}\Psi_{\mathscr{T}}(z)=\mathscr{T}^*\,,
\end{equation}
where $\Psi_{\mathscr{T}}(z)$ is the $N\times n$ matrix whose columns
are the solutions
$\psi^1_{\mathscr{T}}(z),\dots,\psi^n_{\mathscr{T}}(z)$. We may use
the notation:
\begin{equation}
  \label{eq:notation-colums-composition}
  \Psi_{\mathscr{T}}(z)=
\begin{pmatrix}
  \psi^1_{\mathscr{T}}(z)&\psi^2_{\mathscr{T}}(z)&\dots&\psi^{n}_{\mathscr{T}}(z)
\end{pmatrix}\,.
\end{equation}
\begin{lemma}
  \label{lem:ind_linear_solution}
  For any $z\in\complex$, the $N$-dimensional vector $\eta(z)$ is a
  solution to \eqref{eq:formal-difference_p} if and only if
  $\eta(z)=\Psi_{\mathscr T}(z) C$, where $C$ is an $n$-dimensional
  vector.  If $C_{1}$ and $C_{2}$ are two linearly independent vectors
  in $l_{2}(G_{n})$, then $\Psi_{\mathscr T}(z) C_{1}$ and
  $\Psi_{\mathscr T}(z) C_{2}$ are linearly independent solutions to
  \eqref{eq:formal-difference_p}.
\end{lemma}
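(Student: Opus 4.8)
The plan is to reduce the lemma to three ingredients: (i) the existence and uniqueness of a solution to the reduced system \eqref{eq:formal-difference_p} once the first $n$ entries of $\psi$ are prescribed arbitrarily, which is exactly the content of the recursive procedure described just before the statement; (ii) the invertibility of the boundary matrix $\mathscr{T}$; and (iii) elementary linear algebra. Everything is uniform in $z\in\complex$ because the recursion never divides by a quantity depending on $z$.

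For the ``if'' part, recall that each column $\psi^{j}_{\mathscr{T}}(z)$ of $\Psi_{\mathscr{T}}(z)$ is, by definition, a solution of \eqref{eq:formal-difference_p}, its existence being guaranteed by the recursive construction applied to the initial data given by the $j$-th row of $\mathscr{T}$. Since \eqref{eq:formal-difference_p} is linear and homogeneous in $\psi$, the vector $\Psi_{\mathscr{T}}(z)C=\sum_{j=1}^{n}C_{j}\,\psi^{j}_{\mathscr{T}}(z)$ is again a solution for every $n$-dimensional vector $C$.

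For the ``only if'' part, let $\eta(z)$ be an arbitrary solution of \eqref{eq:formal-difference_p} and set $C:=(\mathscr{T}^{*})^{-1}\widetilde\Pi_{G_{N}\to G_{n}}\eta(z)$, which is well defined because $\mathscr{T}$, hence $\mathscr{T}^{*}$, is invertible. Put $\xi(z):=\eta(z)-\Psi_{\mathscr{T}}(z)C$; by the ``if'' part this is again a solution of \eqref{eq:formal-difference_p}, and \eqref{eq:psi_tau} gives $\widetilde\Pi_{G_{N}\to G_{n}}\xi(z)=\widetilde\Pi_{G_{N}\to G_{n}}\eta(z)-\mathscr{T}^{*}C=0$. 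Now I invoke uniqueness: a solution of \eqref{eq:formal-difference_p} is completely determined by its first $n$ entries, because the entries at columns $n+1,\dots,N$ are recovered one at a time and uniquely from the row equations indexed by $K^{\perp}$ --- the row-edge entry on column $c$ is nonzero, it is the rightmost nonzero entry of its row, and that row has index strictly less than $c$, so the corresponding equation expresses $\psi_{c}$ in terms of entries already determined. Since the zero vector solves \eqref{eq:formal-difference_p} with vanishing first $n$ entries, uniqueness forces $\xi(z)=0$, i.e.\ $\eta(z)=\Psi_{\mathscr{T}}(z)C$.

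For the last assertion, observe that \eqref{eq:psi_tau} makes the linear map $C\mapsto\Psi_{\mathscr{T}}(z)C$ injective: if $\Psi_{\mathscr{T}}(z)C=0$, applying $\widetilde\Pi_{G_{N}\to G_{n}}$ yields $\mathscr{T}^{*}C=0$, whence $C=0$. Consequently, if $C_{1},C_{2}\in l_{2}(G_{n})$ are linearly independent, a relation $a\,\Psi_{\mathscr{T}}(z)C_{1}+b\,\Psi_{\mathscr{T}}(z)C_{2}=\Psi_{\mathscr{T}}(z)(aC_{1}+bC_{2})=0$ forces $aC_{1}+bC_{2}=0$ and hence $a=b=0$; together with the ``if'' part this shows that $\Psi_{\mathscr{T}}(z)C_{1}$ and $\Psi_{\mathscr{T}}(z)C_{2}$ are linearly independent solutions of \eqref{eq:formal-difference_p}. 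I expect the only delicate point to be the structural claim used in the third paragraph, namely that for every column $c>n$ the unique row-edge entry on that column lies in a row of index $<c$; this is precisely where the defining properties of the class $\mathfrak{M}$ (Definition~\ref{def:class}) are needed, and it is what guarantees that the recursion closes. All the remaining steps are formal.
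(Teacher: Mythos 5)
Your proof is correct and follows essentially the same route as the paper's: the ``if'' direction by linearity, the ``only if'' direction by using the invertibility of $\mathscr{T}$ to match the first $n$ entries and then invoking uniqueness of the recursive solution, and the linear-independence claim from $\nul\Psi_{\mathscr T}(z)=0$ via \eqref{eq:psi_tau}. You merely spell out in more detail the uniqueness of the recursion (which the paper delegates to the paragraph preceding Example~\ref{ex:simple-difference-equation}), and your identification of the structural fact about row-edge entries as the one nontrivial ingredient is apt.
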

\begin{proof}
  The fact that $\Psi_{\mathscr T}(z) C$ is a solution to
  \eqref{eq:formal-difference_p} is straightforward. Let $\eta$ be a
  solution to \eqref{eq:formal-difference_p} for a fixed $z$. Since
  $\mathscr T$ is a triangular invertible matrix, there is a vector
  $C_{\eta}$ such that $\Psi_{\mathscr T}(z)C_{\eta}$ has the same
  first $n$ entries as $\eta$. Therefore
  $\eta-\Psi_{\mathscr T}(z)C_{\eta}$ is the trivial solution (see the
  paragraph above Example~\ref{ex:simple-difference-equation}). For
  proving the second part, one uses that $\nul\Psi(z)=0$ which again
  follows from the invertibility of $\mathscr T$.
\end{proof}
  \begin{proposition}
    \label{prop:spectrum}
  Let $M_N$ be a submatrix of $M\in\mathfrak M$. Define
     \begin{equation}
       \label{eq:Q_matrix}
       \Theta_{\mathscr T}(z):=\widetilde\Pi_{G_{N}\to K}(M_{N}-zI) \Psi_{\mathscr T}(z)\,.
     \end{equation}
     The spectrum of $M_N$ is given by the zeros of the polynomial
     $\det\Theta_{\mathscr T}(z)$. Moreover, if
     $\lambda\in\spec(M_{N})$, then there is a nonzero vector
     $C^{\lambda}$ such that the corresponding normalized eigenvector
     $\varphi^{\lambda}$ is given by
       \begin{equation}
    \label{eq:relation_phi_psi}
    \varphi^{\lambda}=\Psi_{\mathscr{T}}(\lambda)C^{\lambda}\,.
  \end{equation}
   \end{proposition}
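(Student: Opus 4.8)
The plan is to turn the eigenvalue problem \eqref{eq:formal-difference} into a finite linear system for the parameter vector $C$ of Lemma~\ref{lem:ind_linear_solution}. Since $G_{N}=K\cup K^{\perp}$ is a disjoint union, one has $I=\Pi_{G_{N}\to K}+\Pi_{G_{N}\to K^{\perp}}$, so for a fixed $z\in\complex$ a vector $\psi\in l_{2}(G_{N})$ satisfies $M_{N}\psi=z\psi$ if and only if it satisfies simultaneously $\Pi_{G_{N}\to K^{\perp}}(M_{N}-zI)\psi=0$ and $\Pi_{G_{N}\to K}(M_{N}-zI)\psi=0$; the latter is clearly equivalent to $\widetilde\Pi_{G_{N}\to K}(M_{N}-zI)\psi=0$, the two projectors differing only by zero padding on $K^{\perp}$.

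By Lemma~\ref{lem:ind_linear_solution}, the solutions of the first equation --- which is precisely \eqref{eq:formal-difference_p} --- are exactly the vectors $\psi=\Psi_{\mathscr T}(z)C$ with $C\in\complex^{n}$. Inserting this description into the second equation gives $\widetilde\Pi_{G_{N}\to K}(M_{N}-zI)\Psi_{\mathscr T}(z)C=\Theta_{\mathscr T}(z)C=0$. Now $\card K=n$ (the property of $K$ recalled at the end of Section~\ref{sec:preliminaries}, which follows from Definition~\ref{def:class}(\ref{cond-q})) and $\Psi_{\mathscr T}(z)$ has $n$ columns, so $\Theta_{\mathscr T}(z)$ is an $n\times n$ matrix. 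Hence $z\in\spec(M_{N})$ if and only if $\Theta_{\mathscr T}(z)C=0$ for some nonzero $C$, that is, if and only if $\det\Theta_{\mathscr T}(z)=0$, which is the claimed description of the spectrum. That $\det\Theta_{\mathscr T}$ is a polynomial follows from the recursion after \eqref{eq:formal-difference_p} (see Example~\ref{ex:simple-difference-equation}): each entry of $\Psi_{\mathscr T}(z)$ is polynomial in $z$, the row-edge entries appearing in the denominators being independent of $z$; the polynomial is not identically zero because $\spec(M_{N})$ is finite.

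Finally, if $\lambda\in\spec(M_{N})$ then $\det\Theta_{\mathscr T}(\lambda)=0$, so there is a nonzero $C^{\lambda}$ with $\Theta_{\mathscr T}(\lambda)C^{\lambda}=0$; by the equivalence above $\Psi_{\mathscr T}(\lambda)C^{\lambda}$ solves $M_{N}\psi=\lambda\psi$, and it is nonzero since $\nul\Psi_{\mathscr T}(\lambda)=0$ (second part of Lemma~\ref{lem:ind_linear_solution}). Normalizing this vector and rescaling $C^{\lambda}$ by the same constant yields \eqref{eq:relation_phi_psi}. I expect the only point requiring care to be the bookkeeping that makes the reduction an equivalence in both directions, together with checking that $\Theta_{\mathscr T}(z)$ is genuinely square of order $n$; granting Lemma~\ref{lem:ind_linear_solution}, the rest is immediate.
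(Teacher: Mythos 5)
Your argument is correct and follows essentially the same route as the paper: parametrize the solutions of \eqref{eq:formal-difference_p} via Lemma~\ref{lem:ind_linear_solution} and reduce the eigenvalue equation to the $n\times n$ homogeneous system $\Theta_{\mathscr T}(z)C=0$, whose nontrivial solvability is governed by $\det\Theta_{\mathscr T}(z)$. You are in fact slightly more explicit than the paper in splitting $I=\Pi_{G_{N}\to K}+\Pi_{G_{N}\to K^{\perp}}$ to make the equivalence two-directional and in checking that $\det\Theta_{\mathscr T}$ is a nonzero polynomial, but the substance is identical.
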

   \begin{proof}
     If $\lambda$ is an eigenvalue and $\varphi^{\lambda}$ is the
     corresponding eigenvector of $M_{N}$, then
     \eqref{eq:formal-difference} and \eqref{eq:formal-difference_p}
     are satisfied for $z=\lambda$ and $\psi=\varphi^{\lambda}$. Due
     to Lemma~\ref{lem:ind_linear_solution} there is a nonzero vector
     $C^{\lambda}$ such that
     $\varphi^{\lambda}=\Psi_{\mathscr{T}}(\lambda)C^{\lambda}$. Also,
\begin{align}
  0&=\widetilde{\Pi}_{G_{N}\to K}(M_{N}-\lambda I) \Psi_{\mathscr
     T}(\lambda)C^{\lambda}\nonumber\\
   &= \Theta_{\mathscr T}(\lambda)C^{\lambda}\label{eq:formal-difference_p_dif}
   \end{align}
   Hence, the homogeneous linear system
   \eqref{eq:formal-difference_p_dif} has a nontrivial solution if and
   only if $\det \Theta_{\mathscr T}(\lambda)=0$. The normalization of
   the vector $\varphi^{\lambda}$ can clearly be achieved by modifying
   $C^{\lambda}$ appropriately.
\end{proof}

\begin{corollary}
 \label{cor:eigenspaces}
 The multiplicity of the eigenvalues of $M_{N}$ is no greater than
 $n$. If $ \{\lambda_k\}_{k=1}^N $ is the spectrum of the matrix
 enumerated so that the multiplicity of eigenvalues is taken into
 account, then there is a finite sequence of vectors,
 $\{\varphi^{\lambda_{k}}\}_{k=1}^{N}$, being an orthonormal basis of
 $l_{2}(G_{N})$.
\end{corollary}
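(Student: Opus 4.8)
The plan is to combine Proposition~\ref{prop:spectrum} with the spectral theorem for finite Hermitian matrices. First I would record that, since $m_{jk}=\overline{m}_{kj}$, the submatrix $M_{N}$ is Hermitian, so the operator $\mathcal M_{N}$ on the finite-dimensional space $l_{2}(G_{N})$ is selfadjoint; consequently its spectrum is real, it admits an orthonormal basis of eigenvectors, and the eigenvalues counted with multiplicity number exactly $N$. What still has to be extracted is the bound on multiplicities and the identification of the basis vectors with those furnished by Proposition~\ref{prop:spectrum}.

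For the multiplicity bound I would fix $\lambda\in\spec(M_{N})$ and compute the eigenspace $\nul(M_{N}-\lambda I)$ exactly. On one hand, any eigenvector for $\lambda$ satisfies both \eqref{eq:formal-difference} and \eqref{eq:formal-difference_p}, hence by Lemma~\ref{lem:ind_linear_solution} equals $\Psi_{\mathscr T}(\lambda)C$ for some $C\in l_{2}(G_{n})$, and then \eqref{eq:formal-difference_p_dif} forces $C\in\nul\Theta_{\mathscr T}(\lambda)$. Conversely, for $C\in\nul\Theta_{\mathscr T}(\lambda)$ the vector $\Psi_{\mathscr T}(\lambda)C$ solves \eqref{eq:formal-difference_p} by Lemma~\ref{lem:ind_linear_solution} and, because $\Theta_{\mathscr T}(\lambda)C=0$, it also satisfies the row equations indexed by $K$; since these together are all the rows of \eqref{eq:formal-difference}, the vector lies in $\nul(M_{N}-\lambda I)$. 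Thus $\nul(M_{N}-\lambda I)=\Psi_{\mathscr T}(\lambda)\,\nul\Theta_{\mathscr T}(\lambda)$, and since $\nul\Psi_{\mathscr T}(\lambda)=0$ (invertibility of $\mathscr T$, as in the proof of Lemma~\ref{lem:ind_linear_solution}) we get $\dim\nul(M_{N}-\lambda I)=\dim\nul\Theta_{\mathscr T}(\lambda)$. Because $\Theta_{\mathscr T}(\lambda)$ has $n$ columns, one per $\psi^{j}_{\mathscr T}$, and its range lies in $l_{2}(K)$ with $\card K=n$, its kernel has dimension at most $n$; hence every eigenvalue of $M_{N}$ has multiplicity at most $n$.

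For the orthonormal basis I would list the distinct eigenvalues $\mu_{1},\dots,\mu_{r}$ with multiplicities $d_{1},\dots,d_{r}$, where $\sum_{i}d_{i}=N$ by selfadjointness, choose an orthonormal basis inside each eigenspace $\nul(M_{N}-\mu_{i}I)$, and note that eigenspaces for different eigenvalues are mutually orthogonal (Hermitian $M_{N}$) and jointly span $l_{2}(G_{N})$. The union of these bases is then an orthonormal basis of $l_{2}(G_{N})$ of cardinality $N$; re-indexing it as $\{\varphi^{\lambda_{k}}\}_{k=1}^{N}$ with $\{\lambda_{k}\}_{k=1}^{N}$ the spectrum enumerated with multiplicity, each $\varphi^{\lambda_{k}}$ lies in $\nul(M_{N}-\lambda_{k}I)$ and, by Proposition~\ref{prop:spectrum} together with the renormalization remarked there, has the form $\Psi_{\mathscr T}(\lambda_{k})C^{\lambda_{k}}$ with $C^{\lambda_{k}}\neq0$.

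The statement is not deep; the only point I expect to require genuine care is the converse inclusion $\Psi_{\mathscr T}(\lambda)\,\nul\Theta_{\mathscr T}(\lambda)\subseteq\nul(M_{N}-\lambda I)$, that is, verifying that a solution of \eqref{eq:formal-difference_p} which in addition annihilates the $K$-indexed rows really is a solution of the full eigenvalue equation \eqref{eq:formal-difference} (here the finite-dimensionality of $l_{2}(G_{N})$ makes membership in the space automatic). Everything else is the spectral theorem for Hermitian matrices plus bookkeeping with multiplicities.
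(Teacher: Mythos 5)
Your proposal is correct and follows essentially the same route as the paper: the multiplicity bound comes from the fact that every eigenvector has the form $\Psi_{\mathscr T}(\lambda)C^{\lambda}$ with $C^{\lambda}$ an $n$-dimensional vector (Proposition~\ref{prop:spectrum} and Lemma~\ref{lem:ind_linear_solution}), and the orthonormal eigenbasis comes from selfadjointness of $M_{N}$. You merely spell out more detail than the paper, identifying the eigenspace exactly as $\Psi_{\mathscr T}(\lambda)\,\nul\Theta_{\mathscr T}(\lambda)$, whereas the paper only needs the containment in $\ran\Psi_{\mathscr T}(\lambda)$.
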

\begin{proof}
  It follows from \eqref{eq:relation_phi_psi} that the dimension of
  the eigenspaces is at most $n$ since $C^{\lambda}$ is
  $n$-dimensional. The second assertion of the corollary is a
  consequence of $M_{N}$ being selfadjoint.
\end{proof}

Henceforth, the spectrum of $M_{N}$ is written as
$ \{\lambda_k\}_{k=1}^N$, where the eigenvalues are enumerated so that
the multiplicity of them is taken into account, and the corresponding
normalized eigenvectors are denoted by
$\varphi^{\lambda_{1}},\dots,\varphi^{\lambda_{N}}$. Also,
$C_{\mathscr T}^{\lambda_{k}}$ is the nonzero $n$-dimensional vector
determined by $\varphi^{\lambda_{k}}$ through
\eqref{eq:relation_phi_psi}.

Using the notation introduced in
\eqref{eq:notation-colums-composition}, one defines
\begin{equation}
\label{eq:spectral_data}
\Lambda:=\diag\{\lambda_1,\dots,\lambda_N\}\quad\text{and}\quad\Phi:=
\begin{pmatrix}
 \varphi^{\lambda_{1}}&\cdots&\varphi^{\lambda_{N}}
\end{pmatrix}\,.
\end{equation}
Thus
\begin{equation}
\label{eq:phi-normalization}
  \Phi^*M_{N}\Phi=\Lambda\,,\quad \text{and} \quad\Phi\Phi^*=\Phi^*\Phi=I\,.
\end{equation}

Let us also define
\begin{equation}
  \label{eq:phi0}
  \Phi_0:=\widetilde\Pi_{G_N\to G_n}\Phi\,.
\end{equation}
\begin{lemma}
  \label{lem:restricted-identity}
  Let
  $I_n:=\widetilde\Pi_{G_N\to
    G_n}I\upharpoonright_{l_{2}(G_{N})}$. The following decomposition
  takes place
  \begin{equation*}
    I_{n}=\mathscr{T}^*\sum_{k=1}^NC_{\mathscr{T}}^{\lambda_{k}}
    (C_{\mathscr{T}}^{\lambda_{k}})^*\mathscr{T}\,.
\end{equation*}
\end{lemma}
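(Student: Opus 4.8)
The plan is to compute the compression $\Phi_{0}\Phi_{0}^{*}$ of the spectral resolution in two different ways and then compare. First I would read off the columns of $\Phi_{0}$ from the results already established: applying $\widetilde\Pi_{G_{N}\to G_{n}}$ to \eqref{eq:relation_phi_psi} and using \eqref{eq:psi_tau} one gets
\begin{equation*}
  \widetilde\Pi_{G_{N}\to G_{n}}\varphi^{\lambda_{k}}
  =\bigl(\widetilde\Pi_{G_{N}\to G_{n}}\Psi_{\mathscr T}(\lambda_{k})\bigr)C_{\mathscr T}^{\lambda_{k}}
  =\mathscr T^{*}C_{\mathscr T}^{\lambda_{k}}
  \qquad(k=1,\dots,N)\,.
\end{equation*}
Writing $\pb{C}$ for the $n\times N$ matrix whose $k$-th column is $C_{\mathscr T}^{\lambda_{k}}$, this says $\Phi_{0}=\widetilde\Pi_{G_{N}\to G_{n}}\Phi=\mathscr T^{*}\pb{C}$, so that
\begin{equation*}
  \Phi_{0}\Phi_{0}^{*}=\mathscr T^{*}\pb{C}\pb{C}^{*}\mathscr T
  =\mathscr T^{*}\Bigl(\sum_{k=1}^{N}C_{\mathscr T}^{\lambda_{k}}(C_{\mathscr T}^{\lambda_{k}})^{*}\Bigr)\mathscr T\,.
\end{equation*}

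Next I would use orthonormality. By \eqref{eq:phi-normalization} one has $\Phi\Phi^{*}=I$ on $l_{2}(G_{N})$, whence
\begin{equation*}
  \Phi_{0}\Phi_{0}^{*}=\widetilde\Pi_{G_{N}\to G_{n}}\,\Phi\Phi^{*}\,\bigl(\widetilde\Pi_{G_{N}\to G_{n}}\bigr)^{*}
  =\widetilde\Pi_{G_{N}\to G_{n}}\bigl(\widetilde\Pi_{G_{N}\to G_{n}}\bigr)^{*}=I_{n}\,,
\end{equation*}
the last equality being the definition of $I_{n}$, since $(\widetilde\Pi_{G_{N}\to G_{n}})^{*}$ is exactly the canonical embedding of $l_{2}(G_{n})$ onto the first $n$ coordinates of $l_{2}(G_{N})$. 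Comparing the two expressions for $\Phi_{0}\Phi_{0}^{*}$ gives the asserted identity.

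I do not anticipate any genuine obstacle: the argument merely reassembles \eqref{eq:psi_tau}, \eqref{eq:relation_phi_psi} and \eqref{eq:phi-normalization}. The only thing to keep straight is the bookkeeping with the identity operators — distinguishing the $N\times N$ identity in $\Phi\Phi^{*}=I$ from the $n\times n$ identity $I_{n}$ — together with the elementary remark that $\widetilde\Pi_{G_{N}\to G_{n}}(\widetilde\Pi_{G_{N}\to G_{n}})^{*}$ is the identity on $l_{2}(G_{n})$, whereas the reverse composition $(\widetilde\Pi_{G_{N}\to G_{n}})^{*}\widetilde\Pi_{G_{N}\to G_{n}}$ is the orthogonal projection $\Pi_{G_{N}\to G_{n}}$.
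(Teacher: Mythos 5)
Your argument is correct and follows essentially the same route as the paper: both establish $\Phi_{0}\Phi_{0}^{*}=I_{n}$ from the unitarity of $\Phi$ and identify the truncated eigenvectors $\widetilde\Pi_{G_{N}\to G_{n}}\varphi^{\lambda_{k}}$ with $\mathscr T^{*}C_{\mathscr T}^{\lambda_{k}}$ via \eqref{eq:psi_tau} and \eqref{eq:relation_phi_psi}, then compare. Your closing remark distinguishing $\widetilde\Pi_{G_{N}\to G_{n}}(\widetilde\Pi_{G_{N}\to G_{n}})^{*}$ from the projection $\Pi_{G_{N}\to G_{n}}$ is a useful clarification of the paper's slightly loose definition of $I_{n}$, but the substance of the proof is the same.
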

\begin{proof}
  One verifies, on the basis of the unitary properties of $\Phi$, that
  \begin{equation*}
    \Phi_0\Phi_0^*=I_{n}\,.
  \end{equation*}
  Note, however that $\Phi_0^*\Phi_0\neq I_n$. If
  $\varphi_0^{\lambda_{k}}:=\widetilde\Pi_{G_N\to
    G_n}\varphi^{\lambda_{k}}=\Phi_0\delta_k$, then
\begin{equation}
   \label{eq:partial-unitarity}
   I_{n}=\Phi_0\Phi_0^*=\Phi_0\left(\sum_{k=1}^N\delta_k\delta_k^*\right)\Phi_0^*=\sum_{k=1}^N\Phi_0\delta_k\delta_k^*\Phi_0^*=\sum_{k=1}^N\varphi_0^{\lambda_{k}}(\varphi_0^{\lambda_{k}})^{*}\,.
\end{equation}
Now, it follows from \eqref{eq:psi_tau} and
\eqref{eq:relation_phi_psi} that
\begin{equation}
  \label{eq:other-varphi-0}
  \mathscr{T}^*C_{\mathscr{T}}^{\lambda_{k}}=\varphi_0^{\lambda_{k}}\,.
\end{equation}
Thus, combining \eqref{eq:partial-unitarity} and
\eqref{eq:other-varphi-0}, one has
\begin{equation*}
  I_{n}=\sum_{k=1}^N\varphi_0^{\lambda_{k}}(\varphi_0^{\lambda_{k}})^*=
  \sum_{k=1}^N\mathscr{T}^*C_{\mathscr{T}}^{\lambda_{k}}
  (C_{\mathscr{T}}^{\lambda_{k}})^*
  \mathscr{T}=\mathscr{T}^*\sum_{k=1}^NC_{\mathscr{T}}^{\lambda_{k}}
  (C_{\mathscr{T}}^{\lambda_{k}})^*\mathscr{T}\,.
\end{equation*}
\end{proof}
The next assertion is an immediate consequence of the previous result.
\begin{corollary}
  \label{cor:c-properties}
  The following property of the vectors
  $C^{\lambda_{k}}_{\mathscr T}$, $k=1,\dots,N$ takes place
  \begin{equation*}
    \begin{pmatrix}C^{\lambda_{1}}_{\mathscr T}&\dots &
                                                        C^{\lambda_{N}}_{\mathscr T}\end{pmatrix}^{*}\widetilde\delta_{j}\ne 0\,
                                                      \text{ for all }\, j=1,\dots, n\,,
  \end{equation*}
  where $\widetilde\delta_j=\widetilde\Pi_{\nats\to G_n}\delta_j$.
\end{corollary}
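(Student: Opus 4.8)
The plan is to read the statement off directly from the decomposition in Lemma~\ref{lem:restricted-identity}, once the sum $\sum_{k=1}^{N}C_{\mathscr T}^{\lambda_{k}}(C_{\mathscr T}^{\lambda_{k}})^{*}$ is recognized as a single matrix product. Set $\mathcal C:=\begin{pmatrix}C_{\mathscr T}^{\lambda_{1}}&\dots&C_{\mathscr T}^{\lambda_{N}}\end{pmatrix}$, an $n\times N$ matrix; then $\mathcal C\mathcal C^{*}=\sum_{k=1}^{N}C_{\mathscr T}^{\lambda_{k}}(C_{\mathscr T}^{\lambda_{k}})^{*}$, and the matrix appearing in the statement of the corollary is precisely $\mathcal C^{*}$, so the claim becomes $\mathcal C^{*}\widetilde\delta_{j}\neq 0$ for $j=1,\dots,n$.

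First I would rewrite Lemma~\ref{lem:restricted-identity} as the identity $\mathscr T^{*}\mathcal C\mathcal C^{*}\mathscr T=I$, the $n\times n$ identity matrix (this is the sense of $I_{n}$ used in the proof of that lemma). Since $\mathscr T$ is upper triangular and invertible, this yields $\mathcal C\mathcal C^{*}=(\mathscr T^{*})^{-1}\mathscr T^{-1}=(\mathscr T\mathscr T^{*})^{-1}$, which is an invertible $n\times n$ matrix. Hence $\mathcal C^{*}\colon l_{2}(G_{n})\to l_{2}(G_{N})$ is injective: if $\mathcal C^{*}x=0$, then $\mathcal C\mathcal C^{*}x=0$, and therefore $x=0$.

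Finally I would note that, for each $j\in\{1,\dots,n\}$, the vector $\widetilde\delta_{j}=\widetilde\Pi_{\nats\to G_{n}}\delta_{j}$ is the $j$-th element of the canonical basis of $l_{2}(G_{n})$ and, in particular, nonzero. The injectivity of $\mathcal C^{*}$ then gives $\mathcal C^{*}\widetilde\delta_{j}\neq 0$ for every $j=1,\dots,n$, which is exactly the asserted property.

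I do not expect a genuine obstacle; as stated, the corollary is indeed an immediate consequence of Lemma~\ref{lem:restricted-identity}. The only point requiring a little care is the dimensional bookkeeping — distinguishing the objects living over $G_{n}$ (namely $\widetilde\delta_{j}$, $\mathscr T$, and the columns of $\mathcal C$) from $\mathcal C^{*}\widetilde\delta_{j}$, which lives over $G_{N}$ — together with the observation that it is precisely the invertibility of the boundary matrix $\mathscr T$ (already exploited in Lemmas~\ref{lem:ind_linear_solution} and~\ref{lem:restricted-identity}) that promotes the partial unitarity relation of Lemma~\ref{lem:restricted-identity} to full rank of $\mathcal C$, hence to the injectivity of $\mathcal C^{*}$ on which the conclusion rests.
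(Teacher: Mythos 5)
Your proposal is correct and rests on exactly the same ingredients as the paper's proof: Lemma~\ref{lem:restricted-identity} together with the invertibility of $\mathscr T$, which forces $\sum_{k}C_{\mathscr T}^{\lambda_{k}}(C_{\mathscr T}^{\lambda_{k}})^{*}=\mathcal C\mathcal C^{*}$ to be invertible. The only cosmetic difference is that you argue directly (invertibility of $\mathcal C\mathcal C^{*}$ implies injectivity of $\mathcal C^{*}$) whereas the paper argues by contradiction (a vanishing $j_{0}$-th entry of every $C_{\mathscr T}^{\lambda_{k}}$ would give the sum a zero row, contradicting the lemma); the substance is identical.
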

\begin{proof}
  If there is $j_{0}\in G_{n}$ such that
  \begin{equation*}
  \begin{pmatrix}C_{\mathscr T}^{\lambda_{1}}&\dots &
                                      C_{\mathscr T}^{\lambda_{N}}\end{pmatrix}^{*}\widetilde\delta_{j_{0}}= 0,
  \end{equation*}
  then the matrix
  $C_{\mathscr{T}}^{\lambda_{k}} (C_{\mathscr{T}}^{\lambda_{k}})^*$
  would have a zero row (the $j_{0}$-th one) for all $k\in G_{N}$,
  which contradicts Lemma~\ref{lem:restricted-identity} in view of the
  properties of $\mathscr T$.
\end{proof}
\begin{definition}
  \label{def:spectral-function}
  Define the matrix-valued function
  \begin{equation}
    \label{eq:spec-function}
    \sigma_{N}^{\mathscr
      T}(t):=\sum_{\lambda_k<t}C_{\mathscr{T}}^{\lambda_{k}}
    (C_{\mathscr{T}}^{\lambda_{k}})^*\,,\quad t\in\reals\,,
     \end{equation}
     and the space of $n$-dimensional vector valued functions
     $L_{2}(\reals,\sigma_{N}^{\mathscr T})$ with inner product
\begin{equation}
  \label{eq:inner-product-finite}
  \inner{\pb f}{\pb g}_{L_{2}(\reals,\sigma_{N}^{\mathscr T})}:=
  \int_{\reals}\inner{\pb f(t)}{d\sigma_{N}^{\mathscr T}(t)\pb g(t)}_{l_{2}(G_{n})}\,.
\end{equation}
\end{definition}
Note that $\sigma_{N}^{\mathscr T}$ has $N$ points of growth so
$L_{2}(\reals,\sigma_{N}^{\mathscr T})$ is an $N$-dimensional space
and the inner product can be expressed as follows
\begin{equation}
  \label{eq:inner-product-other}
  \inner{\pb f}{\pb g}_{L_{2}(\reals,\sigma_{N}^{\mathscr T})}=
  \sum_{k=1}^{N}\pb f(\lambda_{k})^{*}C_{\mathscr{T}}^{\lambda_{k}}
  (C_{\mathscr{T}}^{\lambda_{k}})^*\pb g(\lambda_{k})\,.
\end{equation}
   \begin{definition}
     \label{def:p_q_poly}
     Define the $n$-dimensional vector polynomials:
     \begin{align}
       \label{eq:p_q_poly}
       \pb p_{k}(z)&:=\Psi_{\mathscr T}(z)^{*}\delta_k\,, \quad
                     k\in G_{N}\\
       \label{eq:q_q_poly}
 \pb q_{j}(z)&:=\Theta_{\mathscr T}(z)^{*}\delta_j\,, \quad j\in G_{n}\,,
     \end{align}
     where $\Psi_{\mathscr T}(z)$ and $\Theta_{\mathscr T}(z)$ are
     given in \eqref{eq:notation-colums-composition} and
     \eqref{eq:Q_matrix}, respectively.
   \end{definition}
   \begin{remark}
     It follows from Definition~\ref{def:p_q_poly} that
     \begin{equation}
       \label{eq:q_relation_p}
       \begin{pmatrix}
  \pb q_1&\dots&\pb q_n
\end{pmatrix}^*
   =\widetilde\Pi_{G_{N}\to K}(M_{N}-zI)
\begin{pmatrix}
  \pb p_1&\dots&\pb p_N
\end{pmatrix}^*\,.
\end{equation}
and
    \begin{equation}
       \label{eq:p_relation}
       \begin{pmatrix}
  \pb p_1&\dots&\pb p_N
\end{pmatrix}^*
   (M_{N}-\overline{z}I)\widetilde\Pi_{G_{N}\to K^{\perp}}^{*}=0
\end{equation}
\end{remark}
For Example~\ref{ex:simple-difference-equation}, the expression
\eqref{eq:q_relation_p} gives the following equalities:
\begin{align*}
  %\label{eq:polynomials_q}
  \pb{q}_1(z)&= (m_{33}-z)\pb{p}_3+m_{31}\pb{p}_1+m_{32}\pb{p}_2+m_{34}\pb{p}_4+m_{35}\pb{p}_5+m_{36}\pb{p}_6\\
  \pb{q}_2(z)&=(m_{55}-z)\pb{p}_5+m_{52}\pb{p}_2+m_{53}\pb{p}_3+m_{54}\pb{p}_4+m_{56}\pb{p}_6\\
  \pb{q}_3(z)&=(m_{77}-z) \pb{p}_7+m_{76}\pb{p}_6
\end{align*}
and \eqref{eq:p_relation} can be solved recursively as follows
\begin{align*}
  %\label{eq:example-system-poly1}
  \pb{p}_1(z)&=t_{11}\widetilde\delta_1\\
  \pb{p}_2(z)&=t_{12}\widetilde\delta_1+t_{22}\widetilde\delta_2\\
  \pb{p}_3(z)&=t_{13}\widetilde\delta_1+t_{23}\widetilde\delta_2+t_{33}\widetilde\delta_3\\
  \pb{p}_4(z)&=\pb{m}_{14}^{-1}\left[(z-m_{11})\pb{p}_1(z)-m_{12}\pb{p}_2(z)-m_{13}\pb{p}_3(z)\right]\\
             &\vdots\\
  \pb{p}_7(z)&=\pb{m}_{67}^{-1}\left[(z-m_{66})\pb{p}_6(z)-m_{62}\pb{p}_2(\zeta)-m_{63}\pb{p}_3(z)-m_{65}\pb{p}_5(z)\right]\,,
\end{align*}
where $\widetilde{\delta}_{j}$ is given in
Corollary~\ref{cor:c-properties}.
\begin{proposition}
  \label{prop:ortonormal-p-L2-finite}
The vector polynomials
  $\{\boldsymbol{p}_k(z)\}_{k=1}^{N}$, defined by
  (\ref{eq:p_q_poly}), satisfy
\begin{equation*}
  \inner{\boldsymbol{p}_j}{\boldsymbol{p}_k}
_{L_2(\mathbb{R},\sigma_N^{\mathscr{T}})}
=\delta_{jk}\quad\text{for}\ j,k\in G_{N},
\end{equation*}
where $\delta_{jk}$ is the Kronecker symbol.
\end{proposition}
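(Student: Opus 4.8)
The plan is to evaluate $\inner{\pb p_j}{\pb p_k}_{L_2(\reals,\sigma_N^{\mathscr T})}$ straight from its finite-sum form \eqref{eq:inner-product-other} and to recognise the outcome as an entry of $\Phi\Phi^*$, which is the identity by \eqref{eq:phi-normalization}. First I would record that, by Definition~\ref{def:p_q_poly}, $\pb p_k(z)=\Psi_{\mathscr T}(z)^*\delta_k$, so that $\pb p_k(\lambda_m)^*=\delta_k^*\Psi_{\mathscr T}(\lambda_m)$ for every $m\in G_N$. Combining this with \eqref{eq:relation_phi_psi}, i.e.\ $\varphi^{\lambda_m}=\Psi_{\mathscr T}(\lambda_m)C_{\mathscr T}^{\lambda_m}$, yields the two identities
\begin{equation*}
  \pb p_j(\lambda_m)^*C_{\mathscr T}^{\lambda_m}=\delta_j^*\varphi^{\lambda_m},\qquad
  (C_{\mathscr T}^{\lambda_m})^*\pb p_k(\lambda_m)=(\varphi^{\lambda_m})^*\delta_k .
\end{equation*}

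Next I would substitute these into \eqref{eq:inner-product-other}, using the definition of $\Phi$ in \eqref{eq:spectral_data} and the outer-product expansion $\Phi\Phi^*=\sum_{m=1}^N\varphi^{\lambda_m}(\varphi^{\lambda_m})^*$:
\begin{equation*}
  \inner{\pb p_j}{\pb p_k}_{L_2(\reals,\sigma_N^{\mathscr T})}
  =\sum_{m=1}^N\bigl(\delta_j^*\varphi^{\lambda_m}\bigr)\bigl((\varphi^{\lambda_m})^*\delta_k\bigr)
  =\delta_j^*\Bigl(\sum_{m=1}^N\varphi^{\lambda_m}(\varphi^{\lambda_m})^*\Bigr)\delta_k
  =\delta_j^*\,\Phi\Phi^*\,\delta_k .
\end{equation*}
Finally, \eqref{eq:phi-normalization} gives $\Phi\Phi^*=I$, so the right-hand side equals $\delta_j^*\delta_k=\delta_{jk}$, which is the assertion.

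I do not anticipate any serious obstacle: the statement is, in essence, the orthonormality of the eigenvector matrix $\Phi$ (Corollary~\ref{cor:eigenspaces}) transported to $L_2(\reals,\sigma_N^{\mathscr T})$ through the correspondence $\delta_k\mapsto\pb p_k\mapsto\varphi^{\lambda_k}$. The only point requiring care is the bookkeeping of adjoints, namely making sure the sum collapses to an entry of $\Phi\Phi^*$ (rows and columns indexed by $G_N$, the summation running over the eigenvalue label $m$) and invoking the appropriate half of \eqref{eq:phi-normalization}; it is also worth noting, via Lemma~\ref{lem:ind_linear_solution} and the invertibility of $\mathscr T$, that the vectors $\varphi^{\lambda_m}$ and $C_{\mathscr T}^{\lambda_m}$ delivered by \eqref{eq:relation_phi_psi} are precisely the normalized eigenvectors and associated coefficient vectors fixed before Definition~\ref{def:spectral-function}.
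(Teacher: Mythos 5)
Your proposal is correct and follows essentially the same route as the paper's own proof: both evaluate the inner product via the finite sum \eqref{eq:inner-product-other}, rewrite $\pb p_j(\lambda_m)^*C_{\mathscr T}^{\lambda_m}$ as $\delta_j^*\varphi^{\lambda_m}$ using \eqref{eq:p_q_poly} and \eqref{eq:relation_phi_psi}, and collapse the sum to $\delta_j^*\Phi\Phi^*\delta_k=\delta_{jk}$ by the unitarity in \eqref{eq:phi-normalization}. No discrepancies to report.
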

\begin{proof}
  Using \eqref{eq:inner-product-other} and \eqref{eq:p_q_poly}, one
  obtains
  \begin{align*}
    \inner{\pb{p}_i}{\pb{p}_j}_{L_2(\reals,\sigma_N^{\mathscr{T}})}&=
                                                                     \sum_{k=1}^N(\pb{p}_i(\lambda_k))^*C_{\mathscr T}^{\lambda_{k}}(C_{\mathscr T}^{\lambda_{k}})^*\pb{p}_j(\lambda_k)\\
                                                                   &=\sum_{k=1}^N\delta_i^*\Psi_{\mathscr{T}}(\lambda_k)C_{\mathscr T}^{\lambda_{k}}(C_{\mathscr T}^{\lambda_{k}})^*\Psi_{\mathscr{T}}(\lambda_k)\\
                                                                   &=\sum_{k=1}^N\delta_i^*\varphi^{\lambda_{k}}(\varphi^{\lambda_{k}})^*\delta_j=\delta_i^*\delta_j=\delta_{ij}\,,
  \end{align*}
  where in the third equality, one recurs to
  \eqref{eq:relation_phi_psi}.
\end{proof}
\begin{corollary}
  \label{cor:isometry-finite}
  The function $\Psi_{\mathscr T}(t)$ (see \eqref{eq:psi_tau}) with
  $t\in\reals$ gives rise to a map $\Upsilon_{\mathscr T}$ by means of
  the expression
 \begin{equation*}
   l_{2}(G_{N})\ni u \stackrel{\Upsilon_{\mathscr T}}{\longmapsto}  \Psi^{*}(t)u\in L_2(\mathbb{R},\sigma_N^{\mathscr{T}})\,.
  \end{equation*}
This map is an isometry.
\end{corollary}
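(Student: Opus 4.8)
The plan is to recognize $\Upsilon_{\mathscr T}$ as the linear extension of the map sending the canonical basis of $l_2(G_N)$ to the orthonormal system $\{\pb p_k\}_{k=1}^N$, and then invoke Proposition~\ref{prop:ortonormal-p-L2-finite}. First I would check that $\Upsilon_{\mathscr T}$ is well defined and linear: for each fixed $t\in\reals$, $u\mapsto\Psi_{\mathscr T}(t)^{*}u$ is linear, and since $\sigma_{N}^{\mathscr T}$ is supported on the finite set $\{\lambda_{k}\}_{k=1}^{N}$ while $\Psi_{\mathscr T}$ has polynomial entries, the function $t\mapsto\Psi_{\mathscr T}(t)^{*}u$ is trivially $\sigma_{N}^{\mathscr T}$-measurable and square integrable, hence an element of $L_{2}(\reals,\sigma_{N}^{\mathscr T})$.

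Next, by \eqref{eq:p_q_poly} one has $\Upsilon_{\mathscr T}\delta_{k}=\Psi_{\mathscr T}(t)^{*}\delta_{k}=\pb p_{k}$ for every $k\in G_{N}$. Writing an arbitrary $u\in l_{2}(G_{N})$ as $u=\sum_{k=1}^{N}u_{k}\delta_{k}$ and using linearity of $\Upsilon_{\mathscr T}$ together with Proposition~\ref{prop:ortonormal-p-L2-finite}, I would compute
\begin{equation*}
  \inner{\Upsilon_{\mathscr T}u}{\Upsilon_{\mathscr T}u}_{L_{2}(\reals,\sigma_{N}^{\mathscr T})}
  =\sum_{j,k=1}^{N}\overline{u_{j}}\,u_{k}\inner{\pb p_{j}}{\pb p_{k}}_{L_{2}(\reals,\sigma_{N}^{\mathscr T})}
  =\sum_{k=1}^{N}\abs{u_{k}}^{2}
  =\norm{u}_{l_{2}(G_{N})}^{2}\,,
\end{equation*}
which is precisely the assertion that $\Upsilon_{\mathscr T}$ is an isometry.

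There is essentially no obstacle here; the entire content is bookkeeping around the already established orthonormality of the $\pb p_{k}$. The one point worth stating carefully is the well-definedness discussed above, so that $\Upsilon_{\mathscr T}$ lands in $L_{2}(\reals,\sigma_{N}^{\mathscr T})$. I would also add the remark that, since $L_{2}(\reals,\sigma_{N}^{\mathscr T})$ is $N$-dimensional and $\{\pb p_{k}\}_{k=1}^{N}=\{\Upsilon_{\mathscr T}\delta_{k}\}_{k=1}^{N}$ is then an orthonormal basis of it, the isometry $\Upsilon_{\mathscr T}$ is in fact surjective, hence unitary; this stronger fact, while not required for the stated corollary, is what will be used in the subsequent spectral analysis.
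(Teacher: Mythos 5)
Your proof is correct and follows the same route as the paper: the paper's own proof is the one-line observation that, by \eqref{eq:p_q_poly}, $\Upsilon_{\mathscr T}$ sends the canonical basis $\{\delta_k\}_{k=1}^{N}$ to the orthonormal basis $\{\pb p_k\}_{k=1}^{N}$ established in Proposition~\ref{prop:ortonormal-p-L2-finite}. Your added remarks on well-definedness and surjectivity are harmless elaborations of the same argument.
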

\begin{proof}
  Indeed, by \eqref{eq:p_q_poly}, $\Upsilon_{\mathscr T}$ maps the
  canonical basis into the orthonormal basis $\{\pb p_{k}\}_{k=1}^{N}$
  in $L_2(\mathbb{R},\sigma_N^{\mathscr{T}})$.
\end{proof}
\begin{proposition}
  \label{prop:multiplication-operator}
  The isometry $\Upsilon_{\mathscr T}$ transforms the operator
  $\mathcal M_{N}$ into the operator of multiplication by the
  independent variable in $L_2(\mathbb{R},\sigma_N^{\mathscr{T}})$.
\end{proposition}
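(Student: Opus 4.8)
The plan is to show that, for every $u\in l_{2}(G_{N})$, the two vector functions $t\mapsto \Psi_{\mathscr T}^{*}(t)(\mathcal M_{N}u)$ and $t\mapsto t\,\Psi_{\mathscr T}^{*}(t)u$ represent the same element of $L_{2}(\reals,\sigma_{N}^{\mathscr T})$; that is, $\Upsilon_{\mathscr T}(\mathcal M_{N}u)$ equals the product of $\Upsilon_{\mathscr T}u$ by the independent variable. By Proposition~\ref{prop:ortonormal-p-L2-finite} and Corollary~\ref{cor:isometry-finite} the map $\Upsilon_{\mathscr T}$ sends the canonical basis of the $N$-dimensional space $l_{2}(G_{N})$ onto the orthonormal system $\{\pb p_{k}\}_{k=1}^{N}$, which is a basis of the $N$-dimensional space $L_{2}(\reals,\sigma_{N}^{\mathscr T})$; hence $\Upsilon_{\mathscr T}$ is unitary and the equality above gives $\Upsilon_{\mathscr T}\mathcal M_{N}\Upsilon_{\mathscr T}^{-1}$ equal to multiplication by $t$, which is a bounded everywhere-defined operator because $\sigma_{N}^{\mathscr T}$ has finite support.

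The first step is to record a criterion for equality in $L_{2}(\reals,\sigma_{N}^{\mathscr T})$. From \eqref{eq:inner-product-other} and the fact that each $C_{\mathscr T}^{\lambda_{k}}(C_{\mathscr T}^{\lambda_{k}})^{*}$ has rank one, one gets $\inner{\pb f}{\pb g}_{L_{2}(\reals,\sigma_{N}^{\mathscr T})}=\sum_{k=1}^{N}\overline{(C_{\mathscr T}^{\lambda_{k}})^{*}\pb f(\lambda_{k})}\,(C_{\mathscr T}^{\lambda_{k}})^{*}\pb g(\lambda_{k})$, so that $\|\pb f\|^{2}=\sum_{k=1}^{N}\bigl|(C_{\mathscr T}^{\lambda_{k}})^{*}\pb f(\lambda_{k})\bigr|^{2}$. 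Consequently $\pb f=\pb g$ in $L_{2}(\reals,\sigma_{N}^{\mathscr T})$ exactly when $(C_{\mathscr T}^{\lambda_{k}})^{*}\pb f(\lambda_{k})=(C_{\mathscr T}^{\lambda_{k}})^{*}\pb g(\lambda_{k})$ for all $k\in G_{N}$.

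The second step is to verify this identity for $\pb f=\Psi_{\mathscr T}^{*}(\cdot)(\mathcal M_{N}u)$ and $\pb g(t)=t\,\Psi_{\mathscr T}^{*}(t)u$. Fix $k\in G_{N}$. By \eqref{eq:relation_phi_psi} one has $(C_{\mathscr T}^{\lambda_{k}})^{*}\Psi_{\mathscr T}^{*}(\lambda_{k})=\bigl(\Psi_{\mathscr T}(\lambda_{k})C_{\mathscr T}^{\lambda_{k}}\bigr)^{*}=(\varphi^{\lambda_{k}})^{*}$, whence
\begin{align*}
(C_{\mathscr T}^{\lambda_{k}})^{*}\bigl(\Psi_{\mathscr T}^{*}(\lambda_{k})\mathcal M_{N}u\bigr)
&=(\varphi^{\lambda_{k}})^{*}\mathcal M_{N}u
=(\mathcal M_{N}\varphi^{\lambda_{k}})^{*}u\\
&=\lambda_{k}(\varphi^{\lambda_{k}})^{*}u
=\lambda_{k}(C_{\mathscr T}^{\lambda_{k}})^{*}\Psi_{\mathscr T}^{*}(\lambda_{k})u\,,
\end{align*}
where the second equality uses that $\mathcal M_{N}$ is selfadjoint and the third that $\varphi^{\lambda_{k}}$ is the eigenvector of $\mathcal M_{N}$ for the eigenvalue $\lambda_{k}$ (Proposition~\ref{prop:spectrum}). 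By the criterion of the previous step this shows $\Upsilon_{\mathscr T}(\mathcal M_{N}u)=t\,\Upsilon_{\mathscr T}(u)$ in $L_{2}(\reals,\sigma_{N}^{\mathscr T})$, proving the proposition. The computation is essentially bookkeeping; the only point that needs care is the identification of elements of $L_{2}(\reals,\sigma_{N}^{\mathscr T})$ — namely that equality of vector functions there is equivalent to equality of their pairings with $C_{\mathscr T}^{\lambda_{k}}$ at the points of growth — so I do not expect a genuine obstacle.
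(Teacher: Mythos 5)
Your proof is correct and rests on the same ingredients as the paper's: the eigen-decomposition of $\mathcal M_{N}$ and the identity $\varphi^{\lambda_{k}}=\Psi_{\mathscr T}(\lambda_{k})C_{\mathscr T}^{\lambda_{k}}$ from \eqref{eq:relation_phi_psi}, combined with the explicit form \eqref{eq:inner-product-other} of the inner product. The only presentational difference is that the paper verifies the matrix elements $\inner{\delta_i}{\mathcal M_{N}\delta_j}=\inner{\pb p_{i}(t)}{t\pb p_{j}(t)}_{L_2(\reals,\sigma_N^{\mathscr{T}})}$ directly, whereas you check the intertwining relation on arbitrary vectors via a pointwise criterion for equality in $L_{2}(\reals,\sigma_{N}^{\mathscr T})$ — both are valid and essentially equivalent in this finite-dimensional setting.
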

\begin{proof} Taking into account that
  $\mathcal
  M_{N}=\sum_{k=1}^{N}\lambda_k\varphi^{\lambda_{k}}(\varphi^{\lambda_{k}})^*$
  together with \eqref{eq:relation_phi_psi} and
  \eqref{eq:inner-product-other}, one obtains
\begin{align*}
  \inner{\delta_i}{\mathcal M_{N}\delta_j}_{l_{2}(G_{N})}&=\delta_{i}^{*}\sum_{k=1}^{N}\lambda_k\varphi^{\lambda_{k}}(\varphi^{\lambda_{k}})^*\delta_j\\
                                          &=\sum_{k=1}^{N}\lambda_k\delta_i^*\Psi C_{\mathscr T}^{\lambda_{k}}(C_{\mathscr T}^{\lambda_{k}})^*\Psi^*\delta_j\\
                                          &=\sum_{k=1}^{N}\lambda_k(\pb p_{i}(\lambda_k))^* C_{\mathscr T}^{\lambda_{k}}(C_{\mathscr T}^{\lambda_{k}})^*\pb p_{j}(\lambda_k)\\
                                          &=\inner{\pb p_{i}(t)}{t\pb p_{j}(t)}_{L_2(\reals,\sigma_N^{\mathscr{T}})}\,.
\end{align*}
\end{proof}
\begin{remark}
  \label{rem:spectral-theorem}
  In view of the spectral theorem,
  Proposition~\ref{prop:multiplication-operator} is somehow
  straightforward.  However, the concrete realization for the boundary
  matrix $\mathscr{T}$ together with the construction algorithm are
  relevant for the further discussion and elaborations.
\end{remark}
\begin{proposition}
  \label{prop:zeros-finite}
  The vector polynomials $\{\pb q_{j}\}_{j=1}^{n}$, defined in
  \eqref{eq:q_q_poly}, have zero norm in
  $L_2(\reals,\sigma_N^{\mathscr{T}})$.
\end{proposition}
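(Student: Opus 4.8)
The plan is to evaluate the norm $\inner{\pb q_j}{\pb q_j}_{L_2(\reals,\sigma_N^{\mathscr T})}$ directly using the expression \eqref{eq:inner-product-other} and the fact, established in Proposition~\ref{prop:spectrum} (see \eqref{eq:formal-difference_p_dif}), that $\Theta_{\mathscr T}(\lambda_k)C^{\lambda_k}_{\mathscr T}=0$ for every $k\in G_N$. First I would write, using \eqref{eq:q_q_poly}, that $\pb q_j(\lambda_k)=\Theta_{\mathscr T}(\lambda_k)^*\delta_j$, so that
\begin{equation*}
  (C^{\lambda_k}_{\mathscr T})^*\pb q_j(\lambda_k)
  =(C^{\lambda_k}_{\mathscr T})^*\Theta_{\mathscr T}(\lambda_k)^*\delta_j
  =\bigl(\Theta_{\mathscr T}(\lambda_k)C^{\lambda_k}_{\mathscr T}\bigr)^*\delta_j=0
\end{equation*}
for every $k$ and every $j\in G_n$. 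Plugging this into the finite sum \eqref{eq:inner-product-other} with $\pb f=\pb g=\pb q_j$ makes every summand vanish, hence $\inner{\pb q_j}{\pb q_j}_{L_2(\reals,\sigma_N^{\mathscr T})}=0$. Since the inner product is (at worst) positive semidefinite — the measure $\sigma_N^{\mathscr T}$ built from the rank-one pieces $C^{\lambda_k}_{\mathscr T}(C^{\lambda_k}_{\mathscr T})^*$ is nonnegative — zero norm is exactly the statement to be proved; no further positivity argument is even needed once the sum is seen to be identically zero.

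The only genuinely delicate point is the indexing: the sum in \eqref{eq:inner-product-other} runs over $k=1,\dots,N$ with multiplicities taken into account, and for each such $k$ the vector $C^{\lambda_k}_{\mathscr T}$ is precisely the one supplied by Proposition~\ref{prop:spectrum} and \eqref{eq:relation_phi_psi} for the normalized eigenvector $\varphi^{\lambda_k}$. Because that vector solves the homogeneous system $\Theta_{\mathscr T}(\lambda_k)C^{\lambda_k}_{\mathscr T}=0$ by construction, the cancellation above is valid term by term regardless of multiplicity. I would state this explicitly so the reader sees that the argument does not require $\lambda_k$ to be simple.

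I do not anticipate a real obstacle here; the proposition is essentially a one-line consequence of the defining property of the $C^{\lambda_k}_{\mathscr T}$. The writing effort is only in assembling the right references — \eqref{eq:inner-product-other} for the form of the inner product, \eqref{eq:q_q_poly} for $\pb q_j$, and \eqref{eq:formal-difference_p_dif} for $\Theta_{\mathscr T}(\lambda_k)C^{\lambda_k}_{\mathscr T}=0$ — and noting that this places $\pb q_j$ in the isotropic cone of the (degenerate, since $L_2(\reals,\sigma_N^{\mathscr T})$ identifies functions agreeing on the $\lambda_k$) pairing, i.e. $\pb q_j$ represents the zero element of $L_2(\reals,\sigma_N^{\mathscr T})$.
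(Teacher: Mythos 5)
Your proposal is correct and follows essentially the same route as the paper: expand the norm via \eqref{eq:inner-product-other}, substitute $\pb q_j(\lambda_k)=\Theta_{\mathscr T}(\lambda_k)^*\delta_j$ from \eqref{eq:q_q_poly}, and kill each summand using $\Theta_{\mathscr T}(\lambda_k)C_{\mathscr T}^{\lambda_k}=0$, which the paper justifies by $\Psi_{\mathscr T}(\lambda_k)C_{\mathscr T}^{\lambda_k}=\varphi^{\lambda_k}$ exactly as you do via \eqref{eq:formal-difference_p_dif}. No substantive difference.
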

\begin{proof}
  By \eqref{eq:inner-product-other} and \eqref{eq:q_q_poly}, one has
\begin{align*}
  \inner{\boldsymbol{q}_{j}}{\boldsymbol{q}_{j}}_{L_2(\mathbb{R},\sigma_N^{\mathscr{T}})}
&=\sum_{k=1}^{N}\pb q_{j}(\lambda_{k})^{*}C_{\mathscr T}^{\lambda_{k}}(C_{\mathscr
  T}^{\lambda_{k}})^{*}\pb q_{j}(\lambda_{k})\\
  &= \sum_{k=1}^{N}\delta_{j}^{*}\Theta_{\mathscr T}(\lambda_{k})C_{\mathscr T}^{\lambda_{k}}(C_{\mathscr
  T}^{\lambda_{k}})^{*}\Theta_{\mathscr T}(\lambda_{k})^{*}\delta_{j}\,.
\end{align*}
The assertion then follows after noticing that
\begin{equation*}
  \Theta_{\mathscr T}(\lambda_{k})C_{\mathscr T}^{\lambda_{k}}=\widetilde\Pi_{G_{n}\to K}(M_{N}-\lambda_{k}I) \Psi_{\mathscr T}(\lambda_{k})C_{\mathscr T}^{\lambda_{k}}=0\,,
\end{equation*}
due to the fact that $\Psi_{\mathscr T}(\lambda_{k})C_{\mathscr T}^{\lambda_{k}}=\varphi^{\lambda_{k}}$.
\end{proof}
\begin{theorem}
  \label{thm:spectral-measure-finite}
  The function $\sigma_N^{\mathscr{T}}(t)$ given in
  Definition~\ref{def:spectral-function} is a spectral function of
  $\mathcal M_{N}$, that is a complete measure\footnote{Completeness
    of a spectral measure is given in \cite[Chp.\,5
    Sec.\,1.1]{MR1192782}. Here it means that
    $\mathscr T^{*}\sum_{k=1}^{N}C_{\mathscr
      T}^{\lambda_{k}}(C_{\mathscr T}^{\lambda_{k}})^{*}\mathscr T=I$,
    see \cite[Lem.\,2.3]{MR3711273}.}  with the following properties:
\begin{enumerate}[(I)]
\item It is a nondecreasing monotone step function which is continuous
  from the left.
\item Each jump is a matrix of rank not greater than $n$.
\item The sum of the ranks of all jumps equals $N$.
\end{enumerate}
\end{theorem}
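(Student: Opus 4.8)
The plan is to run through the definition of a spectral function together with the three structural properties, each of which is a short consequence of the machinery already set up. Completeness in the sense of the footnote is precisely Lemma~\ref{lem:restricted-identity}: the identity $\mathscr T^{*}\sum_{k=1}^{N}C_{\mathscr T}^{\lambda_{k}}(C_{\mathscr T}^{\lambda_{k}})^{*}\mathscr T=I_{n}$ is exactly the required normalization $\int_{\reals}d\sigma_{N}^{\mathscr T}=I$ conjugated by $\mathscr T$. That $\sigma_{N}^{\mathscr T}$ is a spectral function of $\mathcal M_{N}$ then follows from Corollary~\ref{cor:isometry-finite} and Proposition~\ref{prop:multiplication-operator}: the map $\Upsilon_{\mathscr T}$ is an isometry of $l_{2}(G_{N})$ into $L_{2}(\reals,\sigma_{N}^{\mathscr T})$ carrying $\mathcal M_{N}$ to multiplication by the independent variable, and it is onto because, by Proposition~\ref{prop:ortonormal-p-L2-finite}, it sends the canonical orthonormal basis $\{\delta_{k}\}_{k=1}^{N}$ onto the orthonormal family $\{\pb p_{k}\}_{k=1}^{N}$, which is therefore an orthonormal basis of $L_{2}(\reals,\sigma_{N}^{\mathscr T})$.

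For the structural properties I would argue directly from \eqref{eq:spec-function}. Property (I): each summand $C_{\mathscr T}^{\lambda_{k}}(C_{\mathscr T}^{\lambda_{k}})^{*}$ is positive semidefinite, so $t\mapsto\sigma_{N}^{\mathscr T}(t)$ is monotone nondecreasing (its increments are positive semidefinite) and piecewise constant with finitely many jumps; the strict inequality $\lambda_{k}<t$ in the summation index makes it left-continuous, the term attached to $\lambda_{k}$ being added only for arguments strictly exceeding $\lambda_{k}$. Properties (II) and (III): the jump at a spectral point $\lambda$ is $\Delta(\lambda):=\sigma_{N}^{\mathscr T}(\lambda+0)-\sigma_{N}^{\mathscr T}(\lambda)=\sum_{\lambda_{k}=\lambda}C_{\mathscr T}^{\lambda_{k}}(C_{\mathscr T}^{\lambda_{k}})^{*}$, whence $\rank\Delta(\lambda)=\dim\Span\{C_{\mathscr T}^{\lambda_{k}}:\lambda_{k}=\lambda\}$. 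Since $\varphi^{\lambda_{k}}=\Psi_{\mathscr T}(\lambda)C_{\mathscr T}^{\lambda_{k}}$ and $\nul\Psi_{\mathscr T}(\lambda)=\{0\}$ by Lemma~\ref{lem:ind_linear_solution}, and the eigenvectors $\varphi^{\lambda_{k}}$ with $\lambda_{k}=\lambda$ are orthonormal (they are columns of $\Phi$), the vectors $\{C_{\mathscr T}^{\lambda_{k}}:\lambda_{k}=\lambda\}$ are linearly independent, so $\rank\Delta(\lambda)$ equals the multiplicity of $\lambda$. By Corollary~\ref{cor:eigenspaces} this multiplicity is at most $n$, which is (II), and summing over the distinct spectral points gives $\sum_{\lambda}\rank\Delta(\lambda)=N$, which is (III).

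I expect no genuine obstacle; the only delicate point is the bookkeeping in (III), namely that "the sum of the ranks of all jumps" must be read with a rank-$m$ jump accounting for $m$ of the $N$ eigenvalues, and this is exactly what the linear independence of the $C_{\mathscr T}^{\lambda_{k}}$ guarantees. As a cross-check, once (II)--(III) are in hand one recovers $\dim L_{2}(\reals,\sigma_{N}^{\mathscr T})=\sum_{\lambda}\rank\Delta(\lambda)=N$, consistently with the surjectivity of $\Upsilon_{\mathscr T}$ invoked in the first step.
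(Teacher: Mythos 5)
Your proof is correct and follows essentially the same route as the paper: spectral-function status via Corollary~\ref{cor:isometry-finite} and Proposition~\ref{prop:multiplication-operator}, property (I) from positive semidefiniteness of the rank-one summands, and (II)--(III) from Corollary~\ref{cor:eigenspaces}. The only (welcome) difference is that you make explicit two points the paper leaves implicit, namely that completeness is exactly the identity of Lemma~\ref{lem:restricted-identity}, and that the vectors $C_{\mathscr T}^{\lambda_{k}}$ attached to a fixed eigenvalue are linearly independent, which is what actually turns ``sum of $m$ rank-one matrices'' into ``rank exactly $m$'' in (III).
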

\begin{proof}
  The fact that $\sigma_N^{\mathscr{T}}$ is a spectral function for
  $\mathcal M_{N}$ is a consequence of
  Corollary~\ref{cor:isometry-finite} and
  Proposition~\ref{prop:multiplication-operator}. Note that
  completeness of the spectral measure is also given by
  Proposition~\ref{prop:multiplication-operator}. The property (I) is
  an immediate outcome of Definition~\ref{def:spectral-function} since
  the quadratic form
  $\pb{f}^{*}C_{\mathscr T}^{\lambda_{k}}(C_{\mathscr
    T}^{\lambda_{k}})^*\pb f$ is the norm squared of the vector
  $(C_{\mathscr T}^{\lambda_{k}})^*\pb f$. To prove (II) and (III),
  one uses Corollary~\ref{cor:eigenspaces} and the fact that
  $C_{\mathscr T}^{\lambda_{k}}(C_{\mathscr T}^{\lambda_{k}})^*$ is a
  rank one matrix since $C_{\mathscr T}^{\lambda_{k}}$ is a nonzero
  vector.
\end{proof}

\begin{remark}
  \label{rem:equivalence-measures}
  For any complete function $\sigma$ satisfying (I)--(III) of
  Definition~\ref{def:spectral-function} and having jumps in
  $\mu_{1},\dots,\mu_{N}$, there are $n$-dimensional vectors
  $C^{1},\dots,C^{N}$ such that
\begin{equation}
  \label{eq:interpolation-problem-sigma}
  \sigma(t)=\sum_{\mu_{k}<t}C^{k}(C^{k})^{*}\,,
\end{equation}
where there are no zero elements in the collection $C^{1},\dots,C^{N}$
and it satisfies Corollary~\ref{cor:c-properties} with
$C^{\lambda_{k}}_{\mathscr T}=C^{k}$ for any $k=1,\dots,N$ (see
\cite[Thm.\,2.2]{MR1668981} and \cite[Sec.\,2]{MR3711273}).  As has
been shown in this section, any such collection of vectors
$C^{1},\dots,C^{N}$ and numbers $\mu_{1},\dots,\mu_{N}$ determine
uniquely by \eqref{eq:interpolation-problem-sigma} a complete function
which satisfies (I)--(III).
\end{remark}
\section{Band diagonal matrices with degenerations}
\label{sec:band-diag-matr}

\begin{definition}
  \label{def:ourclass}
  The matrix $M\in\mathfrak M$ (see Definition~\ref{def:class}) is in
  the class $\widetilde{\mathfrak M}$ when the following conditions
  are met.
    \begin{enumerate}[(a)]
    \item If $(k(j),n+j)$ are the coordinates of the row-edge entries
      for $j\in\nats$, then $k(j)<k(j+1)$ for $j\in\nats$.
    \item All the row-edge entries are simultaneously column-edge
      entries.
    \end{enumerate}
\end{definition}
\begin{remark}
  \label{rem:subclass}
  Note that item $(a)$ in Definition~\ref{def:ourclass} allows for the
  existence of rows without row-edge entries. We say that there is a
  \emph{degeneration} in each such row (see
  Fig.~\ref{fig:structure-class}). It is straightforward to verify
  that the class $\widetilde{\mathfrak M}$ is equivalent to the class
  given in \cite[Def.\,1]{MR3543793}.
\end{remark}

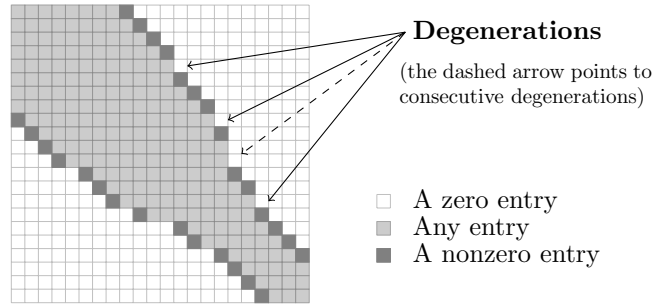
\begin{figure}[h]
\begin{center}
\begin{tikzpicture}[scale=.18]\footnotesize
  \pgfmathsetmacro{\xone}{0} \pgfmathsetmacro{\xtwo}{22}
  \pgfmathsetmacro{\yone}{8} \pgfmathsetmacro{\ytwo}{30}
  \draw[step=1cm,gray,opacity=0.5,very thin] (\xone,\yone) grid
  (\xtwo,\ytwo); 
  \draw[step=1cm,gray,opacity=0.5,very thin] (27,15)
  grid (28,16); 
  \draw[step=1cm,gray,opacity=0.5,very thin] (27,13) grid
  (28,14); 
  \draw[step=1cm,gray,opacity=0.5,very thin] (27,11) grid
  (28,12); 
  % \draw(30.9,-.3)node[scale=.8]{$\ddots$};
  % \draw(30.9,.7)node[scale=.8]{$\ddots$};
  % \draw(29.9,-.3)node[scale=.8]{$\ddots$}; 
  \draw(38,15.5)node[text width=3cm,scale=1]{A zero entry};
  \draw(38,13.5)node[text width=3cm,scale=1]{Any entry};
  \draw(38,11.5)node[text width=3cm,scale=1]{A nonzero entry};
  \draw(38,28)node[text width=3cm,scale=1]{\bf Degenerations};
  \draw(38,24.1)node[text width=4.2cm,align=justify, scale=.8] {(the
    dashed arrow points to consecutive degenerations)}; \draw[->]
  (29.1,28) -- (16,21.5);
%  \draw[->] (29.1,28) -- (17,19.5); 
  \draw[->] (29.1,28) -- (13,25.5); 
  \draw[dashed,->] (29.1,28) -- (17,19); 
  \draw[->] (29.1,28) -- (19,15.5); 
  % \draw[->] (33.5,22) -- (23,11); 
  % \draw[->] (33.5,22) -- (25,8);
% \begin{scope}
% \foreach \x in {0,1,2,3,4,5,6,7,8,9,10,11,12,13,14,15,16,17,
% 18,19,20,21,22,23,24,25,26,27}
% %\draw(31.3,2.5+\x)node[scale=.8]{$\dots$};
% \end{scope}

% \begin{scope}
% \foreach \x in {0,1,2,3,4,5,6,7,8,9,10,11,12,13,14,15,16,17,
% 18,19,20,21,22,23,24,25,26,27}
% %\draw(\x+.5,-.5)node[scale=.8]{$\vdots$};
% \end{scope}

\begin{scope}
  \filldraw[thin,gray,opacity=.4] (27,13)
    rectangle (28,14)
 ;
  \filldraw[thin,gray,opacity=1] (27,11)
    rectangle (28,12);
\end{scope}
%8th-diagonal
\begin{scope}
\foreach \x in {0,1,2,3}
{
  \filldraw[thin,gray,opacity=1] (0+\x, 21-\x)
    rectangle (1+\x,22-\x)
 ;
   \filldraw[thin,gray,opacity=1] (8+\x, 30-\x)
     rectangle (9+\x,29-\x);}
\end{scope}
%7th-diagonal
\begin{scope}
\foreach \x in {0,1,2,3,4}
{
  \filldraw[thin,gray,opacity=.4] (0+\x, 22-\x)
    rectangle (1+\x,23-\x)
 ;
   \filldraw[thin,gray,opacity=.4] (7+\x, 30-\x)
     rectangle (8+\x,29-\x);}
\end{scope}

\begin{scope}
\foreach \x in {5,6,7}
{
  \filldraw[thin,gray,opacity=1] (0+\x, 22-\x)
    rectangle (1+\x,23-\x)
 ;
   \filldraw[thin,gray,opacity=1] (7+\x, 30-\x)
     rectangle (8+\x,29-\x);}
\end{scope}
%6th-diagonal
\begin{scope}
\foreach \x in {0,1,2,3,4,5,6,7,8}
{
  \filldraw[thin,gray,opacity=.4] (0+\x, 23-\x)
    rectangle (1+\x,24-\x)
 ;
   \filldraw[thin,gray,opacity=.4] (6+\x, 30-\x)
     rectangle (7+\x,29-\x);}
\end{scope}
\begin{scope}
\foreach \x in {9}
{
  \filldraw[thin,gray,opacity=1] (0+\x, 23-\x)
    rectangle (1+\x,24-\x)
 ;
   \filldraw[thin,gray,opacity=1] (6+\x, 30-\x)
     rectangle (7+\x,29-\x);}
\end{scope}
%5th-diagonal
\begin{scope}
\foreach \x in {0,1,2,3,4,5,6,7,8,9,10}
{
  \filldraw[thin,gray,opacity=.4] (0+\x, 24-\x)
    rectangle (1+\x,25-\x)
 ;
   \filldraw[thin,gray,opacity=.4] (5+\x, 30-\x)
     rectangle (6+\x,29-\x);}
\end{scope}
% \begin{scope}
% \foreach \x in {12,13,14}
% {
%   \filldraw[thin,gray,opacity=.9] (0+\x, 24-\x)
%     rectangle (1+\x,29-\x)
%  ;
%    \filldraw[thin,gray,opacity=.9] (5+\x, 30-\x)
%      rectangle (6+\x,29-\x);}
% \end{scope}

%4th-digonal
\begin{scope}
\foreach \x in {0,1,2,3,4,5,6,7,8,9,10,11,12}
{
  \filldraw[thin,gray,opacity=.4] (0+\x, 25-\x)
    rectangle (1+\x,26-\x)
 ;
   \filldraw[thin,gray,opacity=.4] (4+\x, 30-\x)
     rectangle (5+\x,29-\x);}
\end{scope}

\begin{scope}
\foreach \x in {12,13}
{
  \filldraw[thin,gray,opacity=1] (0+\x, 25-\x)
    rectangle (1+\x,26-\x)
 ;
   \filldraw[thin,gray,opacity=1] (4+\x, 30-\x)
     rectangle (5+\x,29-\x);}
\end{scope}
%thirst-diagonal
\begin{scope}
\foreach \x in {0,1,2,3,4,5,6,7,8,9,10,11,12,13,14}
{
  \filldraw[thin,gray,opacity=.4] (0+\x, 26-\x)
    rectangle (1+\x,27-\x)
 ;
   \filldraw[thin,gray,opacity=.4] (3+\x, 30-\x)
     rectangle (4+\x,29-\x);}
\end{scope}

\begin{scope}
\foreach \x in {15,16,17,18}
{
  \filldraw[thin,gray,opacity=1] (0+\x, 26-\x)
    rectangle (1+\x,27-\x)
 ;
   \filldraw[thin,gray,opacity=1] (3+\x, 30-\x)
     rectangle (4+\x,29-\x);}
\end{scope}

%second diagonal
\begin{scope}
\foreach \x in {0,1,2,3,4,5,6,7,8,9,10,11,12,13,14,15,16,17,18,19}
{
  \filldraw[thin,gray,opacity=.4] (0+\x, 27-\x)
    rectangle (1+\x,28-\x)
 ;
   \filldraw[thin,gray,opacity=.4] (2+\x, 30-\x)
     rectangle (3+\x,29-\x);}
\end{scope}

%first diagonal
\begin{scope}
\foreach \x in {0,1,2,3,4,5,6,7,8,9,10,11,12,13,14,15,
16,17,18,19,20}
{
  \filldraw[thin,gray,opacity=.4] (0+\x, 28-\x)
    rectangle (1+\x,29-\x)
 ;
   \filldraw[thin,gray,opacity=.4] (1+\x, 30-\x)
     rectangle (2+\x,29-\x);}
\end{scope}

%main diagonal
\begin{scope}
  \foreach \x in
  {0,1,2,3,4,5,6,7,8,9,10,11,12,13,14,15,16,17,18,19,
    20,21}
  { \filldraw[thin,gray,opacity=.25] (0+\x, 29-\x) rectangle
    (1+\x,30-\x) ; \filldraw[thin,gray,opacity=.2] (0+\x, 29-\x)
    rectangle (1+\x,30-\x);}
\end{scope}
\end{tikzpicture}
\end{center}
\caption{The structure of a matrix in
  $\widetilde{\mathfrak M}$}\label{fig:structure-class}
\end{figure}

In \cite[Thm.\,4.1]{MR3711273} the next result is proven. It plays an
important role in our further considerations.
\begin{theorem}
  \label{thm:finite-our-class-inversion}
  If $\sigma$ is a $n\times n$-matrix-valued function defined on the
  real line and satisfying (I)--(III) of
  Theorem~\ref{thm:spectral-measure-finite}, then there is an
  upper-triangular invertible $n\times n$-matrix $\mathscr T$ and a
  matrix $\widetilde M\in\widetilde{\mathfrak M}$ such that
  $\sigma=\widetilde{\sigma}_{N}^{\mathscr T}$. Here
  $\widetilde{\sigma}_{N}^{\mathscr T}$ is the function given by
  Definition~\ref{def:spectral-function} for the matrix
  $\widetilde{M}_{N}$.
\end{theorem}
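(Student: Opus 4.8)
The plan is to realize $\sigma$ as the spectral function of the operator $A$ of multiplication by the independent variable on $L_{2}(\reals,\sigma)$, expressed in a distinguished orthonormal basis of vector polynomials whose recurrences encode a matrix of the prescribed shape, while the boundary matrix $\mathscr T$ records how this basis is built from the constant functions $\widetilde\delta_{1},\dots,\widetilde\delta_{n}$. Concretely, I would first pass to interpolation data: by Remark~\ref{rem:equivalence-measures} applied to $\sigma$ (which has $N$ jumps counted with rank) there are nonzero $n$-dimensional vectors $C^{1},\dots,C^{N}$ and reals $\mu_{1},\dots,\mu_{N}$ with $\sigma(t)=\sum_{\mu_{k}<t}C^{k}(C^{k})^{*}$, the collection $\{C^{k}\}$ satisfies the conclusion of Corollary~\ref{cor:c-properties}, and completeness of $\sigma$ amounts to $\sum_{k}C^{k}(C^{k})^{*}$ being invertible. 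Then $L_{2}(\reals,\sigma)$ is $N$-dimensional, every class in it is represented by a vector polynomial (interpolate at the finitely many $\mu_{k}$), and hence the constant functions $\widetilde\delta_{1},\dots,\widetilde\delta_{n}$ are linearly independent in $L_{2}(\reals,\sigma)$ and generate it under $A$.

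Next I would construct the orthonormal basis by a Gram--Schmidt process adapted to the band structure. Orthonormalizing $\widetilde\delta_{1},\dots,\widetilde\delta_{n}$ in this order gives $\pb p_{1},\dots,\pb p_{n}$; writing $\pb p_{j}=\sum_{i\le j}t_{ij}\widetilde\delta_{i}$ exhibits the upper-triangular invertible boundary matrix $\mathscr T=\{t_{ij}\}$, and one checks $\pb p_{j}=\Psi_{\mathscr T}^{*}\delta_{j}$ with $\widetilde\Pi_{G_{N}\to G_{n}}\Psi_{\mathscr T}=\mathscr T^{*}$ as in \eqref{eq:psi_tau}. One then runs a pointer $k=1,2,3,\dots$ over rows: given an orthonormal system $\pb p_{1},\dots,\pb p_{m}$ with $m\ge n$, form $A\pb p_{k}$; if it already lies in $\Span\{\pb p_{1},\dots,\pb p_{m}\}$ declare row $k$ a \emph{degeneration} (no new polynomial), otherwise set $\pb p_{m+1}$ to be the normalized component of $A\pb p_{k}$ orthogonal to $\Span\{\pb p_{1},\dots,\pb p_{m}\}$, so that the expansion of $A\pb p_{k}$ in $\pb p_{1},\dots,\pb p_{m+1}$ reads off row $k$ of a Hermitian matrix whose row-edge entry $\inner{A\pb p_{k}}{\pb p_{m+1}}\neq0$ occupies column $m+1$. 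One stops once all $N$ slots are filled; the processed rows lacking a row-edge entry in $G_{N}$, together with the already filled but not yet processed rows, form the set $K$, and a count gives $|K|=n$. Finally one extends this $N\times N$ matrix to a semi-infinite matrix $\widetilde M$ by continuing with the trivial tail, in which row-edge entries coincide with column-edge entries along a shifted diagonal.

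It then remains to check that $\widetilde M\in\widetilde{\mathfrak M}$ and that $\widetilde\sigma_{N}^{\mathscr T}=\sigma$. The columns $m+1$ that receive polynomials increase with the pointer $k$, which is condition~(a) of Definition~\ref{def:ourclass}; each freshly created $\pb p_{m+1}$ has strictly larger polynomial degree (in the staircase induced by $A$) than $\pb p_{1},\dots,\pb p_{m}$, and combined with the Hermitian symmetry this forces every row-edge entry to be a column-edge entry, which is condition~(b); conditions~(1)--(2) of Definition~\ref{def:class} follow from an a priori bound on the number of degenerations and from the explicit form of the tail. For the spectral function, $\widetilde M_{N}$ is by construction the matrix of $A$ in the orthonormal basis $\{\pb p_{k}\}$, so its eigenvalues are the $\mu_{k}$ and the unitary $\delta_{k}\mapsto\pb p_{k}$ intertwines the operator with matrix $\widetilde M_{N}$ and multiplication by $t$, fixing the constants $\widetilde\delta_{j}$; comparing this with the isometry $\Upsilon_{\mathscr T}$ of Corollary~\ref{cor:isometry-finite} and Proposition~\ref{prop:multiplication-operator} yields a unitary of $L_{2}(\reals,\sigma)$ onto $L_{2}(\reals,\widetilde\sigma_{N}^{\mathscr T})$ fixing every vector polynomial, hence $\sigma=\widetilde\sigma_{N}^{\mathscr T}$ (equivalently, the vectors $C^{\lambda_{k}}_{\mathscr T}$ recovered from $\widetilde M_{N}$ through \eqref{eq:relation_phi_psi} coincide with the $C^{k}$, and the uniqueness in Remark~\ref{rem:equivalence-measures} applies).

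The main obstacle is the combinatorial bookkeeping behind the last two paragraphs: that the pointer procedure never stalls (so that $\pb p_{k}$ is available whenever row $k$ is processed, which rests on the bound of at most $n-1$ degenerations), that the leading coefficients $\inner{A\pb p_{k}}{\pb p_{m+1}}$ never vanish, and that the interplay between the increasing row-edge columns and the degree filtration truly delivers both edge conditions of Definition~\ref{def:ourclass}. These are precisely the points carried out in \cite[Sec.\,4]{MR3711273}.
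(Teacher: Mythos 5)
The paper itself offers no proof of this theorem: it is imported verbatim from \cite[Thm.\,4.1]{MR3711273}, with Remark~\ref{rem:reconstruction-algorithm} pointing to the constructive algorithm of \cite[Sec.\,4]{MR3711273} as its content. Your reconstruction --- Gram--Schmidt on vector polynomials in $L_{2}(\reals,\sigma)$, reading off the rows of $\widetilde M_{N}$ from the action of multiplication by $t$ and identifying degenerations with the zero-norm steps --- is precisely that algorithm (and the finite analogue of the orthonormalization procedure in Section~\ref{sec:Reconstruction}), so your proposal follows essentially the same route, with the combinatorial points you flag at the end being exactly what the cited reference supplies.
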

\begin{remark}
  \label{rem:reconstruction-algorithm}
  The results of \cite[Sec.\,4]{MR3711273} give a constructive
  algorithm for finding the entries of $\widetilde{M}_{N}$ on the
  basis of the function $\sigma$.
\end{remark}
\begin{theorem}
  \label{thm:equivalence-classes}
  For any $M\in\mathfrak M$ there is a
  $\widetilde M\in\widetilde{\mathfrak M}$ such that
  $n_{M}=n_{\widetilde M}$ (see Definition~\ref{def:class}) and
  $M_{N}$ is unitary equivalent to $\widetilde{M}_{N}$ for any
  $N>n=n_{M}=n_{\widetilde M}$. Here we actually mean that
  $\mathcal{M}_{N}$ is unitary equivalent to
  $\widetilde{\mathcal M}_{N}$ (see the last paragraph of
  Section~\ref{sec:preliminaries}).
\end{theorem}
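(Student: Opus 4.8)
The plan is to work first at the level of each finite submatrix $M_N$, combining Section~\ref{sec:spectr-meas-subm} with Theorem~\ref{thm:finite-our-class-inversion}, and then to pass to the limit in $N$; all the difficulty is concentrated in making the passage to the limit produce a \emph{single} matrix $\widetilde M$. Fix $M\in\mathfrak M$, put $n=n_M$, and fix one boundary matrix $\mathscr T$. The key preliminary observation is that the vector polynomials $\pb p_k$ of Definition~\ref{def:p_q_poly} form a sequence that does not depend on $N$: the recursive solution of \eqref{eq:formal-difference_p} determines $\psi_{n+1},\psi_{n+2},\dots$ one at a time, and $\psi_c$ is obtained from the row equation whose row-edge entry lies in column $c$, an equation involving only $\psi_1,\dots,\psi_c$. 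Hence, for $k\le N$, the polynomial $\pb p_k(z)=\Psi_{\mathscr T}(z)^{*}\delta_k$ is computed from row equations whose row-edge entries sit in columns $\le k$, and those equations are unchanged when one passes from $M_N$ to $M_{N+1}$ (the extra column $N+1$ contributes nothing to them). Therefore $\pb p_1,\pb p_2,\dots$ is a single fixed sequence; by Proposition~\ref{prop:ortonormal-p-L2-finite} the set $\{\pb p_1,\dots,\pb p_N\}$ is an orthonormal basis of $L_2(\reals,\sigma_N^{\mathscr T})$, so $\inner{\pb p_i}{\pb p_j}_{L_2(\reals,\sigma_N^{\mathscr T})}=\delta_{ij}=\inner{\pb p_i}{\pb p_j}_{L_2(\reals,\sigma_{N+1}^{\mathscr T})}$ for all $i,j\le N$. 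By Theorem~\ref{thm:spectral-measure-finite} each $\sigma_N^{\mathscr T}$ is moreover a complete function satisfying (I)--(III).

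For a fixed $N$ the assertion is then immediate. Applying Theorem~\ref{thm:finite-our-class-inversion} to $\sigma=\sigma_N^{\mathscr T}$ yields an upper-triangular invertible matrix $\widetilde{\mathscr T}$ and a matrix $\widetilde M^{(N)}\in\widetilde{\mathfrak M}$, necessarily with $n_{\widetilde M^{(N)}}=n$, such that $\widetilde\sigma_N^{\widetilde{\mathscr T}}(\widetilde M^{(N)})=\sigma_N^{\mathscr T}$. By Proposition~\ref{prop:multiplication-operator}, applied once to $M$ and once to $\widetilde M^{(N)}$, both $\mathcal M_N$ and $\widetilde{\mathcal M}^{(N)}_N$ are unitarily equivalent to the operator of multiplication by the independent variable on $L_2(\reals,\sigma_N^{\mathscr T})=L_2(\reals,\widetilde\sigma_N^{\widetilde{\mathscr T}}(\widetilde M^{(N)}))$, and hence to each other.

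It remains — and this is the crux — to choose the $\widetilde M^{(N)}$ coherently, i.e.\ so that for each fixed pair $(a,b)$ the entry $\widetilde M^{(N)}_{ab}$ stabilizes as $N\to\infty$. For this I would use that the reconstruction algorithm underlying Theorem~\ref{thm:finite-our-class-inversion} (Remark~\ref{rem:reconstruction-algorithm}) is itself recursive and compatible with truncation: the top-left corner of $\widetilde M^{(N)}$ up to a given index is produced from finitely many of the orthonormality relations $\inner{\cdot}{\cdot}_{L_2(\reals,\sigma_N^{\mathscr T})}$ among low-degree vector polynomials, and — by the solution of the finite matrix moment problem in Theorem~\ref{thm:moments-finite} — it reads $\sigma_N^{\mathscr T}$ only through data that, by the consistency recorded in the first paragraph, coincides with the corresponding data for $\sigma_{N+1}^{\mathscr T}$; in particular the boundary matrices $\widetilde{\mathscr T}$ can be chosen consistently because $\sigma_N^{\mathscr T}$ and $\sigma_{N+1}^{\mathscr T}$ share their first few moments. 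Granting this, set $\widetilde M:=\lim_{N\to\infty}\widetilde M^{(N)}$ entrywise; the stabilization also pins down the row-edge/column-edge pattern of $\widetilde M$, so $\widetilde M\in\widetilde{\mathfrak M}$ and $n_{\widetilde M}=n=n_M$, and since $\widetilde M_N=\widetilde M^{(N)}_N$ we get that $\mathcal M_N$ is unitarily equivalent to $\widetilde{\mathcal M}_N$ for every $N>n$, as claimed.

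The one real obstacle is the verification that the reconstruction commutes with truncation, through Theorem~\ref{thm:moments-finite}: everything else is either the elementary $N$-independence of the polynomials $\pb p_k$ or a double application of Proposition~\ref{prop:multiplication-operator}, whereas the delicate point is the combinatorial bookkeeping of the classes $\mathfrak M$ and $\widetilde{\mathfrak M}$ together with the precise amount of the measure $\sigma_N^{\mathscr T}$ that the algorithm actually consumes.
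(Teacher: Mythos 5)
Your first two paragraphs reproduce the paper's proof exactly: for a fixed $N>n$, Theorem~\ref{thm:spectral-measure-finite} shows that $\sigma_N^{\mathscr T}$ satisfies (I)--(III), Theorem~\ref{thm:finite-our-class-inversion} then supplies $\widetilde{\mathscr T}$ and $\widetilde M\in\widetilde{\mathfrak M}$ with $\widetilde\sigma_N^{\widetilde{\mathscr T}}=\sigma_N^{\mathscr T}$ (whence $n_{\widetilde M}=n_M$), and a double application of Proposition~\ref{prop:multiplication-operator} identifies both $\mathcal M_N$ and $\widetilde{\mathcal M}_N$ with multiplication by the independent variable on the same space $L_2(\reals,\sigma_N^{\mathscr T})$, hence with each other. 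That is the entirety of the paper's argument; it fixes one arbitrary $N$ and never revisits whether the resulting $\widetilde M$ can be chosen independently of $N$. Your preliminary observation that the polynomials $\pb p_k$ do not depend on $N$ is correct (the row equation producing $\psi_c$ has its row-edge entry in column $c$ and is therefore unaffected by enlarging the truncation), although it is not needed for the fixed-$N$ argument.

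The divergence is your third paragraph. You read the statement with ``for any $N$'' inside the scope of ``there is $\widetilde M$'' --- which is what the wording literally says --- and you correctly see that this stronger reading requires the matrices $\widetilde M^{(N)}$ to stabilize entrywise. But you do not prove the stabilization: the assertion that a fixed top-left corner of $\widetilde M^{(N)}$ is read off from only finitely many moments of $\sigma_N^{\mathscr T}$ (so that Theorem~\ref{thm:moments-finite} makes it, together with the positions of the degenerations and the choice of $\widetilde{\mathscr T}$, eventually constant in $N$) is precisely the step you ``grant''. The claim is true --- the algorithm of Remark~\ref{rem:reconstruction-algorithm} is the Gram--Schmidt procedure of Fig.~\ref{fig:flow}, whose inner products are moments --- but substantiating it is essentially the content of Sections~\ref{sec:infin-march-slav} and~\ref{sec:Reconstruction} (Theorems~\ref{thm:moments-finite}, \ref{thm:helly-theorems}, \ref{thm:gram-schmidt} and \ref{thm:reconstruction-spectral-function}), i.e., of the paper's main line of argument rather than something available where Theorem~\ref{thm:equivalence-classes} sits. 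So: under the paper's own (fixed-$N$, with $\widetilde M$ allowed to depend on $N$) reading your proof is complete and methodologically identical; under the uniform reading you have correctly located the missing ingredient but left it as an unverified sketch, and you should either prove the truncation-compatibility of the reconstruction or weaken the statement to let $\widetilde M$ depend on $N$.
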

\begin{proof}
  For a matrix $M\in\mathfrak M$, cosider the
  $n_{M}\times n_{M}$-matrix-valued function $\sigma_{N}^{\mathscr T}$
  given in Definition~\ref{def:spectral-function} with $N>n_{M}$ and
  an arbitrary boundary matrix $\mathscr T$. On the basis of
  Theorems~\ref{thm:spectral-measure-finite} and
  \ref{thm:finite-our-class-inversion} there is an upper-triangular
  invertible matrix $\widetilde{\mathscr T}$ and a matrix
  $\widetilde M\in\widetilde{\mathfrak M}$ such that
  $\sigma_{N}^{\mathscr T}=\widetilde{\sigma}_{N}^{\widetilde{\mathscr
      T}}$, where $\widetilde{\sigma}_{N}^{\widetilde{\mathscr T}}$ is
  the spectral function of $\widetilde{M}_{N}$. Clearly,
  $n_{M}=n_{\widetilde M}$. For completing the proof, use
  Proposition~\ref{prop:multiplication-operator}.
\end{proof}

The interpolation problem for $n$-dimensional vector polynomials as
stated in \cite{MR3389906} consists in finding $n$-dimensional vector
polynomials $\pb{r}(z)$ such that
\begin{equation}
  \label{eq:interpolation-problem}
\pb{r}(\mu_{k})^*C^{k}(C^{k})^{*}\pb{r}(\mu_{k})=0\,
\end{equation}
for $k=1,\dots,N$. Here $C^{1},\dots, C^{N}$ is a collection of
nonzero $n$-dimensional vectors and $\mu_{1},\dots,\mu_{N}$ is a
collection of real numbers in which each number can appear repeatedly
at most $n$ times.  This problem is a generalization of the rational
interpolation problem (also known as Cauchy-Jacobi problem) studied in
\cite{MR1091797} and used by \cite{MR2533388} for the spectral
analysis of CMV matrices.

In view of the second part of Remark~\ref{rem:equivalence-measures}
and Theorem~\ref{thm:finite-our-class-inversion}, one can always
consider the collection of numbers appearing in
Remark~\ref{rem:equivalence-measures} to be
$\lambda_{1},\dots,\lambda_{N}$, \emph{i.\,e.} the spectrum of
$\widetilde{M}_{N}$ ($\widetilde M\in\widetilde{\mathfrak M}$), and
the collection of vectors to be the corresponding
$C_{\mathscr T}^{\lambda_{1}},\dots, C_{\mathscr T}^{\lambda_{N}}$ for
a boundary matrix $\mathscr T$.  Due to the fact that
$C_{\mathscr T}^{\lambda_k}(C_{\mathscr T}^{\lambda_k})^{*}$ is
nonnegative, the interpolation problem is equivalent to finding vector
polynomials in the equivalence class of the zero function in
$L_{2}(\reals,\sigma_{N}^{\mathscr T})$.
\begin{definition}
\label{def:height}
Let
$\boldsymbol{r}(z)=\left(r_1(z),r_2(z),\ldots,r_n(z)\right)^{\intercal}$
be an $n$-dimensional vector polynomial. The height of
$\boldsymbol{r}(z)$ is
\begin{equation*}
%\label{eq:height}
h(\boldsymbol{r}):=
\max_{j\in\{1,\dots,n\}}\left\lbrace n\deg (r_j)+j-1\right\rbrace\,,
\end{equation*}
where it is assumed that $\deg 0:=-\infty$ and
$h(\boldsymbol{0}):=-\infty$.  The height of the set $S$ is defined by
\begin{equation*}
  h(S):= \min\{h(\pb{r}): \pb{r}\in S, \pb{r}\ne 0\}\,.
\end{equation*}
\end{definition}
\begin{definition}
  \label{def:generators}
  Denote by $\mathbb{S}$ the solutions to the interpolation problem,
  that is, the set of all $n$-dimensional vector polynomials
  satisfying \eqref{eq:interpolation-problem} for $k=1,\dots,N$.
  Consider $\pb{g}_{1}\in\mathbb{S}$ such that
  $h(\pb{g}_{1})=h(\mathbb{S})$ and
  $\pb{g}_{k}\in\mathbb{S}\setminus\Span\{M(\pb{g}_{j})\}_{j=1}^{k-1}$
  such that
  \begin{equation*}
    h(\pb{g}_{k})=h(\mathbb{S}\setminus
    \Span\{\mathbb{M}(\pb{g}_{j})\}_{j=1}^{k-1})\,,
  \end{equation*}
  where
  $\mathbb{M}(\pb{r}):=\{\pb s: \pb s(z)=s(z)\pb{r}(z), s\ \text{is an
    arbitrary scalar polynimial}\}$. The vector polynomial
  $\pb{g}_{k}$ is the $k$-th generator of $\mathbb S$.
\end{definition}

If there are two elements of $\mathbb{S}$ with the same height, then
they differ from each other only by a multiplicative constant
\cite[Lem.\, 4.1]{MR3389906}. On the basis of this, it is shown in
\cite[Sec\,4]{MR3389906} (see also \cite[Prop.\,3.3]{MR3711273}) that
the generators are unique modulo a multiplicative constant and that
\begin{equation*}
  \mathbb{S}=\mathbb{M}(\pb{g_{1}})\dotplus\dots\dotplus\mathbb{M}(\pb{g_{n}})\,,
\end{equation*}
where $\dotplus$ denotes \emph{direct sum}. Note that the generators
$\pb{g}_{1},\dots,\pb{g}_{n}$ have different heights and the height of
any element in $\mathbb{M}(\pb{g}_{i})$ is $h(\pb{g}_{i})+kn$ with
$k\in\nats\cup\{0\}$.

According to Proposition~\ref{prop:zeros-finite},
$\pb{q}_{j}\in\mathbb{S}$ for all $j=1,\dots,n$. Moreover, the
following assertion is proven in \cite[Thm.\,3.1]{MR3711273}.
\begin{proposition}
  \label{prop:q-generators}
  If $M\in\widetilde{\mathfrak{M}}$, then $\pb{q}_{j}$ is the $j$-th
  generator of $\mathbb{S}$ for $j=1,\dots,n$.
\end{proposition}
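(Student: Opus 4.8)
The plan is to identify $\pb q_j$ with the $j$-th generator of $\mathbb S$ by a height computation. Since Proposition~\ref{prop:zeros-finite} already gives $\pb q_j\in\mathbb S$, it suffices to establish (i) $h(\pb q_1)<h(\pb q_2)<\dots<h(\pb q_n)$, and (ii) for each $j$ the value $h(\pb q_j)$ is the smallest height of a nonzero element of $\mathbb S$ whose height is $\equiv h(\pb q_j)\pmod n$, and moreover $h(\pb q_1),\dots,h(\pb q_n)$ realize $n$ distinct residues modulo $n$. Granting (i)--(ii): as recalled before the proposition (using \cite[Lem.\,4.1]{MR3389906}), the generators $\pb g_1,\dots,\pb g_n$ have pairwise distinct heights which form a complete residue system modulo $n$, with $h(\pb g_i)$ the least height realized in $\mathbb S$ in its residue class; hence (ii) forces $\{h(\pb q_j)\}_{j=1}^n=\{h(\pb g_i)\}_{i=1}^n$, then $\pb q_j$ and the generator of equal height are two elements of $\mathbb S$ of the same height, hence scalar multiples of each other by \cite[Lem.\,4.1]{MR3389906}, and (i) together with the increasing enumeration of the generators pins down the correspondence as $\pb q_j=c_j\pb g_j$ with $c_j\neq 0$.

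The core is the computation of the heights of the orthonormal polynomials $\pb p_k$ of Definition~\ref{def:p_q_poly}. For $k\le n$ one has $\pb p_k=\sum_{i\le k}t_{ik}\widetilde\delta_i$ with $t_{kk}\ne 0$ because $\mathscr T$ is upper triangular and invertible, so $h(\pb p_k)=k-1$. For $k=n+l>n$, writing $(k(l),n+l)$ for the row-edge entry in column $n+l$, the recursion \eqref{eq:p_relation} becomes $\pb p_{n+l}=m_{k(l),n+l}^{-1}\bigl(z\,\pb p_{k(l)}-\sum_{i<n+l}m_{k(l),i}\,\pb p_i\bigr)$; this uses that $m_{k(l),n+l}\ne 0$ and, by condition (a) of Definition~\ref{def:ourclass}, that $k(l)<n+l$ (so $\pb p_{k(l)}$ is already available) and $k(l)\ge l$. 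An induction on $k$ then shows that $k\mapsto h(\pb p_k)$ is strictly increasing and $h(\pb p_{n+l})=h(\pb p_{k(l)})+n$: indeed $h(z\,\pb p_{k(l)})=h(\pb p_{k(l)})+n$ strictly exceeds $h(\pb p_i)$ for all $i<n+l$ with $m_{k(l),i}\ne0$ (since $h$ is strictly increasing and $k(l)\ge l$), so no cancellation of the leading term occurs. Condition (b) of Definition~\ref{def:ourclass} is what guarantees that, in $M_N$, each column $>n$ carries exactly one row-edge entry and that this recursion accounts for all rows of $G_N$ outside $K=\{r_1<\dots<r_n\}$.

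From \eqref{eq:q_relation_p}, $\pb q_j=(m_{r_jr_j}-z)\,\pb p_{r_j}+\sum_{i\ne r_j}m_{r_j i}\,\pb p_i$. Since row $r_j$ of $M$ carries no row-edge, its nonzero entries $m_{r_j i}$ lie either in columns $\le n$ or, for the rows near $N$, only up to the truncation at column $N$; in either case strict monotonicity of $k\mapsto h(\pb p_k)$ gives $h(\pb p_i)<h(\pb p_{r_j})+n=h(z\,\pb p_{r_j})$ for every such $i\ne r_j$, so the leading term of $\pb q_j$ is $-z\,\pb p_{r_j}$ and $h(\pb q_j)=h(\pb p_{r_j})+n$; as $r_1<\dots<r_n$ this yields (i). For the residues in (ii): the assignment $n+l\mapsto k(l)$ turns $\{1,\dots,N\}$ into a forest rooted at $\{1,\dots,n\}$ whose set of leaves is exactly $K$, so $r\mapsto(\text{root of the tree of }r)$ is a bijection of $K$ onto $\{1,\dots,n\}$, and since $h(\pb p_{n+l})\equiv h(\pb p_{k(l)})\pmod n$ the residues $h(\pb q_j)\equiv h(\pb p_{r_j})\pmod n$ are pairwise distinct. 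Finally, because $\{\pb p_k\}_{k=1}^N$ is an orthonormal basis of the $N$-dimensional space $L_2(\reals,\sigma_N^{\mathscr T})$, the polynomials $\pb p_1,\dots,\pb p_N$ form a basis of a complement to $\mathbb S$ in the space of all $n$-dimensional vector polynomials; comparing dimensions of the finite-height subspaces — the recursion being precisely a height-ordered orthogonalization that skips exactly the heights at which $\mathbb S$ has a leading term — shows that $\{h(\pb p_k)\}_{k=1}^N$ is the set of heights \emph{not} realized in $\mathbb S$. Hence $h(\pb q_j)$ is realized in $\mathbb S$ while $h(\pb q_j)-n=h(\pb p_{r_j})$ is not, and since the realized heights of a fixed residue class form one arithmetic progression of step $n$ (as $\mathbb S=\bigoplus_i\mathbb M(\pb g_i)$), $h(\pb q_j)$ is the least realized height in its class, which is (ii).

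The main obstacle is the second paragraph: the exact height formula for the $\pb p_k$, and in particular the strict monotonicity of $k\mapsto h(\pb p_k)$ together with the absence of cancellation of leading terms in \eqref{eq:p_relation}. This is precisely where the two conditions that single out $\widetilde{\mathfrak M}$ inside $\mathfrak M$ are used essentially: if the row-edge entries need not be column-edge entries, or the indices $k(l)$ need not be increasing, the recursion loses this clean form, the height bookkeeping breaks down, and $\pb q_j$ is in general no longer the $j$-th generator.
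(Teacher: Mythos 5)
The paper does not actually prove this proposition: it is quoted verbatim from \cite[Thm.\,3.1]{MR3711273}, so you are supplying an argument where the text gives only a citation. Your architecture --- $\pb q_j\in\mathbb S$ by Proposition~\ref{prop:zeros-finite}, an exact height formula for the $\pb p_k$ via the recursion \eqref{eq:p_relation}, the formula $h(\pb q_j)=h(\pb p_{r_j})+n$ from \eqref{eq:q_relation_p}, distinctness of the residues of $h(\pb q_1),\dots,h(\pb q_n)$ modulo $n$, minimality of each $h(\pb q_j)$ among realized heights in its class, and then the uniqueness of elements of $\mathbb S$ of a given height --- is exactly the height bookkeeping of the cited reference, and it is the right plan.

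There is, however, a genuine soft spot, and it sits at the step you yourself call the core. To get $h(\pb q_j)=h(\pb p_{r_j})+n$ you must rule out that some term $m_{r_j,\,n+l}\,\pb p_{n+l}$ in \eqref{eq:q_relation_p} dominates $z\pb p_{r_j}$; since $h(\pb p_{n+l})=h(\pb p_{k(l)})+n$, this amounts to showing $k(l)<r_j$ whenever $m_{r_j,\,n+l}\neq 0$. Strict monotonicity of $k\mapsto h(\pb p_k)$ does not give this: a degeneration row $r_j\in K$ in the interior of the band has nonzero entries in columns $n+l$ well beyond $n$, and monotonicity only bounds $h(\pb p_{n+l})$ by $h(\pb p_N)$, which exceeds $h(\pb p_{r_j})+n$. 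What actually closes the step is condition (b) of Definition~\ref{def:ourclass}: the row-edge entry $m_{k(l),\,n+l}$ is also the column-edge entry of column $n+l$, so every nonzero entry of that column lies in a row with index at least $k(l)$; hence $m_{r_j,\,n+l}\neq0$ forces $r_j\geq k(l)$, and $r_j\neq k(l)$ because $r_j\in K$, so $k(l)<r_j$ as required (condition (a) plays the analogous role inside the recursion for the $\pb p_{n+l}$ themselves). You instead credit condition (b) with guaranteeing one row-edge entry per column, but that is already condition (1) of Definition~\ref{def:class} and holds for all of $\mathfrak M$ --- including Example~\ref{ex:missing-heights}, where the conclusion of the proposition fails; so the justification as written cannot be the right one. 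The remaining steps (the chain decomposition of $G_N$ under $n+l\mapsto k(l)$ giving the distinct residues, and the argument that $\{h(\pb p_k)\}_{k=1}^N$ are precisely the heights not realized in $\mathbb S$) are correct in substance, though the last one is stated loosely; it is cleanest to note that the $\pb p_k$ and the $z^s\pb q_j$ have pairwise distinct heights covering all of $\nats\cup\{0\}$, hence form a basis of every finite-height subspace, so that any $\pb r\in\mathbb S$ has vanishing $\pb p$-coefficients by orthonormality and $\mathbb S=\mathbb M(\pb q_1)\dotplus\dots\dotplus\mathbb M(\pb q_n)$.
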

\begin{remark}
  \label{rem:generators-restatement}
  A way of restating the previous proposition is to say that, for
  $M\in\widetilde{\mathfrak{M}}$, $\pb{q}_{j}=\pb{g}_{j}$ (modulo a
  multiplicative constant) for $j=1,\dots,n$. This implies that, when
  $M\in\widetilde{\mathfrak{M}}$, the heights of
  $\pb{q}_{1},\dots,\pb{q}_{n}$ are in different equivalent classes of
  $\integers/n\integers$.
\end{remark}
\begin{remark}
  \label{rem:q-marchenko-not}
  It is not true in general that, for $M\in\mathfrak{M}$, $\pb{q}_{j}$
  is the $j$-th generator of $\mathbb{S}$. This is a remarkable
  difference between the classes $\mathfrak{M}$ and
  $\widetilde{\mathfrak{M}}$, actually $\widetilde{\mathfrak{M}}$ was
  somehow tailored to have the property given by
  Proposition~\ref{prop:q-generators}.
\end{remark}

\begin{proposition}
  \label{prop:our-class-heights}
  Let $M\in\widetilde{\mathfrak M}$ and $\pb{p}_{k}$ and $\pb{q}_{j}$
  be the corresponding vector polynomials given by \eqref{eq:p_q_poly}
  and \eqref{eq:q_q_poly}, respectively. The heights of the
  polynomials $\pb{p}_{1},\dots,\pb{p}_{N}$ are increasing and cannot
  coincide with the heights of any polynomial in $\mathbb{S}$ (see
  \cite[Lem.\,3.2]{MR3711273}).
\end{proposition}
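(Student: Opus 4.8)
The plan is to establish the two assertions separately: strict monotonicity of the heights by induction on the index along the recursion defining the $\pb p_k$, and disjointness from the heights of $\mathbb S$ by combining the module decomposition $\mathbb S=\mathbb M(\pb q_1)\dotplus\cdots\dotplus\mathbb M(\pb q_n)$ of Proposition~\ref{prop:q-generators} with a dimension count modulo $\mathbb S$.

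For monotonicity, the base case is immediate from \eqref{eq:psi_tau}: the first $n$ rows of $\Psi_{\mathscr T}(z)$ are constant and form the invertible lower-triangular matrix $\mathscr T^*$, so each $\pb p_j$ with $j\le n$ is a nonzero constant vector with $h(\pb p_j)=j-1$. For the inductive step, assume $h(\pb p_1)<\dots<h(\pb p_{n+j-1})$ and take the row of \eqref{eq:formal-difference_p} indexed by the row $k(j)$ that carries the row-edge entry of column $n+j$. As described in Section~\ref{sec:spectr-meas-subm}, this equation yields $\pb p_{n+j}=\pb{m}_{k(j),n+j}^{-1}\bigl[(z-m_{k(j),k(j)})\pb p_{k(j)}-\sum_{l<n+j,\,l\ne k(j)}m_{k(j),l}\pb p_l\bigr]$, and the recursion order forces $k(j)<n+j$, while $k(j-1)<k(j)$ by Definition~\ref{def:ourclass}(a). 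Multiplication of an $n$-dimensional vector polynomial by $z$ raises its height by exactly $n$, so $z\,\pb p_{k(j)}$ has height $h(\pb p_{k(j)})+n$; combining the inductive hypothesis with $k(j-1)<k(j)\le n+j-1$ (and $h(\pb p_{n+j-1})=h(\pb p_{k(j-1)})+n$ for $j\ge 2$, or $h(\pb p_n)=n-1$ for $j=1$) gives $h(\pb p_{n+j-1})<h(\pb p_{k(j)})+n$, so that summand strictly dominates every other term in height and its leading monomial is not cancelled. Thus $h(\pb p_{n+j})=h(\pb p_{k(j)})+n>h(\pb p_{n+j-1})$, completing the induction. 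As a by-product, $h(\pb p_m)=(\rho(m)-1)+n\,d(m)$, where in the tree $n+j\mapsto k(j)$ the vertex $m$ has root $\rho(m)\in\{1,\dots,n\}$ and depth $d(m)$; hence, in each residue class $r\equiv\rho-1$ mod $n$, the heights $h(\pb p_k)$ form a contiguous initial block $\{r,r+n,\dots,r+n\,\mathrm{ht}_r\}$.

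For the second assertion, put $\mathcal H_{\mathbb S}:=\{h(\pb r):\pb r\in\mathbb S,\,\pb r\ne\pb{0}\}$ and $\mathcal M:=(\nats\cup\{0\})\setminus\mathcal H_{\mathbb S}$; it suffices to show $h(\pb p_k)\in\mathcal M$ for all $k$. Let $\mathcal P$ be the space of $n$-dimensional vector polynomials, filtered by $\mathcal P_H:=\{\pb r:h(\pb r)\le H\}$, and recall there is exactly one monomial of each height, so $\dim\mathcal P_H=H+1$. Since $\mathbb S\subset\mathcal P$ is a linear subspace and elements of $\mathbb S$ with distinct heights are linearly independent, a Gauss-type reduction with respect to leading monomials gives $\dim(\mathbb S\cap\mathcal P_H)=\#\{h\in\mathcal H_{\mathbb S}:h\le H\}$, hence $\dim\bigl(\mathcal P_H/(\mathbb S\cap\mathcal P_H)\bigr)=\#\{\mu\in\mathcal M:\mu\le H\}$. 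By Proposition~\ref{prop:q-generators} and Remark~\ref{rem:generators-restatement}, $\mathbb S=\mathbb M(\pb q_1)\dotplus\cdots\dotplus\mathbb M(\pb q_n)$ with the $h(\pb q_i)$ in pairwise distinct residue classes mod $n$, so $\mathcal M$ is finite and is itself a union of contiguous initial blocks, one per residue class; moreover $\pb r\mapsto[\pb r]$ maps $\mathcal P$ onto $L_2(\reals,\sigma_N^{\mathscr T})$ with kernel $\mathbb S$ (Proposition~\ref{prop:ortonormal-p-L2-finite} and Corollary~\ref{cor:isometry-finite}), whence $\#\mathcal M=\dim(\mathcal P/\mathbb S)=N=\#\{h(\pb p_k):k=1,\dots,N\}$.

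The main obstacle is the last step: to show that in each residue class mod $n$ the $\pb p$-block does not meet $\mathcal H_{\mathbb S}$, equivalently that its top height $r+n\,\mathrm{ht}_r$ is strictly below the least height $h(\pb q_{i(r)})$ at which $\mathbb S$ lives in that class. Granting this inequality for every $r$, the equality $\#\mathcal M=\#\{h(\pb p_k)\}=N$ (both being unions of per-residue initial blocks) forces $\{h(\pb p_k):k=1,\dots,N\}=\mathcal M$, hence $h(\pb p_k)\notin\mathcal H_{\mathbb S}$, which is the claim. To obtain the inequality I would use $(\pb q_1,\dots,\pb q_n)^*=\widetilde\Pi_{G_N\to K}(M_N-zI)(\pb p_1,\dots,\pb p_N)^*$ from \eqref{eq:q_relation_p}: for $\kappa\in K$ this reads $\pb q_{(\kappa)}=(m_{\kappa\kappa}-z)\pb p_\kappa+\sum_{l\ne\kappa}m_{\kappa l}\pb p_l$, and a height estimate of the right-hand side — once more using that multiplication by $z$ adds $n$, the monotonicity from the first part, and the band structure of $M$ restricting which $\pb p_l$ appear — places $h(\pb q_{(\kappa)})$ exactly one level above the top of the matching $\pb p$-block. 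This is the content of \cite[Lem.\,3.2]{MR3711273}, to which the whole second assertion may be referred once the first part has been proved.
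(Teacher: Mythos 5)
Your write-up is correct in its first half and structurally sound in its second, but note that the paper supplies no proof of this proposition at all: it is imported verbatim from \cite[Lem.\,3.2]{MR3711273}, so any worked argument is by default a different route. Your induction for the monotonicity of $h(\pb{p}_1),\dots,h(\pb{p}_N)$ is complete and correct: the base case $h(\pb{p}_j)=j-1$ for $j\le n$ follows from the triangularity of $\mathscr T$, and the identity $h(\pb{p}_{n+j})=h(\pb{p}_{k(j)})+n$ combined with condition (a) of Definition~\ref{def:ourclass} (which gives $k(j-1)<k(j)$, hence $h(\pb{p}_{n+j-1})=h(\pb{p}_{k(j-1)})+n<h(\pb{p}_{k(j)})+n$) is exactly the mechanism that breaks down in Example~\ref{ex:missing-heights}, so you have isolated the right hypothesis. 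The observation that the indices organize into $n$ disjoint chains under $n+j\mapsto k(j)$, one per residue class of the height modulo $n$, so that the $\pb{p}$-heights form contiguous initial blocks, is also correct and is a genuinely useful refinement. The counting framework for the second assertion --- the cardinality of the complement of the height set of $\mathbb{S}$ equals $\dim(\mathcal{P}/\mathbb{S})=\dim L_2(\reals,\sigma_N^{\mathscr{T}})=N$, after which one matches initial blocks residue class by residue class --- is valid and arguably cleaner than a direct verification.

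The one step you do not close is the crux of the second assertion: that $h(\pb{q}_j)$ lies strictly above the top of the $\pb{p}$-block in its residue class, equivalently that in $\pb{q}_j=(m_{\kappa\kappa}-z)\pb{p}_\kappa+\sum_{l\neq\kappa}m_{\kappa l}\pb{p}_l$ (with $\kappa$ the $j$-th element of $K$) the term $z\pb{p}_\kappa$ dominates in height. For $l<\kappa$ this follows from your first part, but for $l>n$ with $l>\kappa$ and $m_{\kappa l}\neq0$ one has $h(\pb{p}_l)=h(\pb{p}_{k(l-n)})+n$, and the required bound $h(\pb{p}_l)<h(\pb{p}_\kappa)+n$ amounts to $k(l-n)<\kappa$, which is \emph{not} what the column-edge condition gives directly; one must in fact rule such entries out (or otherwise control them), and this is precisely where condition (b) of Definition~\ref{def:ourclass} and condition (1) of Definition~\ref{def:class} enter. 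You correctly identify this as the content of \cite[Lem.\,3.2]{MR3711273} and defer to it, which puts your text on the same footing as the paper's, but it means the second assertion is not self-contained; if a complete proof is wanted, this dominance estimate for the degeneration rows is the piece to supply.
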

\begin{remark}
  \label{rem:no-heights}
  For a matrix $M\in\mathfrak{M}$,
  Proposition~\ref{prop:our-class-heights} does not hold. There are
  cases (see Example~\ref{ex:missing-heights} below) where the heights
  of the polynomials $\pb{p}_{1},\dots,\pb{p}_{N}$ are not increasing
  or coincide with the heights of elements in $\mathbb{S}$.
\end{remark}
There is a relavant property for the heights of the vector polynomials
considered in the hypothesis of
Proposition~\ref{prop:our-class-heights}. Indeed, it turns out that
the set of numbers
$h(\pb{p}_{1}),\dots,h(\pb{p}_{N}),
h(\pb{q}_{1})+nk_{1},\dots,h(\pb{q}_{n})+nk_{n} $ with
$k_{1},\dots,k_{n}\in\nats\cup\{0\}$ cover all nonnegative integers.
As a consequence of this and the fact that any nonzero vector
polynomial of height $h$ is decomposed in a linear combination of
vector polynomials of heights $0,\dots,h$
\cite[Prop.\,3.1]{MR3711273}, one has the following proposition (see
\cite[Cor.\,3.2]{MR3711273}).
\begin{proposition}
  \label{prop:decomposition-vector-polynomials}
  Let $M\in\widetilde{\mathfrak M}$ and $\pb{p}_{k}$ and $\pb{q}_{j}$
  be the corresponding vector polynomials given by \eqref{eq:p_q_poly}
  and \eqref{eq:q_q_poly}, respectively.  Any vector polynomial
  $\pb{r}$ admits the decomposition
  \begin{equation*}
    \pb{r}(z)=\sum_{k=1}^{N}a_{k}\pb{p}_{k}(z)+\sum_{j=1}^{n}s_{j}(z)\pb{q}_{j}(z)\,,
  \end{equation*}
  where $a_{k}\in\complex$ and $s_{j}(z)$ is a scalar polynomial for
  any $k=1,\dots,N$ and $j=1,\dots,n$.
\end{proposition}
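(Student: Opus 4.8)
The plan is to read off the decomposition from the description of $L_{2}(\reals,\sigma_{N}^{\mathscr T})$ as the space of all $n$-dimensional vector polynomials modulo the interpolation module $\mathbb{S}$, combined with the identification of the generators of $\mathbb{S}$ provided by Proposition~\ref{prop:q-generators}. The case $\pb r=\pb 0$ is trivial (all coefficients zero), so I assume $\pb r\neq\pb 0$ throughout.

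The first step peels off the part coming from the $\pb p_{k}$. By Proposition~\ref{prop:ortonormal-p-L2-finite} the vector polynomials $\pb p_{1},\dots,\pb p_{N}$ are orthonormal in $L_{2}(\reals,\sigma_{N}^{\mathscr T})$, and since that space has dimension $N$ (see the remark following Definition~\ref{def:spectral-function}) they form an orthonormal basis of it. Viewing $\pb r$ as an element of $L_{2}(\reals,\sigma_{N}^{\mathscr T})$ and expanding it in this basis, there are complex numbers $a_{1},\dots,a_{N}$ with $\pb r=\sum_{k=1}^{N}a_{k}\pb p_{k}$ in $L_{2}(\reals,\sigma_{N}^{\mathscr T})$. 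Consequently the vector polynomial $\pb r-\sum_{k=1}^{N}a_{k}\pb p_{k}$ has zero norm in $L_{2}(\reals,\sigma_{N}^{\mathscr T})$, and by the characterization of $\mathbb{S}$ recalled just before Definition~\ref{def:height} (each summand of that norm is nonnegative, so all vanish, which is exactly the interpolation condition \eqref{eq:interpolation-problem} for the data $\mu_{k}=\lambda_{k}$, $C^{k}=C_{\mathscr T}^{\lambda_{k}}$ out of which $\mathbb{S}$ is built) it belongs to $\mathbb{S}$.

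The second step uses the hypothesis $M\in\widetilde{\mathfrak M}$. By Proposition~\ref{prop:q-generators} and Remark~\ref{rem:generators-restatement}, each $\pb q_{j}$ coincides with the $j$-th generator $\pb g_{j}$ of $\mathbb{S}$ up to a nonzero constant, so $\mathbb M(\pb q_{j})=\mathbb M(\pb g_{j})$ for every $j$, and hence $\mathbb{S}=\mathbb M(\pb q_{1})\dotplus\dots\dotplus\mathbb M(\pb q_{n})$. Applying this to the element $\pb r-\sum_{k=1}^{N}a_{k}\pb p_{k}$ of $\mathbb{S}$ produces scalar polynomials $s_{1},\dots,s_{n}$ with $\pb r-\sum_{k=1}^{N}a_{k}\pb p_{k}=\sum_{j=1}^{n}s_{j}(z)\pb q_{j}(z)$, which rearranges to the asserted decomposition.

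The one place where $M\in\widetilde{\mathfrak M}$ (rather than merely $M\in\mathfrak M$) is needed, and the step I expect to carry the real content, is the appeal to Proposition~\ref{prop:q-generators}: by Remark~\ref{rem:q-marchenko-not} the $\pb q_{j}$ need not generate $\mathbb{S}$ for a general $M\in\mathfrak M$, so the final step genuinely depends on the extra structure of $\widetilde{\mathfrak M}$. Alternatively, one can follow the route foreshadowed before the statement: using Propositions~\ref{prop:our-class-heights} and \ref{prop:q-generators} one checks that the numbers $h(\pb p_{1}),\dots,h(\pb p_{N})$ together with the $h(\pb q_{j})+nk$ ($j\in G_{n}$, $k\in\nats\cup\{0\}$) are pairwise distinct and exhaust $\nats\cup\{0\}$; the distinctness of the residue classes of $h(\pb q_{1}),\dots,h(\pb q_{n})$ modulo $n$ rules out cancellations among heights of elements of $\mathbb{S}$, while a codimension count against the $N$-dimensional space $L_{2}(\reals,\sigma_{N}^{\mathscr T})$ fixes how many heights are not of the form $h(\pb q_{j})+nk$. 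One then invokes that the vector polynomials of height at most $h(\pb r)$ form an $(h(\pb r)+1)$-dimensional space having the relevant members of $\{\pb p_{k}\}\cup\{z^{k}\pb q_{j}\}$ as a basis (cf.\ \cite[Prop.\,3.1, Cor.\,3.2]{MR3711273}). I would write up the first, shorter route.
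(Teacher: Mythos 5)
Your first route is correct, and it is genuinely different from the argument the paper actually uses. The paper obtains Proposition~\ref{prop:decomposition-vector-polynomials} from the height bookkeeping: the numbers $h(\pb{p}_{1}),\dots,h(\pb{p}_{N})$ together with $h(\pb{q}_{j})+nk$ exhaust $\nats\cup\{0\}$ without repetition, and one then decomposes an arbitrary $\pb{r}$ by downward induction on $h(\pb{r})$, peeling off one basis element per height (this is your second, ``alternative'' route, delegated to \cite[Prop.\,3.1, Cor.\,3.2]{MR3711273}). Your primary route instead works in $L_{2}(\reals,\sigma_{N}^{\mathscr T})$: expand $\pb{r}$ in the orthonormal basis $\{\pb{p}_{k}\}_{k=1}^{N}$ of that $N$-dimensional space, observe that the remainder has zero norm and hence---since each summand $\abs{(C_{\mathscr T}^{\lambda_{k}})^{*}\pb{f}(\lambda_{k})}^{2}$ in the norm is nonnegative---satisfies \eqref{eq:interpolation-problem}, so it lies in $\mathbb{S}$, and then invoke Proposition~\ref{prop:q-generators} together with $\mathbb{S}=\mathbb{M}(\pb{g}_{1})\dotplus\dots\dotplus\mathbb{M}(\pb{g}_{n})$ to write it as $\sum_{j}s_{j}\pb{q}_{j}$. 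Both arguments ultimately rest on imported results (yours on \cite[Thm.\,3.1, Prop.\,3.3]{MR3711273}, the paper's on \cite[Prop.\,3.1, Cor.\,3.2]{MR3711273}), but yours is shorter and makes the role of the hypothesis $M\in\widetilde{\mathfrak M}$ completely transparent---it enters only through Proposition~\ref{prop:q-generators}, consistent with Remarks~\ref{rem:q-marchenko-not} and \ref{rem:no-heights} explaining the failure for general $M\in\mathfrak M$. What the height route buys in exchange is explicit control of which indices $k$ and which degrees of $s_{j}$ actually occur in the decomposition; that quantitative information is what gets reused in the inductions behind Theorems~\ref{thm:decomposition-marchenko} and \ref{thm:moments-finite}, so the paper's choice is not gratuitous even though your proof of the proposition as stated is complete.
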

\begin{example}
  \label{ex:missing-heights}
  Consider Example~\ref{ex:simple-difference-equation} again.
  \begin{enumerate}
  \item If $m_{25}=0$ and $m_{35}\ne 0$, then
    $h(\pb{p}_{5})=h(\pb{q}_{1})=6$ (see Remark~\ref{rem:no-heights})
    and $h(\pb{q}_{2})=9$ which is in the same equivalence class of
    $\integers/3\integers$ as $h(\pb{q}_{1})$ (see
    Remark~\ref{rem:q-marchenko-not}).
  \item If $m_{25}\ne 0$ and $m_{35}=0$, then
    $h(\pb{p}_{5})=h(\pb{p}_{6})=h(\pb{q}_{1})=6$,
    $h(\pb{p}_{7})=h(\pb{q}_{2})=9$ $h(\pb{q}_{3})=12$. See
    Remarks~\ref{rem:q-marchenko-not} and \ref{rem:no-heights} and
    note that in this case $h(\pb{q}_{1})$, $h(\pb{q}_{2})$ and
    $h(\pb{q}_{3})$ all fall in the same equivalence class of
    $\integers/3\integers$.
  \item If $m_{25}=0$ and $m_{35}= 0$, then the set of numbers
    $h(\pb{p}_{1}),\dots,h(\pb{p}_{7}),
    h(\pb{q}_{1})+3k_{1},\dots,h(\pb{q}_{3})+3k_{3} $ with
    $k_{1},\dots,k_{3}\in\nats\cup\{0\}$ cover all nonnegative
    integers. Here $h(\pb{q}_{1})$, $h(\pb{q}_{2})$ and
    $h(\pb{q}_{3})$ fall in the three different equivalence classes of
    $\integers/3\integers$, but the sequence
    $h(\pb{p}_{1}),\dots,h(\pb{p}_{7})$ is not increasing.
  \end{enumerate}
\end{example}

\section{Spectral functions for the class $\mathfrak M$}
\label{sec:infin-march-slav}

The vector polynomials given in Definition~\ref{def:p_q_poly} for a
matrix $M\in\mathfrak{M}$ can be used in a decomposition similar to
the one given in
Proposition~\ref{prop:decomposition-vector-polynomials} which was
restricted to the class $\widetilde{\mathfrak{M}}$. To prove this
decomposition for the general case, the following simple observation
is needed.
\begin{lemma}
  \label{lem:simple-heights}
  Let $M\in\mathfrak{M}$ and $N>n$. Consider that $n$-dimensional
  vector polynomials $\pb{p}_{k}$ given by \eqref{eq:p_q_poly}.  For
  any given $h\in\nats\cup\{0\}$, there are $l\in\nats\cup\{0\}$ and
  $m\in G_{n}$ (see last paragraph of Section~\ref{sec:preliminaries})
  such that
  \begin{equation*}
    h(z^{l}\pb{p}_{m}(z))=h\,.
  \end{equation*}
\end{lemma}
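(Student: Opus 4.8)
The plan is to analyze the heights $h(\pb{p}_m)$ for $m\in G_n$ and show that, as $m$ ranges over $G_n$, these $n$ values already realize all $n$ residue classes modulo $n$; multiplying by a suitable power $z^l$ then hits any prescribed nonnegative integer. First I would unwind the recursive construction of $\pb{p}_k$ described around \eqref{eq:p_relation}: for the boundary indices $k=1,\dots,n$ one has $\pb{p}_k(z)=\sum_{i\le k}t_{ik}\widetilde\delta_i$, so $\pb{p}_k$ is a \emph{constant} vector polynomial (degree $0$ in each component) whose top nonzero component is the $k$-th one, with coefficient $t_{kk}\ne 0$ since $\mathscr T$ is upper-triangular and invertible. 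By Definition~\ref{def:height}, this gives $h(\pb{p}_k)=k-1$ for each $k\in G_n$: the components $r_{k+1},\dots,r_n$ vanish (contributing $-\infty$), the component $r_k$ is a nonzero constant contributing $n\cdot 0+(k-1)=k-1$, and the lower components $r_1,\dots,r_{k-1}$, being also constants, contribute at most $k-2$. Hence $\{h(\pb{p}_1),\dots,h(\pb{p}_n)\}=\{0,1,\dots,n-1\}$, which is precisely a complete set of residues modulo $n$.

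Next, given an arbitrary $h\in\nats\cup\{0\}$, write $h=nl+r$ with $l\in\nats\cup\{0\}$ and $r\in\{0,\dots,n-1\}$ by the division algorithm, and pick $m\in G_n$ with $h(\pb{p}_m)=m-1=r$ from the previous step. It remains to check that multiplication by $z^l$ shifts the height by exactly $nl$, i.e.\ $h(z^l\pb{p}_m)=h(\pb{p}_m)+nl$. This is immediate from Definition~\ref{def:height}: each component $r_j$ of $\pb{p}_m$ gets its degree raised by $l$ (or stays $-\infty$ if it was zero, contributing nothing in either case), so $n\deg(z^l r_j)+j-1=n\deg(r_j)+j-1+nl$ whenever $r_j\ne 0$, and the maximum over $j$ of these quantities is just the old maximum plus $nl$. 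Therefore $h(z^l\pb{p}_m)=(m-1)+nl=r+nl=h$, as required.

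The argument is essentially routine once the key observation $h(\pb{p}_k)=k-1$ for $k\in G_n$ is in place; the only point needing mild care is that the recursion in \eqref{eq:p_relation} does not disturb the boundary polynomials $\pb{p}_1,\dots,\pb{p}_n$ — they are defined directly by the initial conditions $\inner{\delta_i}{\psi^j_{\mathscr T}(z)}=t_{ji}$ — so that their heights are genuinely determined by the invertible upper-triangular structure of $\mathscr T$ alone and do not depend on the (possibly irregular, by Remark~\ref{rem:no-heights}) behaviour of $\pb{p}_{n+1},\dots,\pb{p}_N$ for matrices in $\mathfrak M\setminus\widetilde{\mathfrak M}$. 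I expect the main obstacle, if any, to be purely bookkeeping: confirming that no cancellation in the lower components of $\pb{p}_k$ can push its height below $k-1$ (it cannot, since the $r_k$ component alone already contributes $k-1$ and $t_{kk}\ne 0$), and that the height is attained by that component rather than exceeded by a lower one (it is not, since all components are constants for $k\le n$).
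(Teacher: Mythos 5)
Your proposal is correct and takes essentially the same route as the paper's proof: both rest on the observation that $h(\pb{p}_k)=k-1$ for $k\in G_n$ (a consequence of \eqref{eq:psi_tau} and the upper-triangular invertibility of $\mathscr T$) combined with the shift identity $h(z^{l}\pb{r})=h(\pb{r})+nl$, which the paper simply cites from \cite{MR3389906} while you verify it directly from Definition~\ref{def:height}. The extra bookkeeping you supply (no cancellation below the $k$-th component, boundary polynomials unaffected by the recursion) is sound but not a departure from the paper's argument.
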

\begin{proof}
  Write $h=nl+k-1$, where $l\in\nats\cup\{0\}$ and $k\in G_{n}$.  By
  \eqref{eq:p_q_poly}, $ h(\pb{p}_{k}(z))=k-1$ for $k\in G_{n}$. On
  the other hand, it follows from \cite[Eq. 3]{MR3389906} that
  \begin{equation*}
     h(z^{l}\pb{p}_{k}(z))=nl+k-1\quad\text{for any } k\in G_{n}\,.
  \end{equation*}
\end{proof}
The following assertion plays an important role in this section.
\begin{theorem}
  \label{thm:decomposition-marchenko}
  Let $M\in\mathfrak{M}$ and $N>n$. For any $l\in\nats\cup\{0\}$ and
  $m\in G_{n}$, there are $a_{k}\in\complex$ ($k\in G_{N}$) and scalar
  polynomials $S_{j}$ ($j\in G_{n}$) such that
 \begin{equation*}
   z^{l}\pb{p}_{m}(z)=\sum_{k=1}^{N}a_{k}\pb{p}_{k}(z)+\sum_{j=1}^{n}S_{j}(z)\pb{q}_{j}(z)\,,
 \end{equation*}
 where $\pb{p}_{k}$ and $\pb{q}_{j}$ are given by \eqref{eq:p_q_poly}
 and \eqref{eq:q_q_poly}.
\end{theorem}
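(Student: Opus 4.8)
The plan is to reduce the general statement to the already-established decomposition for the subclass $\widetilde{\mathfrak M}$ (Proposition~\ref{prop:decomposition-vector-polynomials}) by exploiting the unitary equivalence $\mathcal M_{N}\cong\widetilde{\mathcal M}_{N}$ furnished by Theorem~\ref{thm:equivalence-classes}. First I would fix $M\in\mathfrak M$, choose an arbitrary boundary matrix $\mathscr T$, and form the spectral function $\sigma_{N}^{\mathscr T}$; by Theorem~\ref{thm:spectral-measure-finite} it satisfies (I)--(III), and by Theorem~\ref{thm:finite-our-class-inversion} there are an upper-triangular invertible $\widetilde{\mathscr T}$ and a matrix $\widetilde M\in\widetilde{\mathfrak M}$ with $\sigma_{N}^{\mathscr T}=\widetilde\sigma_{N}^{\widetilde{\mathscr T}}$ and $n_{M}=n_{\widetilde M}=n$. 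The key point is that the space $L_{2}(\reals,\sigma_{N}^{\mathscr T})=L_{2}(\reals,\widetilde\sigma_{N}^{\widetilde{\mathscr T}})$ is \emph{literally the same} Hilbert space of $n$-dimensional vector polynomials, and the interpolation problem \eqref{eq:interpolation-problem} that defines $\mathbb S$ depends only on the $\mu_{k}$ and $C^{k}$, hence is the same for both matrices. Therefore the equivalence class of the zero function in $L_{2}(\reals,\sigma_{N}^{\mathscr T})$ coincides with $\mathbb S$ in either description.

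Next I would apply Proposition~\ref{prop:decomposition-vector-polynomials} to $\widetilde M$: for the vector polynomial $z^{l}\pb p_{m}(z)$ (which is an honest $n$-dimensional vector polynomial, whatever its provenance) there exist $a_{k}\in\complex$ and scalar polynomials $\widetilde S_{j}$ with
\begin{equation*}
  z^{l}\pb p_{m}(z)=\sum_{k=1}^{N}a_{k}\widetilde{\pb p}_{k}(z)+\sum_{j=1}^{n}\widetilde S_{j}(z)\widetilde{\pb q}_{j}(z)\,,
\end{equation*}
where $\widetilde{\pb p}_{k}$, $\widetilde{\pb q}_{j}$ are the polynomials \eqref{eq:p_q_poly}--\eqref{eq:q_q_poly} associated to $\widetilde M$ and $\widetilde{\mathscr T}$. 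The remaining task is to transfer this identity back to the $\pb p_{k},\pb q_{j}$ of $M$. Here I would use that $\{\pb p_{k}\}_{k=1}^{N}$ and $\{\widetilde{\pb p}_{k}\}_{k=1}^{N}$ are both orthonormal bases of the (common) $N$-dimensional space $L_{2}(\reals,\sigma_{N}^{\mathscr T})$ by Proposition~\ref{prop:ortonormal-p-L2-finite}, and that the $\pb q_{j}$ (resp. $\widetilde{\pb q}_{j}$) all lie in the zero class $\mathbb S$ by Proposition~\ref{prop:zeros-finite}. Consequently, modulo $\mathbb S$, both families span the same $N$-dimensional quotient, so $z^{l}\pb p_{m}$ is congruent modulo $\mathbb S$ to a (unique) linear combination $\sum_{k}b_{k}\pb p_{k}$; and a vector polynomial lying in $\mathbb S$ is, by the generator decomposition $\mathbb S=\mathbb M(\pb g_{1})\dotplus\dots\dotplus\mathbb M(\pb g_{n})$, a sum $\sum_{j}s_{j}(z)\pb g_{j}(z)$ of scalar-polynomial multiples of the generators. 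Since Proposition~\ref{prop:decomposition-vector-polynomials} for $\widetilde M$ already shows that the $\widetilde{\pb q}_{j}$ — equivalently, the generators — span $\mathbb S$ as a module in exactly this way, and the $\pb q_{j}$ of $M$ also lie in $\mathbb S$, I must check that the $\pb q_{j}$ still suffice to produce every element of $\mathbb S$ as $\sum_{j}S_{j}(z)\pb q_{j}(z)$.

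That last verification is the main obstacle, because Remarks~\ref{rem:q-marchenko-not} and \ref{rem:no-heights} warn that for $M\in\mathfrak M$ the $\pb q_{j}$ need \emph{not} be the generators of $\mathbb S$ and their heights may collide in $\integers/n\integers$. The way around this is the height-covering statement quoted just before Proposition~\ref{prop:decomposition-vector-polynomials}, reproved for $M\in\mathfrak M$: I would show that the numbers $h(\pb p_{1}),\dots,h(\pb p_{N}),h(\pb q_{1})+nk_{1},\dots,h(\pb q_{n})+nk_{n}$ ($k_{i}\ge 0$) still cover all of $\nats\cup\{0\}$. Granting this, an induction on height — using \cite[Prop.\,3.1]{MR3711273} that any vector polynomial of height $h$ is a linear combination of vector polynomials of heights $0,\dots,h$, together with Lemma~\ref{lem:simple-heights} to realize each height below via some $z^{l}\pb p_{m}$ — lets me peel off leading-height terms of $z^{l}\pb p_{m}(z)$ one at a time, subtracting either a scalar multiple of some $\pb p_{k}$ or a monomial multiple $z^{a}\pb q_{j}$, until nothing remains. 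Collecting the subtracted terms gives the asserted $a_{k}$ and $S_{j}$. The cleanest organization is probably to run this induction directly, bypassing the unitary-equivalence detour entirely; the equivalence argument of the first two paragraphs is then available as a sanity check that such a decomposition must exist, while the height induction supplies the explicit form with the $\pb q_{j}$ of $M$ rather than the generators.
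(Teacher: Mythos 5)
There is a genuine gap, and it sits exactly where you put the weight of the argument. Your plan hinges on proving that, for $M\in\mathfrak M$, the numbers $h(\pb{p}_{1}),\dots,h(\pb{p}_{N}),h(\pb{q}_{1})+nk_{1},\dots,h(\pb{q}_{n})+nk_{n}$ still cover all of $\nats\cup\{0\}$. That claim is false, and the paper's own Example~\ref{ex:missing-heights} refutes it: in case (2) there ($m_{25}\neq 0$, $m_{35}=0$, $n=3$) the heights of $\pb{p}_{1},\dots,\pb{p}_{7}$ are $0,1,2,3,6,6,9$ while $h(\pb{q}_{1})=6$, $h(\pb{q}_{2})=9$, $h(\pb{q}_{3})=12$ all lie in the class $0$ of $\integers/3\integers$, so the heights $4,5,7,8,\dots$ are never attained; case (1) fails similarly. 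This is precisely the content of Remarks~\ref{rem:q-marchenko-not} and \ref{rem:no-heights}: the covering property is a feature of $\widetilde{\mathfrak M}$, not of $\mathfrak M$. Your first route (transfer through the unitary equivalence with some $\widetilde M\in\widetilde{\mathfrak M}$) stalls on the obstacle you yourself identified — an element of $\mathbb S$ need not be expressible as $\sum_{j}S_{j}\pb{q}_{j}$ with the $\pb{q}_{j}$ of $M$, since these need not generate $\mathbb S$ as a module — and the fix you propose for it is the same false covering lemma. Note also that the theorem deliberately asserts the decomposition only for the special polynomials $z^{l}\pb{p}_{m}$, not for arbitrary vector polynomials as in Proposition~\ref{prop:decomposition-vector-polynomials}; for general $M\in\mathfrak M$ the arbitrary-polynomial version is exactly what fails.

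The paper's proof avoids heights entirely and is an elementary induction on $l$. For $l=0$ the claim is trivial, and for $l=1$ it is read off from \eqref{eq:q_relation_p} and \eqref{eq:p_relation}: the row equations of $M_{N}-zI$ give $z\pb{p}_{k}=\sum_{i=1}^{N}m_{ki}\pb{p}_{i}$ for $k\in K^{\perp}$, and $z\pb{p}_{k}=\sum_{i=1}^{N}m_{ki}\pb{p}_{i}+\pb{q}_{\gamma(k)}$ for $k\in K$, where $k$ is the $\gamma(k)$-th element of $K$. Assuming the decomposition for $z^{l}\pb{p}_{m}$, one multiplies it by $z$, splits $\sum_{k}a_{k}z\pb{p}_{k}$ over $K^{\perp}$ and $K$, and substitutes these two identities; the result is again of the required form, with new coefficients $\tilde a_{k}$ and polynomials $a_{\gamma^{-1}(j)}+zS_{j}(z)$. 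If you want to keep your write-up, replace the height induction by this recurrence argument; the unitary-equivalence detour is then unnecessary.
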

\begin{proof}
  The assertion follows trivially for $l=0$. When $l=1$, it is a
  consequence of \eqref{eq:q_relation_p} and \eqref{eq:p_relation}. If
  the statement holds for $l$, then
  \begin{align*}
    z^{l+1}\pb{p}_{m}(z)&=\sum_{k=1}^{N}a_{k}z\pb{p}_{k}(z)+
                          \sum_{j=1}^{n}zS_{j}(z)\pb{q}_{j}(z)\\
                        &=\left(\sum_{k\in K^{\perp}} +\sum_{k\in K}  \right)a_{k}z\pb{p}_{k}(z)+
                          \sum_{j=1}^{n}zS_{j}(z)\pb{q}_{j}(z)\,.
  \end{align*}
  Observe that, due to \eqref{eq:p_relation},
  \begin{equation*}
    \sum_{k\in K^{\perp}}a_{k}z\pb{p}_{k}(z)=
    \sum_{k\in K^{\perp}}a_{k}\sum_{i=1}^{N}m_{ki}\pb{p}_{i}(z)\,,
  \end{equation*}
  where $m_{ki}$ are the entries of $M_{N}$, while
  \begin{equation*}
    \sum_{k\in K}a_{k}z\pb{p}_{k}(z)=
    \sum_{k\in K}a_{k}\left(\sum_{i=1}^{N}m_{ki}\pb{p}_{i}(z)+\pb{q}_{\gamma(k)}\right)\,,
  \end{equation*}
  where $\gamma(k)$ is such that $k$ is the $\gamma(k)$-th element
  of $K$. Thus,
   \begin{align*}
     \left(\sum_{k\in K^{\perp}} +\sum_{k\in K}\right)a_{k}z\pb{p}_{k}(z)
     &+\sum_{j=1}^{n}zS_{j}(z)\pb{q}_{j}(z)\\ &=\sum_{k=1}^{N}\tilde{a}_{k}\pb{p}_{k}(z) +
     % \sum_{k\in K}a_{k}\pb{q}_{\gamma(k)}+
                                                \sum_{j=1}^{n}\left(a_{\gamma^{-1}(j)}+zS_{j}(z)\right)\pb{q}_{j}(z)
  \end{align*}
\end{proof}
For a fixed $M\in\mathfrak{M}$ and $N>n$, consider the corresponding
spectral function $\sigma_{N}^{\mathscr{T}}(t)$. Let
$\{\pb{e}_k\}_{k=1}^n$ be the canonical basis in $\complex^n$, i.e.,
\begin{equation}
  \label{eq:canonical-basis-CN}
  \pb{e}_1=\left(
    \begin{tiny}
      \begin{matrix}
        1\\
        0\\
        \vdots\\
        0
      \end{matrix}
    \end{tiny}
\right),\pb{e}_2=\left(
    \begin{tiny}
      \begin{matrix}
        0\\
        1\\
        \vdots\\
        0
      \end{matrix}
    \end{tiny}
\right),\dots,\pb{e}_n=\left(
    \begin{tiny}
      \begin{matrix}
        0\\
        \vdots\\
        0\\
        1
      \end{matrix}
    \end{tiny}
\right)\,.
\end{equation}
Define a family of vector
polynomials for $k\in\nats$ and $i=1,\dots,n$ as follows
\begin{equation}
    \label{eq:canonical-vector-polynomial}
    \pb{e}_{nk+i}(z):=z^k\pb{e}_i\,.
  \end{equation}
  These vector polynomials give an expression for the entries of the
  $(k+l)$-th matrix moment, $S_{k+l}(N)$, of the function
  $\sigma_{N}^{\mathscr{T}}$. Indeed,
\begin{align}
  \inner{\pb{e}_{i}}{S_{k+l}(N)\pb{e}_{j}}&=
     \inner{\pb{e}_{i}}{\int_{\reals}t^{k+l}d\sigma_{N}^{\mathscr T}(t)\pb{e}_{j}}\\
&=\inner{\pb{e}_{nk+i}(t)}{\pb{e}_{nl+j}(t)}_{L_2(\reals,\sigma_N^{\mathscr{T}})}\,.
\end{align}

\begin{theorem}
  \label{thm:moments-finite}
  Let $M\in\mathfrak{M}$. For any $m\in\nats$ there is $N\in\nats$
  such that
  \begin{equation*}
    \int_{\reals}t^k\,d\sigma_N^{\mathscr{T}}(t)=\int_{\reals}t^k\,d\sigma_{N^{\prime}}^{\mathscr{T}}(t)
  \end{equation*}
  for any $N^{\prime}\geq N$ and $k=0,1,\dots,2m-1,2m$, where
  $\sigma_{N}^{\mathscr T}$ and $\sigma_{N^{\prime}}^{\mathscr T}$ are
  the spectral functions of $M_{N}$ and $M_{N^{\prime}}$, respectively.
\end{theorem}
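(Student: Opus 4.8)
The plan is to reduce the equality of the matrix moments $\int_{\reals}t^{k}\,d\sigma_N^{\mathscr T}(t)$ for $k\le 2m$ to a statement about the finite truncations $\mathcal M_N$, and then to exploit the fact that $M$ is a band matrix, so that a bounded number of applications of $\mathcal M$ (equivalently, of the recursion producing the decomposition of Theorem~\ref{thm:decomposition-marchenko}) has a ``reach'' that does not depend on $N$.

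First I would rewrite each moment. Given $k\in\{0,\dots,2m\}$, write $k=k_1+k_2$ with $k_1=\lceil k/2\rceil\le m$ and $k_2=\lfloor k/2\rfloor\le m$; then the displayed identity preceding the theorem gives $\inner{\pb e_i}{\bigl(\int_{\reals}t^{k}\,d\sigma_N^{\mathscr T}(t)\bigr)\pb e_j}=\inner{z^{k_1}\pb e_i}{z^{k_2}\pb e_j}_{L_2(\reals,\sigma_N^{\mathscr T})}$. Since $\mathscr T$ is upper triangular and invertible, the constant vectors $\pb e_1,\dots,\pb e_n$ are an $N$-independent linear combination of $\pb p_1,\dots,\pb p_n$ (in fact $(\pb p_1\ \cdots\ \pb p_n)=\mathscr T$, by \eqref{eq:psi_tau} and \eqref{eq:p_q_poly}), so Theorem~\ref{thm:decomposition-marchenko} expresses $z^{k_1}\pb e_i$ as $\sum_{r=1}^{N}A_r^{(k_1,i)}\pb p_r+\sum_{j=1}^{n}S_j^{(k_1,i)}\pb q_j$, with coefficients $A_r^{(k_1,i)}$ to be analyzed. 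Using that $\{\pb p_r\}_{r=1}^{N}$ is orthonormal in $L_2(\reals,\sigma_N^{\mathscr T})$ (Proposition~\ref{prop:ortonormal-p-L2-finite}) and that each $\pb q_j$ vanishes there (Proposition~\ref{prop:zeros-finite}), the inner product collapses to $\sum_{r}\overline{A_r^{(k_1,i)}}\,A_r^{(k_2,j)}$. (Equivalently, one may bypass Theorem~\ref{thm:decomposition-marchenko} and use Corollary~\ref{cor:isometry-finite} together with Proposition~\ref{prop:multiplication-operator} to see that the moment is an $N$-independent combination, with coefficients from $\mathscr T^{-1}$, of the numbers $\inner{\delta_s}{\mathcal M_N^{k}\delta_{s'}}_{l_2(G_N)}$ for $s,s'\in G_n$.)

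It then remains to show that the coefficients $A_r^{(k_1,i)}$ are eventually independent of $N$ and, for every $N$, vanish outside a fixed finite set of indices $r$. They are produced by the recursion in the proof of Theorem~\ref{thm:decomposition-marchenko}: one step replaces a term $\pb p_t$ by $\sum_i m_{ti}\pb p_i$ (plus possibly $\pb q_{\gamma(t)}$) using row $t$ of $M$. Because $M$ is a band matrix (Definition~\ref{def:class}, in particular the diagonal tail of item~(2)), there is $W=W(M)$ with $m_{ti}=0$ whenever $|t-i|>W$; hence, starting from $\{1,\dots,n\}$ and iterating at most $m$ times, the recursion only meets indices in $R:=\{1,\dots,n+mW\}$, a set not depending on $N$. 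Once $N\ge n+mW$ and, moreover, $N$ exceeds the row-edge columns of all rows indexed by $R$, every $t\in R$ lies in $K^{\perp}$, the partition $K\sqcup K^{\perp}$ restricted to $R$ is the same for all such $N$, and the recursion returns the very same coefficients $A_r^{(k_1,i)}$ (supported in $R$); equivalently, $\mathcal M_N^{k}\delta_{s'}=\mathcal M^{k}\delta_{s'}$ because no truncation occurs in any of the $\le 2m$ applications. Picking $N$ beyond this threshold uniformly over $s,s'\in G_n$ and $k\le 2m$ yields the desired $N$ and proves the theorem. The point requiring care — and the main obstacle — is exactly the $N$-dependence of the set $K$, i.e.\ of the last rows of $M_N$: one has to verify that the bounded, $N$-independent reach of at most $m$ multiplications by $z$ keeps the recursion away from those rows.
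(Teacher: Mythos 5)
Your overall strategy coincides with the paper's: reduce the $(i,j)$ entry of the $k$-th moment to $\inner{z^{k_1}\pb p_i}{z^{k_2}\pb p_j}_{L_2(\reals,\sigma_N^{\mathscr T})}$ via $\mathscr T$, decompose $z^{l}\pb p_i$ into $\sum a_k\pb p_k+\sum S_j\pb q_j$ by iterating the row equations, and let Propositions~\ref{prop:ortonormal-p-L2-finite} and \ref{prop:zeros-finite} collapse the inner product to $\sum_k\overline{a_k}a_k'$. You also correctly single out the crux, namely the $N$-dependence of $K$. But your resolution of that crux fails. First, the bandedness claim ``there is $W$ with $m_{ti}=0$ whenever $|t-i|>W$'' is not a consequence of Definition~\ref{def:class}: since $\card K=n$ for every $N>n$ while the tail of condition (2) accounts for only $k_0-j_0$ elements of $K$ that move with $N$, whenever $n>k_0-j_0$ there must be $n-k_0+j_0$ rows possessing \emph{no} row-edge entry at all, i.e.\ rows with infinitely many nonzero entries (these are precisely the ``degenerations''; see Remark~\ref{rem:subclass} and the columns continuing indefinitely in Fig.~\ref{fig:gral-example}). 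Second, and fatally for your argument, those rows lie among the first rows of the matrix (they belong to $K(N)$ for \emph{every} $N$, hence already to $K(n+1)\subset G_{n+1}$), exactly where your recursion operates; so the assertion that for large $N$ ``every $t\in R$ lies in $K^{\perp}$'' is false — no choice of $N$ ever moves a degenerate row into $K^{\perp}$ — and likewise $\mathcal M_N^{k}\delta_{s'}=\mathcal M^{k}\delta_{s'}$ fails whenever $\delta_{s'}$ meets such a row or column.

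The correct mechanism, which is what the paper's proof implements, is the opposite of ``keeping the recursion away from $K$'': one splits $K(N)$ into its $N$-\emph{independent} part (the $n-k_0+j_0$ permanently degenerate rows, whose row equations contribute the zero-norm polynomials $\pb q_1,\dots,\pb q_{n-k_0+j_0}$ and are therefore harmless in the inner product, cf.\ \eqref{eq:invariant-decomposition2}) and its moving part, the tail rows $\{N-(k_0-j_0)+1,\dots,N\}$. Only the latter must be avoided, and they are, because each multiplication by $z$ advances the largest $\pb p$-index by at most $k_0-j_0$ through the tail equations \eqref{eq:tail-equations}; this yields the explicit threshold $N\ge n+l(k_0-j_0)$ of \eqref{eq:invariant-decomposition}, in place of your $n+mW$. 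To repair your write-up you would need (i) to drop the bandwidth bound and instead control the reach of the recursion using the tail structure of condition (2), and (ii) to argue that the $\pb q$-terms generated by the permanently degenerate rows — which the recursion cannot avoid — have $N$-independent indices and vanish in $L_2(\reals,\sigma_N^{\mathscr T})$, rather than claiming they do not occur.
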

\begin{proof}
  Independently of the value of $N>n$, for $i,j=1,\dots,n$, one has
  \begin{equation}
    \label{eq:moments-vector-polynomials}
    \int_{\reals}t^{k+l}\,d\sigma_N^{\mathscr{T}}(t)=
    \inner{t^{k}\pb{e}_{i}}{t^{l}\pb{e}_{j}}_{L_2(\reals,\sigma_N^{\mathscr{T}})}=
    \inner{\mathscr T^{-1}t^{k}\pb{p}_{i}}{\mathscr T^{-1}t^{l}\pb{p}_{j}}_{L_2(\reals,\sigma_N^{\mathscr{T}})}
\end{equation}
since $\mathscr{T}\pb{e}_{i}=\pb{p}_{i}$ for $i\in G_{n}$ (which in
turn follows from \eqref{eq:psi_tau} and \eqref{eq:p_q_poly}).

Thus, the goal is to find a decomposition involving the vector polynomials
given in Definition~\ref{def:p_q_poly} for 
$z^{l}\pb{p}_{i}$ with $l\in\nats$ and $i\in G_{n}$. For the whole
infinite matrix $M$ and $i\in G_{j_{0}-1}$, one obtains from
\eqref{eq:q_relation_p} and \eqref{eq:p_relation} that
\begin{equation}
  \label{eq:invariant-decomposition2}
  z\pb{p}_{i}(z)=\sum_{k=1}^{k_{0}-1 }a_{k}^{i}\pb{p}_{k}(z)+
  \sum_{j=1}^{n-k_{0}+j_{0}}S_{j}^{i}(z)\pb{q}_{j}(z)\,.
\end{equation}
Also, directly from Definition~\ref{def:class}, one verifies that, for
any $m\in\nats\cup\{0\}$,
\begin{equation}
  \label{eq:tail-equations}
  z\pb{p}_{j_{0}+m}(z)=\sum_{j=1}^{k_{0}-1+m}m_{j_{0}+m,j}\pb{p}_{j}(z)+
  \pb{m}_{j_{0}+m,k_{0}+m}\pb{p}_{k_{0}+m}\,,
\end{equation}
where we have used the notation of Definition~\ref{def:class} for the
entries of $M$ and denote the edge entry in bold typeface.

In the case $j_{0}<n$, one recurs to
\eqref{eq:invariant-decomposition2} and \eqref{eq:tail-equations} to
find $z\pb{p}_{i}$ for $i\in G_{j_{0}-1}$ and
$i\in G_{n}\setminus G_{j_{0}-1}$, respectively (in
\eqref{eq:tail-equations} put $m=0,\dots,n-j_{0}$). If $j_{0}\ge n$,
then one only uses \eqref{eq:invariant-decomposition2} to determine
$z\pb{p}_{i}$ for $i\in G_{n}$.

As it was done in the proof of
Theorem~\ref{thm:decomposition-marchenko} an induction argument, which
involves \eqref{eq:invariant-decomposition2} and 
\eqref{eq:tail-equations}, yields
\begin{equation}
  \label{eq:invariant-decomposition}
  z^{l}\pb{p}_{i}(z)=\sum_{k=1}^{i+l(k_{0}-j_{0})}a_{k}^{i}(l)\pb{p}_{k}(z)+
  \sum_{j=1}^{n-k_{0}+j_{0}}S_{j}^{i}(l,z)\pb{q}_{j}(z)\,,\quad
  i\in G_{n}
\end{equation}
for any $l\in\nats$, where we have indicated the dependence on $l$ of
the coefficients. If one considers the finite submatrix $M_{N}$, then
\eqref{eq:invariant-decomposition} remains invariant as long as
$N\ge n+l(k_{0}-j_{0})$. This invariance is a consequence of the fact
that all equations in \eqref{eq:tail-equations} correspond to rows of
the matrix $M$ on the tail (see (2) of
Definition~\ref{def:class}).

Now, it follows from \eqref{eq:moments-vector-polynomials},
\eqref{eq:invariant-decomposition2},
\eqref{eq:invariant-decomposition}, and
Propositions~\ref{prop:ortonormal-p-L2-finite} and
\ref{prop:zeros-finite} that the entries of the moment matrix
$S_{s}(N)$, with $s\in G_{2l}$ and $N\ge n+l(k_{0}-j_{0})$, are given
by $\mathscr T$ and the coefficients $a_{k}^{i}(l)$, where
$k\in G_{n+l(k_{0}-j_{0})}$ and $i\in G_{n}$.
\end{proof}

Having Theorem~\ref{thm:moments-finite} at our disposal, the following
assertion is a direct consequence of \cite[Lems.\,3.1,\,3.2 and
Prop.\,3.1]{MR3543793}. The proof is omitted here since the reasoning
repeats the one used in \cite{MR3543793}, which in turn relies on
generalizations of Helly's theorems (see
\cite[Thms.\,4.3,\,4.4]{MR0322542}) and the argumentation found in
\cite[Sec.\,2.1]{MR0184042}. Note, however, that
Theorem~\ref{thm:moments-finite} is more complex than its counterpart
in the more simple setting treated in \cite{MR3543793}.

\begin{theorem}
  \label{thm:helly-theorems}
  Let $M\in\mathfrak{M}$ and $l\in\nats$ and consider the spectral
  functions $\sigma_{N}^{\mathscr T}$ of $M_{N}$ with
  $N\ge n+l(k_{0}-j_{0})$ for a certain boundary matrix $\mathscr T$
  (see Section~\ref{sec:spectr-meas-subm}), then there exists a
  subsequence $\{\sigma_{N_{j}}^{\mathscr T}\}_{j=1}^{\infty}$
  converging pointwise to a matrix-valued function $\sigma$ such that
\begin{equation}
  \label{eq:infinite-moments}
  \int_{\reals}t^{k}d\sigma_{N_{j}}^{\mathscr T}(t)=
  \int_{\reals}t^{k}d\sigma(t)
\end{equation}
for any nonnegative integer $k\le 2l$.
\end{theorem}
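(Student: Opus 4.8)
The plan is to establish Theorem~\ref{thm:helly-theorems} by combining the moment stabilization provided by Theorem~\ref{thm:moments-finite} with a compactness argument of Helly type applied to the family of matrix-valued step functions $\{\sigma_N^{\mathscr T}\}$. First I would record the two a priori bounds that make Helly's selection principle applicable: (i) each $\sigma_N^{\mathscr T}$ is nondecreasing and left-continuous by property (I) of Theorem~\ref{thm:spectral-measure-finite}, and (ii) the total variation of $\sigma_N^{\mathscr T}$ on $\reals$ is controlled uniformly in $N$. For (ii), the crucial input is Theorem~\ref{thm:moments-finite}: the zeroth moment $S_0(N)=\int_\reals d\sigma_N^{\mathscr T}(t)=\sigma_N^{\mathscr T}(+\infty)-\sigma_N^{\mathscr T}(-\infty)$ is, for $N\ge n+l(k_0-j_0)$ with $l\ge 1$, independent of $N$, so it equals a fixed positive semidefinite matrix; hence $\norm{\sigma_N^{\mathscr T}(t)}$ is bounded by a constant independent of $N$ and $t$. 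Similarly the second moment $S_2(N)$ stabilizes, which gives a uniform tail estimate $\norm{\sigma_N^{\mathscr T}(-R)}+\norm{S_0-\sigma_N^{\mathscr T}(R)}\le C/R^2$ via a Chebyshev-type inequality for matrix measures, ruling out escape of mass to infinity.

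Next I would invoke the matrix-valued analogue of Helly's first selection theorem (as in \cite[Thm.\,4.3]{MR0322542}) entrywise: from the uniformly bounded, monotone family $\{\sigma_N^{\mathscr T}\}$ one extracts a subsequence $\{\sigma_{N_j}^{\mathscr T}\}$ converging pointwise on $\reals$ to a nondecreasing, left-continuous matrix-valued function $\sigma$. To pass to the moments and obtain \eqref{eq:infinite-moments}, I would use the matrix-valued Helly second theorem (\cite[Thm.\,4.4]{MR0322542}) together with the uniform tail bound: for every continuous bounded $f$, $\int f\,d\sigma_{N_j}^{\mathscr T}\to\int f\,d\sigma$, and the tail estimate upgrades this to the unbounded integrands $t^k$ for $k\le 2l$. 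This is where the argument of \cite[Sec.\,2.1]{MR0184042} is imitated: truncate $t^k$ outside $[-R,R]$, control the truncation uniformly using the $2l$-th moment bound (which dominates the $k$-th for $k\le 2l$), and let $R\to\infty$ after $j\to\infty$.

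Finally I would reconcile the limiting moments with the stabilized finite ones: by Theorem~\ref{thm:moments-finite}, for each $k\le 2l$ the integrals $\int t^k\,d\sigma_N^{\mathscr T}$ are eventually constant in $N$, so their limit along $\{N_j\}$ equals that common value, which is exactly $\int t^k\,d\sigma_{N_j}^{\mathscr T}$ for all large $j$; this yields \eqref{eq:infinite-moments}. I expect the main obstacle to be the careful bookkeeping in the tail estimate for the matrix-valued case — specifically, arguing that the uniform bound on the $2l$-th matrix moment (rather than a scalar moment) suffices to control $\int_{|t|>R}t^k\,d\sigma_N^{\mathscr T}(t)$ in operator norm uniformly in $N$, so that the passage to the limit is legitimate for all $k$ up to $2l$ and not merely up to $2l-2$. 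As the remark preceding the theorem notes, Theorem~\ref{thm:moments-finite} is genuinely more delicate here than in \cite{MR3543793} because the index threshold $N\ge n+l(k_0-j_0)$ depends on $l$ through the tail geometry of matrices in $\mathfrak M$, so one must be sure the selected subsequence $\{N_j\}$ is cofinal past every such threshold; choosing $N_j\to\infty$ handles this, but it must be stated explicitly.
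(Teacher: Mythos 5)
Your proposal follows essentially the same route as the paper, which in fact omits the proof and refers the reader to \cite{MR3543793}, to \cite[Thms.\,4.3,\,4.4]{MR0322542} and to \cite[Sec.\,2.1]{MR0184042} for exactly the ingredients you assemble: uniform boundedness from the stabilized zeroth moment, entrywise Helly selection, and the truncation argument that converts pointwise convergence into convergence of moments. The one point worth making explicit is the endpoint $k=2l$, where your tail bound $R^{k-2l}$ does not decay; this is repaired by invoking Theorem~\ref{thm:moments-finite} with $m=l+1$, which makes the $(2l+2)$-th moments eventually constant, hence uniformly bounded along the subsequence because $N_j\to\infty$, after which your final paragraph correctly upgrades convergence of the moments to the stated equality.
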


As a result of Theorem~\ref{thm:moments-finite}, the matrix moments of a
spectral function do not depend on $N\gg 1$. Moreover, due to
Theorem~\ref{thm:helly-theorems}, one can consider a sequence of moments
$\{S_k\}_{k=0}^\infty$ given by
\begin{equation*}
  %\label{eq:infinite-moments}
  S_{k}:=\int_{\reals}t^{k}d\sigma_{N}^{\mathscr T}(t)
\end{equation*}
for any $N\ge n+l(k_{0}-j_{0})$ and $k\le 2l$ and, consequently, pose the corresponding
moment problem.

\begin{corollary}
  The spectral function $\sigma$ to which the subsequence
  $\{\sigma_{N_{j}}^{\mathscr{T}}\}_{j=1}^\infty$ converges according to
  Theorem~\ref{thm:helly-theorems} is a solution with an infinite
  number of growing points to a certain
  matrix moment problem given by the sequence $\{S_k\}_{k=0}^\infty$.
\end{corollary}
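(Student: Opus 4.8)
The plan is to combine Theorems~\ref{thm:moments-finite} and
\ref{thm:helly-theorems} with the standard spectral-theoretic
description of a solution to a matrix moment problem. First I would fix
$M\in\mathfrak M$ and, for each $l\in\nats$, take $N\ge
n+l(k_{0}-j_{0})$; by Theorem~\ref{thm:moments-finite} the matrix
moments $\int_{\reals}t^{k}\,d\sigma_{N}^{\mathscr T}(t)$ with $k\le
2l$ stabilize, i.e.\ they do not depend on the particular admissible
$N$. Letting $l\to\infty$ this produces a well-defined sequence
$\{S_{k}\}_{k=0}^{\infty}$ of $n\times n$ matrices, and each $S_{k}$
equals $\int_{\reals}t^{k}\,d\sigma_{N}^{\mathscr T}(t)$ for every
sufficiently large $N$. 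Since each $\sigma_{N}^{\mathscr T}$ is
nondecreasing (Theorem~\ref{thm:spectral-measure-finite}(I)), the
sequence $\{S_{k}\}$ is positive in the Hamburger sense, so it gives
rise to a genuine matrix moment problem.

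Next I would invoke Theorem~\ref{thm:helly-theorems}: along a
subsequence $\{\sigma_{N_{j}}^{\mathscr T}\}$ one gets pointwise
convergence to a matrix-valued function $\sigma$ with
$\int_{\reals}t^{k}\,d\sigma_{N_{j}}^{\mathscr
T}(t)=\int_{\reals}t^{k}\,d\sigma(t)$ for all $k\le 2l$. Because this
holds for every $l$ (after passing, if necessary, to a diagonal
subsequence so the same $\sigma$ works for all $l$ simultaneously), the
limit $\sigma$ satisfies $\int_{\reals}t^{k}\,d\sigma(t)=S_{k}$ for all
$k\in\nats\cup\{0\}$. Hence $\sigma$ is a solution of the matrix moment
problem generated by $\{S_{k}\}_{k=0}^{\infty}$, which is exactly the
first assertion.

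It remains to show that $\sigma$ has infinitely many points of growth.
This is where I expect the only real content of the proof to sit. The
key point is that the operator $\mathcal M$ associated with $M$ is a
closed symmetric operator on the \emph{infinite-dimensional} space
$l_{2}(\nats)$, and the number of growth points of $\sigma$ governs the
dimension of the space $L_{2}(\reals,\sigma)$ onto which $\mathcal M$
is unitarily mapped (via the natural extension of the isometry
$\Upsilon_{\mathscr T}$ from Corollary~\ref{cor:isometry-finite},
realizing $\mathcal M$ as multiplication by the independent variable,
cf.\ Proposition~\ref{prop:multiplication-operator}). If $\sigma$ had
only finitely many growth points, $L_{2}(\reals,\sigma)$ would be
finite-dimensional, contradicting the fact that the vector polynomials
$\pb p_{k}(z)$, $k\in\nats$ — which are the images under the limiting
isometry of the canonical basis vectors $\delta_{k}$ — are infinite in
number and linearly independent (their heights are all distinct by
Lemma~\ref{lem:simple-heights} together with the stabilized
decompositions in the proof of Theorem~\ref{thm:moments-finite}, so in
particular no finite collection can span the rest). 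Equivalently, for
each $m$ one can exhibit $m$ mutually orthonormal polynomials in
$L_{2}(\reals,\sigma)$ inherited from
Proposition~\ref{prop:ortonormal-p-L2-finite} in the prelimit, and
orthonormality passes to the limit because it is expressed through the
moments $S_{0},\dots,S_{2m}$, which are already fixed. Therefore
$\sigma$ necessarily has at least $m$ growth points for every $m$,
hence infinitely many, completing the proof.
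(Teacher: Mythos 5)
Your argument is correct and follows essentially the same route as the paper, which states this corollary without a separate proof: the moment identities come from Theorems~\ref{thm:moments-finite} and \ref{thm:helly-theorems} exactly as you combine them (your diagonal-subsequence remark is a welcome explicit detail the paper glosses over), and the infinitude of growth points is obtained from the orthonormality of $\{\pb{p}_k\}_{k=1}^\infty$ in $L_2(\reals,\sigma)$, which is precisely the content of Lemma~\ref{lem:has-a-spectral-function} and the discussion the paper defers to the cited reference. One correction: your parenthetical claim that the heights of $\pb{p}_1,\pb{p}_2,\dots$ are all distinct is false for general $M\in\mathfrak M$ (see Remark~\ref{rem:no-heights} and Example~\ref{ex:missing-heights}); this is harmless only because your ``equivalently'' argument --- orthonormality is expressed through the stabilized moments $S_0,\dots,S_{2m}$ and therefore passes to the limit, forcing at least $m$ growth points for every $m$ --- is the one that actually carries the proof.
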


\begin{definition}
  \label{def:spectral-measure-gen-case}
  A nondecreasing $n\times n$ matrix-valued function $\sigma$ with an infinite
  number of growing points and finite moments is said to belong to the
  class $\mathfrak S_{n}$ when $\int_\reals d\sigma$ is invertible. A
  function $\sigma\in\mathfrak S_{n}$ is
  called a spectral function of a matrix $M$ in $\mathfrak{M}$
  (actually we mean the spectral function of $\mathcal{M}$; see
  Section~\ref{sec:preliminaries}) when there exist a boundary matrix
  $\mathscr{T}$ such that $\{\pb{p}_{k}\}_{k=1}^\infty$ is an
  orthonormal sequence in $L_2(\reals,\sigma)$.
\end{definition}
As in the case when $M\in\widetilde{\mathfrak{M}}$, studied in
\cite[Sec.\,3]{MR3543793}, it is verified that
the spectral function $\sigma$ of $M\in\mathfrak{M}$
 has an infinite number of growth
 points. Moreover, the vector polynomials are densely contained in
 $L_2(\reals,\sigma)$ whenever the orthonormal
system $\{\pb{p}_k\}_{k=1}^\infty$ turns out to be complete.

By Definition~\ref{def:spectral-measure-gen-case}, one
constructs an isometry $\mathcal U$ between $l_{2}(\nats)$ and
the closure of the polynomials in
$L_2(\reals,\sigma)$ by associating the orthonormal basis
$\{\delta_k\}_{k=1}^\infty$ with the orthonormal system
$\{\pb{p}_k\}_{k=1}^\infty$, i.\,e., $\mathcal U\delta_k=\pb{p}_k$ for all
$k\in\nats$. Under this map, the operator $\mathcal M$ is
transformed into some restriction of the operator of multiplication by
the independent variable. Indeed, if
$v=\sum_{k=1}^\infty v_k\delta_k$ is an element of the
domain of $\mathcal M$, then $\pb{v}=\sum_{k=1}^\infty v_k\pb{p}_k$ is in the
domain of the operator of multiplication by the independent variable
and
\begin{equation*}
  \mathcal U\mathcal{M}\mathcal U^{-1}\pb{v}(t)=t\pb{v}(t)\,.
\end{equation*}
\begin{lemma}
  \label{lem:has-a-spectral-function}
  Any $M\in\mathfrak{M}$ has at least one spectral
  function (as given in
  Definition~\ref{def:spectral-measure-gen-case}).
\end{lemma}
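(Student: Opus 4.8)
The plan is to build a spectral function as a weak limit of the finite ones and then check the two conditions in Definition~\ref{def:spectral-measure-gen-case}. Fix once and for all an arbitrary upper-triangular invertible boundary matrix $\mathscr T$, and let $\{\pb p_{k}\}_{k=1}^{\infty}$ be the associated vector polynomials of Definition~\ref{def:p_q_poly}. By the recursive construction described after \eqref{eq:formal-difference_p} (see also the computation following Definition~\ref{def:p_q_poly}), each $\pb p_{k}$ is a genuine $n$-dimensional vector polynomial determined solely by the entries of $M$ and by $\mathscr T$; in particular, for every $N\ge k$ the polynomial $\pb p_{k}$ computed from the submatrix $M_{N}$ coincides with the one computed from $M$. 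Combining Theorem~\ref{thm:moments-finite} with Theorem~\ref{thm:helly-theorems} (via a diagonal extraction over $l\in\nats$), one obtains a nondecreasing $n\times n$ matrix-valued function $\sigma$, namely the function furnished by the corollary preceding Definition~\ref{def:spectral-measure-gen-case}: it is the pointwise limit of a subsequence $\{\sigma_{N_{j}}^{\mathscr T}\}_{j=1}^{\infty}$ and satisfies $\int_{\reals}t^{k}\,d\sigma(t)=S_{k}$ for every $k\ge 0$, where $S_{k}:=\int_{\reals}t^{k}\,d\sigma_{N}^{\mathscr T}(t)$ denotes the (for $N$ large, $N$-independent) moments of the finite spectral functions.

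Next I would verify that $\sigma\in\mathfrak S_{n}$. It is nondecreasing, being a pointwise limit of nondecreasing functions; it has finite moments of all orders, since each $S_{k}$ is finite (each $\sigma_{N}^{\mathscr T}$ has compact support); and it has infinitely many points of growth by the corollary just cited. It remains to see that $\int_{\reals}d\sigma=S_{0}$ is invertible, and here one uses completeness of the finite spectral measures: by the footnote to Theorem~\ref{thm:spectral-measure-finite} (equivalently, Lemma~\ref{lem:restricted-identity}), $\mathscr T^{*}\bigl(\int_{\reals}d\sigma_{N}^{\mathscr T}\bigr)\mathscr T=\mathscr T^{*}\sum_{k=1}^{N}C_{\mathscr T}^{\lambda_{k}}(C_{\mathscr T}^{\lambda_{k}})^{*}\mathscr T=I_{n}$ for every $N>n$, so that $S_{0}=\int_{\reals}d\sigma_{N}^{\mathscr T}=(\mathscr T\mathscr T^{*})^{-1}$, which is invertible. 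Hence $\sigma\in\mathfrak S_{n}$.

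Finally I would check that $\{\pb p_{k}\}_{k=1}^{\infty}$ is orthonormal in $L_{2}(\reals,\sigma)$. Fix $j,k\in\nats$. Since $\pb p_{j}$ and $\pb p_{k}$ are polynomials, $\inner{\pb p_{j}}{\pb p_{k}}_{L_{2}(\reals,\sigma)}=\int_{\reals}\pb p_{j}(t)^{*}\,d\sigma(t)\,\pb p_{k}(t)$ is a fixed finite linear combination of the entries of the moment matrices $S_{0},\dots,S_{d}$, for some $d=d(j,k)\in\nats\cup\{0\}$. By Theorem~\ref{thm:moments-finite} there is $N_{0}$ such that $S_{s}=\int_{\reals}t^{s}\,d\sigma_{N}^{\mathscr T}(t)$ for all $s\le d$ and all $N\ge N_{0}$. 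Choosing $N\ge\max\{N_{0},j,k\}$ and using that $\pb p_{j},\pb p_{k}$ are unchanged by passing from $M$ to $M_{N}$, we get $\inner{\pb p_{j}}{\pb p_{k}}_{L_{2}(\reals,\sigma)}=\inner{\pb p_{j}}{\pb p_{k}}_{L_{2}(\reals,\sigma_{N}^{\mathscr T})}=\delta_{jk}$ by Proposition~\ref{prop:ortonormal-p-L2-finite}. Therefore $\{\pb p_{k}\}_{k=1}^{\infty}$ is an orthonormal system in $L_{2}(\reals,\sigma)$ and, with the boundary matrix $\mathscr T$, $\sigma$ satisfies all requirements of Definition~\ref{def:spectral-measure-gen-case}.

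I expect the main obstacle to be the argument of the last paragraph: making precise that only finitely many, already stabilized, moments enter each inner product $\inner{\pb p_{j}}{\pb p_{k}}_{L_{2}(\reals,\sigma)}$, and that truncation to $M_{N}$ leaves $\pb p_{j},\pb p_{k}$ intact — together with the diagonal extraction needed to pass from the per-$l$ form of Theorem~\ref{thm:helly-theorems} to a single limit $\sigma$ carrying all the moments. The invertibility of $S_{0}$, by comparison, is immediate from the completeness of the $\sigma_{N}^{\mathscr T}$.
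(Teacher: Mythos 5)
Your proof is correct and follows essentially the same route as the paper's: the paper's (much terser) argument likewise takes $\sigma$ to be the Helly limit furnished by Theorem~\ref{thm:helly-theorems} and deduces orthonormality of $\{\pb{p}_k\}_{k=1}^{\infty}$ in $L_2(\reals,\sigma)$ from Proposition~\ref{prop:ortonormal-p-L2-finite}. The details you supply --- that each inner product involves only finitely many moments which stabilize by Theorem~\ref{thm:moments-finite}, that $\pb{p}_j,\pb{p}_k$ are unchanged under truncation to $M_N$, and that $\int_\reals d\sigma$ is invertible so $\sigma\in\mathfrak S_n$ --- are precisely what the paper leaves implicit.
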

\begin{proof}
  It is a consequence of 
  Proposition~\ref{prop:ortonormal-p-L2-finite} and
  Theorem~\ref{thm:helly-theorems} that the vector
  polynomials $\{\boldsymbol{p}_k(z)\}_{k=1}^{\infty}$, given in \eqref{eq:p_q_poly}, satisfy
\begin{equation*}
  \inner{\boldsymbol{p}_j}{\boldsymbol{p}_k}
_{L_2(\mathbb{R},\sigma)}
=\delta_{jk}
\end{equation*}
for $j,k\in\nats$, where $\sigma$ is the function given
in Theorem~\ref{thm:helly-theorems}.
\end{proof}
\begin{lemma}
  \label{lem:spectral-function-zeroes}
  If $\sigma$ is a spectral function of  $M\in\mathfrak{M}$, then each
  vector polynomial $\pb{q}_{j}$, $j\in G_{n-k_{0}+j_{0}}$, has zero norm.
\end{lemma}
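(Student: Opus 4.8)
The plan is to identify, for $j\in G_{n-k_{0}+j_{0}}$, the vector polynomial $\pb{q}_{j}$ — regarded as an element of $L_{2}(\reals,\sigma)$ — with the difference $\mathcal U(\mathcal M\delta_{i(j)})-t\,\pb{p}_{i(j)}$ for an appropriate basis vector $\delta_{i(j)}$, and then to invoke the fact, recorded right after Definition~\ref{def:spectral-measure-gen-case}, that $\mathcal U$ intertwines $\mathcal M$ with the operator of multiplication by the independent variable. The first task is to pin down the rows to which $\pb{q}_{1},\dots,\pb{q}_{n-k_{0}+j_{0}}$ are attached. By condition (2) of Definition~\ref{def:class}, the rows $j_{0},j_{0}+1,\dots$ of $M$ carry their row-edge entries in the columns $k_{0},k_{0}+1,\dots$; hence, for every sufficiently large $N$, the set $K$ associated with $M_{N}$ splits into the $k_{0}-j_{0}$ lowermost rows of $M_{N}$, whose row-edge entry has been removed by the truncation, together with $n-k_{0}+j_{0}$ rows lying in $\{1,\dots,j_{0}-1\}$, namely those rows of $M$ that carry no row-edge entry at all. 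Since $\Theta_{\mathscr T}(z)$ in \eqref{eq:Q_matrix} extracts the rows of $K$ in increasing order, for $j\le n-k_{0}+j_{0}$ the polynomial $\pb{q}_{j}$ is attached to a fixed row $i(j)\le j_{0}-1$ which is a degeneration of the whole matrix $M$; this row has finite support, so \eqref{eq:q_relation_p} gives, independently of $N$,
\begin{equation*}
  \pb{q}_{j}(t)=\Bigl(\sum_{l}m_{l,i(j)}\,\pb{p}_{l}(t)\Bigr)-t\,\pb{p}_{i(j)}(t)\,,\qquad t\in\reals\,,
\end{equation*}
with a finite sum. (For $j>n-k_{0}+j_{0}$ the corresponding $\pb{q}_{j}$ genuinely depends on $N$, which is why those indices are excluded from the statement.)

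Let $\sigma$ now be a spectral function of $M$ and let $\mathcal U$ be the isometry associated with $\sigma$ and the boundary matrix $\mathscr T$, so that $\mathcal U\delta_{k}=\pb{p}_{k}$ and, for every $v\in\dom\mathcal M$, $(\mathcal U\mathcal M v)(t)=t\,(\mathcal U v)(t)$ in $L_{2}(\reals,\sigma)$. Since the row $i(j)$ of $M$ has finite support and $M$ is Hermitian, the column $i(j)$ is finitely supported as well, so $\delta_{i(j)}\in\dom\mathcal M$ and $\mathcal M\delta_{i(j)}=\sum_{l}m_{l,i(j)}\delta_{l}$ is a finite sum. Applying $\mathcal U$ together with the intertwining relation to $v=\delta_{i(j)}$ gives
\begin{equation*}
  \sum_{l}m_{l,i(j)}\,\pb{p}_{l}=\mathcal U\mathcal M\delta_{i(j)}=t\,\pb{p}_{i(j)}
\end{equation*}
as elements of $L_{2}(\reals,\sigma)$. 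Substituting this identity into the formula of the first paragraph shows that $\pb{q}_{j}$ is the zero element of $L_{2}(\reals,\sigma)$, and therefore $\|\pb{q}_{j}\|_{L_{2}(\reals,\sigma)}=0$ for every $j\in G_{n-k_{0}+j_{0}}$.

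The step I expect to require the most care is the combinatorial description of $K$ in the first paragraph: one must check, from the interplay of conditions (1) and (2) of Definition~\ref{def:class} under truncation, that exactly the first $n-k_{0}+j_{0}$ members of $K$ are degenerations of the infinite matrix $M$ and that these lie in $\{1,\dots,j_{0}-1\}$, the remaining $k_{0}-j_{0}$ being artifacts of the truncation; this is also what forces the restriction $j\in G_{n-k_0+j_0}$, since only for such $j$ is $\pb{q}_{j}$ an $N$-independent polynomial. Once this is settled, the conclusion follows at once from the intertwining property of $\mathcal U$; the result may be read as the limiting form, for the persistent part of $K$, of Proposition~\ref{prop:zeros-finite}.
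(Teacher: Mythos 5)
Your combinatorial description of $K$ is correct and worth recording: for $N$ large the $n$ rows of $M_N$ without row-edge entries split into the $n-k_{0}+j_{0}$ genuine degenerations of $M$ (all lying below row $j_{0}$) and the $k_{0}-j_{0}$ rows whose row-edge entry was cut off by the truncation, and this is exactly what makes $\pb{q}_{1},\dots,\pb{q}_{n-k_{0}+j_{0}}$ independent of $N$ --- a point the paper's own proof uses silently when it ``fixes $j$ and considers $N\gg 1$''. The problem is the second half of your argument. The identity $\pb{q}_{j}=\mathcal U\mathcal M\delta_{i(j)}-t\,\pb{p}_{i(j)}$ that you set up shows that the vanishing of $\norm{\pb{q}_{j}}_{L_{2}(\reals,\sigma)}$ is \emph{equivalent} to the intertwining relation holding at the basis vector $\delta_{i(j)}$. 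For a row $k\in K^{\perp}$ the relation $t\,\pb{p}_{k}=\sum_{l}m_{k,l}\pb{p}_{l}$ holds identically as vector polynomials, by the very recursion \eqref{eq:p_relation} defining the $\pb{p}_{l}$, so the intertwining there is automatic; but for a degeneration row $i(j)$ one has instead $t\,\pb{p}_{i(j)}=\sum_{l}m_{i(j),l}\pb{p}_{l}-\pb{q}_{j}$ as polynomials, so the intertwining at $\delta_{i(j)}$ is not a formal consequence of the orthonormality of $\{\pb{p}_{k}\}_{k=1}^{\infty}$ required by Definition~\ref{def:spectral-measure-gen-case}; it is precisely the statement of the lemma. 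The paragraph after Definition~\ref{def:spectral-measure-gen-case} asserts the intertwining without proof, and its justification at the degeneration rows is this lemma, so invoking it here makes the argument circular.

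What is missing is an independent reason why $\pb{q}_{j}$ is null in $L_{2}(\reals,\sigma)$, and this is where the paper's proof goes a different way: by Proposition~\ref{prop:zeros-finite}, $\norm{\pb{q}_{j}}^{2}_{L_{2}(\reals,\sigma_{N}^{\mathscr T})}=0$ for all large $N$ (your $N$-independence of $\pb{q}_{j}$ is what makes this a statement about one fixed polynomial), and since $\int_{\reals}\inner{\pb{q}_{j}}{d\sigma\,\pb{q}_{j}}$ is a finite linear combination of matrix moments, Theorem~\ref{thm:helly-theorems} allows one to replace $\sigma_{N_{i}}^{\mathscr T}$ by $\sigma$ in that expression and conclude. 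If you want to keep your algebraic framing, you would first have to prove the intertwining at the degeneration rows by exactly such a limiting argument, at which point the reduction to the intertwining adds nothing.
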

\begin{proof}
 If one fixes
$j\in G_{n-k_{0}+j_{0}}$ and considers $N\gg 1$, then
  \begin{equation*}
    0=\norm{\pb{q}_j}_{L_2(\reals,\sigma_N^{\mathscr{T}})}^2=
    \int_\reals\inner{\pb{q}_j}{d\sigma_N^{\mathscr{T}}\pb{q}_j}
  \end{equation*}
as a result of Proposition~\ref{prop:zeros-finite}.
Now, by Theorem~\ref{thm:helly-theorems} there is a
subsequence  $\{\sigma_{N_i}^{\mathscr{T}}\}_{i=1}^{\infty}$ such
that
\begin{equation*}
 0= \int_\reals\inner{\pb{q}_j}{d\sigma_{N_i}^{\mathscr{T}}\pb{q}_j}=
\int_\reals\inner{\pb{q}_j}{d\sigma\pb{q}_j}=
\norm{\pb{q}_j}_{L_2(\reals,\sigma)}^2\,.
\end{equation*}
\end{proof}

\section{Reconstruction of the matrix}
\label{sec:Reconstruction}

The goal of this section is to reconstruct a matrix from an arbitrary
function $\sigma\in\mathfrak S_{n}$ (see
Definition~\ref{def:spectral-measure-gen-case}). This has already
being done in \cite[Sec.\,5]{MR3543793}. Indeed, as was shown in the
previous section, the matrix-valued function obtained in
Theorem~\ref{thm:helly-theorems} is a solution to a matrix moment
problem and has the same properties of the functions constructed in
\cite[Sec.\,4]{MR3543793}.

The following
statement is a rephrasing of \cite[Lem.\,5.1, Prop.\,5.1]{MR3543793}.
\begin{theorem}
  \label{thm:gram-schmidt}
  Given a function $\sigma\in\mathfrak S_{n}$ there is an infinite
  sequence of orthonormal polynomials $\{\tb{p}_k: k\in\nats\}$
  in $L_{2}(\reals,\sigma)$ and a finite sequence $\{\tb{q}_j:j\in G_{n_{0}}\}$, $n_{0}<n$ such
  that  $\norm{\tb{q}_j}=0$ for any $j\in G_{n_{0}}$. Moreover for any
  $n$-dimensional vector polynomial $\pb{r}$ there is $l\in\nats$ such
  that
  \begin{equation*}
    \pb{r}(z)=\sum_{k=1}^{l}a_{k}\tb{p}_{k}(z)+\sum_{j=1}^{n_{0}}S_{j}(z)\tb{q}_{j}(z)\,,
  \end{equation*}
  where each $a_{k}$ is a complex number and each $S_{j}$ is a scalar
  polynomial.
\end{theorem}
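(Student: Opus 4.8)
The plan is to apply a Gram--Schmidt procedure in $L_{2}(\reals,\sigma)$ to the canonical basis $\{\pb e_{m}\}_{m\in\nats}$ of the space $\complex[z]^{n}$ of $n$-dimensional vector polynomials (defined in \eqref{eq:canonical-vector-polynomial}), processing the $\pb e_{m}$ in order of increasing height; recall that $h(\pb e_{m})=m-1$. The algebraic fact that controls the degenerate case is the following: if a vector polynomial $\pb w$ satisfies $\norm{\pb w}_{L_{2}(\reals,\sigma)}=0$, then $\inner{\pb w}{\pb u}_{L_{2}(\reals,\sigma)}=0$ for every $\pb u$ (Cauchy--Schwarz), and $\norm{S(z)\pb w(z)}_{L_{2}(\reals,\sigma)}=0$ for every scalar polynomial $S$ (since $d\sigma\ge 0$, a zero-norm vector polynomial is annihilated by $d\sigma$ at every growth point, hence so is $S(z)\pb w(z)$). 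Consequently the set $\mathcal Z$ of zero-norm vector polynomials is a $\complex[z]$-submodule of $\complex[z]^{n}$, and zero-norm vectors have no effect on the orthogonal projections computed along the way.

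At step $m$ one forms $\pb v_{m}:=\pb e_{m}-\sum_{k}\inner{\tb p_{k}}{\pb e_{m}}_{L_{2}(\reals,\sigma)}\tb p_{k}$, the sum running over the orthonormal polynomials produced in the previous steps. If $\norm{\pb v_{m}}_{L_{2}(\reals,\sigma)}>0$, then $\pb v_{m}$, normalized, is appended to the list $\{\tb p_{k}\}$; otherwise $\pb v_{m}\in\mathcal Z$ and is recorded. In either case $\pb e_{m}=\pb v_{m}+\sum_{k}\inner{\tb p_{k}}{\pb e_{m}}_{L_{2}(\reals,\sigma)}\tb p_{k}\in\Span_{\complex}\{\tb p_{k}\}+\mathcal Z$, so after processing $\pb e_{1},\dots,\pb e_{h+1}$ every vector polynomial $\pb r$ of height $\le h$ takes the form $\pb r=\sum_{k=1}^{l}a_{k}\tb p_{k}+\pb w$ for some $l$, complex numbers $a_{k}$, and some $\pb w\in\mathcal Z$. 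I would first check that the procedure does not stall in the first $n$ steps: the Gram matrix of $\pb e_{1},\dots,\pb e_{n}$ equals $\int_{\reals}d\sigma$, which is positive definite because $\sigma\in\mathfrak S_{n}$; hence $\tb p_{1},\dots,\tb p_{n}$ are produced and no zero-norm residual occurs before step $n+1$.

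Next I would pin down $\mathcal Z$. As $\complex[z]$ is a principal ideal domain and $\complex[z]^{n}$ is Noetherian, $\mathcal Z$ is a finitely generated free module of some rank $n_{0}\le n$. Since the quotient $\complex[z]^{n}/\mathcal Z$ injects into $L_{2}(\reals,\sigma)$, which is infinite-dimensional because $\sigma$ has infinitely many growth points, the rank of $\mathcal Z$ cannot be $n$ (the quotient by a full-rank submodule is a finitely generated torsion $\complex[z]$-module, hence finite-dimensional over $\complex$); thus $n_{0}<n$, and simultaneously the image of the polynomials, which is spanned by the $\tb p_{k}$, is infinite-dimensional, so the Gram--Schmidt process produces infinitely many $\tb p_{k}$. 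Because the height is the degree attached to a term order on $\complex[z]^{n}$, the submodule $\mathcal Z$ admits — exactly as in Definition~\ref{def:generators}, via the staircase argument of \cite[Sec.\,4]{MR3389906}, which uses only the term-order structure — generators $\tb q_{1},\dots,\tb q_{n_{0}}$ of successively minimal heights, with $\mathcal Z=\mathbb M(\tb q_{1})\dotplus\dots\dotplus\mathbb M(\tb q_{n_{0}})$ and with their heights lying in pairwise distinct classes of $\integers/n\integers$; by construction $\norm{\tb q_{j}}_{L_{2}(\reals,\sigma)}=0$ for all $j\in G_{n_{0}}$. Expressing the element $\pb w$ of the preceding paragraph through these generators yields $\pb w(z)=\sum_{j=1}^{n_{0}}S_{j}(z)\tb q_{j}(z)$ with scalar polynomials $S_{j}$, and combining this with $\pb r=\sum_{k=1}^{l}a_{k}\tb p_{k}+\pb w$ gives the asserted decomposition.

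The main obstacle is the strict inequality $n_{0}<n$ together with the staircase structure of $\mathcal Z$: both hinge on recognizing $\mathcal Z$ as a $\complex[z]$-submodule of $\complex[z]^{n}$ and on the fact that an infinite set of growth points of $\sigma$ keeps $\mathcal Z$ from having full rank; the direct-sum decomposition $\mathcal Z=\mathbb M(\tb q_{1})\dotplus\dots\dotplus\mathbb M(\tb q_{n_{0}})$ is then the staircase theory of \cite{MR3389906,MR3711273} transported verbatim to $\mathcal Z$. By contrast, the orthogonality of zero-norm vectors to everything, the non-degeneracy of the first $n$ Gram--Schmidt steps, and the bookkeeping that turns the output of the process into the stated form are routine.
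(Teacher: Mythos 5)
Your overall strategy---Gram--Schmidt applied to the canonical vector polynomials $\pb e_{m}$ in order of increasing height, with the zero-norm residuals organized into a $\complex[z]$-submodule $\mathcal Z$ whose staircase generators become the $\tb q_{j}$---is exactly the route the paper takes (its proof is the orthonormalization flow chart together with the generator theory of \cite{MR3389906}), and most of your steps are sound: the closure of $\mathcal Z$ under addition and under multiplication by scalar polynomials, the non-degeneracy of the first $n$ steps via the invertibility of $\int_{\reals}d\sigma$, and the transport of the direct-sum decomposition $\mathcal Z=\mathbb M(\tb q_{1})\dotplus\dots\dotplus\mathbb M(\tb q_{n_{0}})$ to this submodule are all correct and give the stated decomposition once the generators are in hand.

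The one step that does not hold up as written is the proof that $n_{0}<n$ (equivalently, that infinitely many $\tb p_{k}$ are produced). You argue that $\complex[z]^{n}/\mathcal Z$ injects into $L_{2}(\reals,\sigma)$ and that the latter is infinite-dimensional; but injecting into an infinite-dimensional space says nothing about the dimension of the quotient itself, and the paper is explicit that the vector polynomials need not be dense in $L_{2}(\reals,\sigma)$, so you cannot identify the quotient with the whole space. What you actually need is that the \emph{image} of $\complex[z]^{n}$ in $L_{2}(\reals,\sigma)$ is infinite-dimensional, and this is where the hypothesis on growth points must be used directly. A correct patch: if $\mathcal Z$ had rank $n$, the quotient would be a finitely generated torsion module, so for each $j\in G_{n}$ there would be a nonzero scalar polynomial $P_{j}$ with $P_{j}\pb e_{j}\in\mathcal Z$; setting $P=\prod_{j}P_{j}$ gives $0=\norm{P\pb e_{j}}^{2}=\int_{\reals}\abs{P}^{2}\,d\sigma_{jj}$ for every $j$, hence $\int_{\reals}\abs{P}^{2}\,d(\operatorname{tr}\sigma)=0$. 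Since $\sigma$ is nondecreasing, every growth point of $\sigma$ is a growth point of the scalar measure $\operatorname{tr}\sigma$, which therefore has infinite support, while $P$ vanishes at only finitely many points---a contradiction. With this substitution your argument is complete and coincides in substance with the paper's.
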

The proof of this assertion if based on the Gram-Schmidt
procedure taking into account the properties of a function in
$\mathfrak{S}$. The algorithm is illustrated in the flow chart of
Fig.\ref{fig:flow}.

% Define block styles
\tikzstyle{decision} = [diamond, draw, text width=7.6em, text badly
centered, node distance=4cm, inner sep=0pt, aspect=2, rounded corners]
\tikzstyle{decision1} = [diamond, draw, text width=4.5em, text badly
centered, node distance=4cm, inner sep=0pt, aspect=2, rounded corners]
\tikzstyle{block} = [rectangle, draw, text width=5em, text centered,
rounded corners, minimum height=3em] \tikzstyle{block1} = [rectangle,
draw, text width=8em, text centered, rounded corners, minimum
height=3em] \tikzstyle{line} = [draw, -latex'] \tikzstyle{cloud} =
[draw, ellipse, node distance=3cm, minimum height=2em, text width=3em,
text centered]
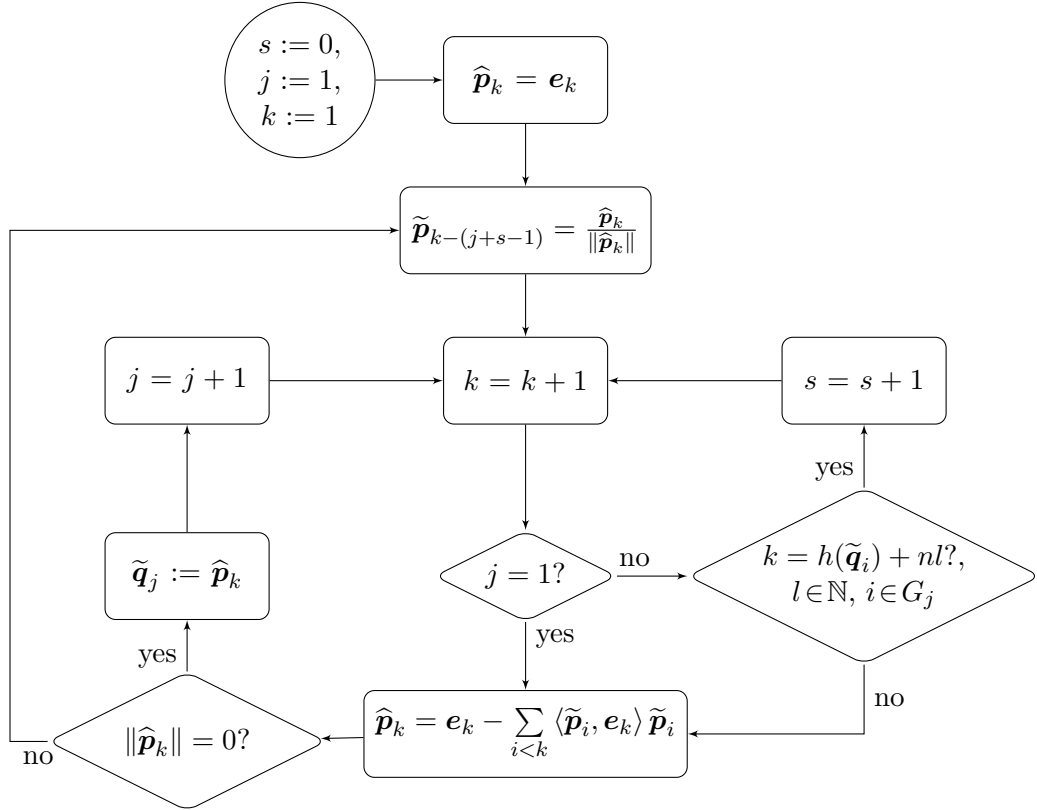
\begin{figure}[h]
  \label{fig:flow}
\begin{small}
  \begin{center}
    \begin{tikzpicture}[node distance = 3cm, auto]
      % Place nodes
      \node [block] (init) {$\widehat{\pb{p}}_k=\pb{e}_k$}; \node
      [cloud, left of=init] (initial-condition) {$s:=0$, $j:=1$,
        $k:=1$}; \node [block1, below of=init, node distance=2cm]
      (normalized)
      {$\tb{p}_{k-(j+s-1)}=\frac{\widehat{\pb{p}}_k}{\norm{\widehat{\pb{p}}_k}}$};
      \node [block, below of=normalized, node distance=2cm]
      (k-counter) {$k=k+1$}; \node [block, right of=k-counter, node
      distance=4.5cm] (s-counter) {$s=s+1$}; \node [block, left
      of=k-counter, node
      distance=4.5cm] (j-counter) {$j=j+1$}; \node [decision1, below
      of=k-counter, node distance=2.6cm] (j-1) {$j=1$?}; \node
      [decision, right of=j-1, node distance=4.5cm] (k-mod-n) {
        $k=h(\tb{q}_{i})+nl$?, $l\!\in\!\nats,\,i\!\in\!G_j$};
      \node [block, below of=j-counter, node distance=26mm] (q-poly)
      {$\tb{q}_j:=\widehat{\pb{p}}_k$}; \node [block1, below of=j-1,
      node distance=2.1cm, text width=10.5em] (G-S)
      {$\widehat{\pb{p}}_k=\pb{e}_k-\sum\limits_{i<k}
        \inner{\tb{p}_i}{\pb{e}_{k}}\tb{p}_i$}; \node [decision, below
      of=q-poly, node distance=2.2cm] (zero-norm) {
        $\norm{\widehat{\pb{p}}_k}=0$?};
      % Draw edges
      \path [line] (initial-condition) -- (init); \path [line] (init)
      -- (normalized); \path [line] (normalized) -- (k-counter); \path
      [line] (k-counter) -- (j-1); \path [line] (j-1) -- node [near
      start] {yes} (G-S); \path [line] (G-S) -- (zero-norm); \path
      [line] (zero-norm.west) -- node [near start] {no} ++(-0.5cm,0) |-
      (normalized); \path [line] (zero-norm) -- node [near start]
      {yes} (q-poly); \path [line] (q-poly) -- (j-counter); \path
      [line] (s-counter)--(k-counter); \path [line]
      (j-counter)--(k-counter); \path [line] (j-1)-- node [near start]
      {no} (k-mod-n); \path [line] (k-mod-n)-- node [near start] {yes}
      (s-counter); \path [line] (k-mod-n)|- node [near start] {no}
      (G-S);
    \end{tikzpicture}
  \end{center}
\end{small}
\caption{Orthonormalization algorithm}
\end{figure}
The properties of the vector polynomials obtained by the algorithm above
described yield the following assertion \cite[Thms.\,6.4,6.5]{MR3543793}) 
\begin{theorem}
  \label{thm:reconstruction-spectral-function}
  If $\sigma\in\mathfrak S_{n}$, then there is a matrix $M\in\mathfrak
  M$ and a boundary matrix $\mathscr T$ such that $\sigma$ is the
  corresponding spectral measure of $M$.
\end{theorem}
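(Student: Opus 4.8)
The plan is to read the entries of the sought matrix off the ``multiplication by $z$'' recurrence satisfied by the orthonormal polynomials supplied by Theorem~\ref{thm:gram-schmidt}, in the spirit of \cite[Thms.\,6.4,\,6.5]{MR3543793}. First I would apply Theorem~\ref{thm:gram-schmidt} to the given $\sigma\in\mathfrak S_n$ to obtain the orthonormal system $\{\tb p_k\}_{k\in\nats}$ together with the finite family of zero-norm polynomials $\{\tb q_j\}_{j\in G_{n_0}}$, $n_0<n$. For each $k$, the vector polynomial $z\tb p_k$ admits, by the decomposition in Theorem~\ref{thm:gram-schmidt}, an expansion
\begin{equation*}
  z\tb p_k(z)=\sum_{i}m_{ki}\tb p_i(z)+\sum_{j=1}^{n_0}S_j^{(k)}(z)\tb q_j(z)\,.
\end{equation*}
Since $\int_\reals\inner{\tb q_j}{d\sigma\,\tb q_j}=0$ with $d\sigma\ge 0$, the scalar measure $\tb q_j(t)^*d\sigma(t)\tb q_j(t)$ vanishes identically, hence $\norm{S_j^{(k)}\tb q_j}_{L_2(\reals,\sigma)}=0$ and each $S_j^{(k)}\tb q_j$ is orthogonal to every polynomial; pairing the displayed identity with $\tb p_i$ therefore yields $m_{ki}=\inner{\tb p_i}{z\tb p_k}_{L_2(\reals,\sigma)}$. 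Set $M:=\{m_{jk}\}_{j,k\in\nats}$; because $d\sigma$ is a Hermitian matrix measure and $t$ is real on its support, $\overline{m_{jk}}=m_{kj}$, so $M$ is Hermitian.

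The next step is to verify that $M\in\mathfrak M$ (in fact $M\in\widetilde{\mathfrak M}\subset\mathfrak M$) with $n_M=n$. This is a height computation resting on the rigidity of heights from \cite{MR3389906,MR3711273}: the heights $h(\tb p_1)<h(\tb p_2)<\cdots$ are strictly increasing, $h(z\tb p_k)=h(\tb p_k)+n$, and the numbers $\{h(\tb p_k):k\in\nats\}\cup\{h(\tb q_j)+nl:j\in G_{n_0},\,l\in\nats\cup\{0\}\}$ cover $\nats\cup\{0\}$ without repetition (this is exactly what makes the decomposition in Theorem~\ref{thm:gram-schmidt} possible). Consequently $m_{ki}=0$ whenever $h(\tb p_i)>h(\tb p_k)+n$, which bounds the bandwidth; on each column past the $n$-th, the single entry coming from the ``leading'' $\tb p$ that enters upon multiplying by $z$ is the unique row-edge entry, and the rows in which the top height is carried by some $z^l\tb q_j$ are precisely the degenerations, so condition (1) of Definition~\ref{def:class} holds with $n_M=n$. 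An analogous bookkeeping for the finitely many heights attached to $\tb q_1,\dots,\tb q_{n_0}$ produces the ``tail'' of $M$, i.e.\ condition (2), with $n_0=n-k_0+j_0$ in the notation of Definition~\ref{def:class} (compare Lemma~\ref{lem:spectral-function-zeroes}); one also checks directly that every row-edge entry is simultaneously a column-edge entry, so $M\in\widetilde{\mathfrak M}$.

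It remains to produce the boundary matrix and match the polynomials. Since $\int_\reals d\sigma$ is invertible, every nonzero constant vector has positive norm, so no $\tb q_j$ can have height $\le n-1$; hence the heights $0,1,\dots,n-1$ are carried by $\tb p_1,\dots,\tb p_n$, which are therefore constant vectors with $\tb p_i$ supported on the first $i$ coordinates and nonzero $i$-th coordinate. Let $\mathscr T$ be the $n\times n$ matrix whose $i$-th column is $\tb p_i$; it is upper triangular and invertible. With $M$ and $\mathscr T$ as above, the polynomials $\pb p_k$ defined by \eqref{eq:p_q_poly} satisfy $\pb p_i=\tb p_i$ for $i\in G_n$ by construction, and for $k>n$ they are determined by the recurrence \eqref{eq:p_relation}, i.e.\ by the row equations of $M$ indexed by $K^\perp$; these are precisely the rows in which $z\tb p_r$ introduces a new orthonormal polynomial, so the $\tb p_k$ solve the same recurrence, giving $\pb p_k=\tb p_k$ for all $k$. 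Hence $\{\pb p_k\}_{k\in\nats}$ is orthonormal in $L_2(\reals,\sigma)$, and by Definition~\ref{def:spectral-measure-gen-case} the function $\sigma$ is the spectral function of $M$ associated with $\mathscr T$.

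The main obstacle is the second step: establishing that the coefficient matrix $M$ genuinely has the structure required by Definition~\ref{def:class} — one row-edge entry per column beyond the $n$-th, plus the tail condition — together with the identification $\pb p_k=\tb p_k$. Both reduce to the combinatorics of heights of the generators and of the $\tb p_k$, which must be handled with the precision of \cite[Sec.\,3]{MR3711273} and \cite[Sec.\,6]{MR3543793}; the $\tb q_j$-contributions to the recurrences, which are invisible to the inner product because the $\tb q_j$ have zero norm, are exactly what allow rows without row-edge entries and must be tracked carefully in order to pin down $j_0$ and $k_0$.
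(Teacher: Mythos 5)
Your reconstruction is essentially the paper's own argument: the paper establishes this theorem only by invoking the Gram--Schmidt algorithm of Theorem~\ref{thm:gram-schmidt} and citing \cite[Thms.\,6.4,\,6.5]{MR3543793}, and your proposal fleshes out exactly that route (read $m_{ki}=\inner{\tb p_i}{z\tb p_k}$ off the multiplication-by-$z$ recurrence, control the band/degeneration/tail structure by the height bookkeeping, and take $\mathscr T$ from the first $n$ orthonormal polynomials). The one refinement worth noting is that the residual $z\tb p_k-\sum_i m_{ki}\tb p_i$ is a priori only a zero-norm polynomial (a combination $\sum_j S_j^{(k)}\tb q_j$) rather than identically zero, so the identification should read $\pb p_k=\tb p_k$ modulo zero-norm polynomials; since multiplication by $z$ preserves zero-norm polynomials this still propagates through the recurrence \eqref{eq:p_relation} and yields orthonormality of $\{\pb p_k\}$ in $L_2(\reals,\sigma)$, which is all that Definition~\ref{def:spectral-measure-gen-case} requires.
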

The results of the previous section and the last two theorems give the
following statement.
\begin{theorem}
  \label{thm:main}
  If $M\in\mathfrak M$ and $\sigma$ is a spectral function given in
  Definition~\ref{def:spectral-measure-gen-case}, then there is
  $\widetilde M\in\widetilde{\mathfrak M}$ and a boundary matrix
  $\mathscr T$ such that $\sigma$ is also a spectral function of
  $\widetilde M$.
\end{theorem}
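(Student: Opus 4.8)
\textit{Proof proposal.}
The plan is to feed the given spectral function $\sigma$ into the reconstruction machinery of Section~\ref{sec:Reconstruction} and then observe that the matrix it returns actually lies in the subclass $\widetilde{\mathfrak M}$, not merely in $\mathfrak M$. First I would record that $\sigma$ is an admissible input for that machinery: by Definition~\ref{def:spectral-measure-gen-case}, a spectral function of $M\in\mathfrak M$ is by definition an element of $\mathfrak S_n$ (it is nondecreasing, has finite moments, infinitely many growth points, and invertible total mass, as discussed in Section~\ref{sec:infin-march-slav}). Hence Theorems~\ref{thm:gram-schmidt} and \ref{thm:reconstruction-spectral-function} apply to $\sigma$: the orthonormalization algorithm of Fig.~\ref{fig:flow} produces an orthonormal sequence $\{\tb p_k\}_{k\in\nats}$ and a finite family of zero-norm polynomials $\{\tb q_j\}_{j\in G_{n_0}}$, $n_0<n$, with the decomposition property, and from this data one obtains a matrix $\widetilde M$ together with an upper-triangular invertible boundary matrix $\mathscr T$ (satisfying $\mathscr T\pb e_i=\pb p_i$ for $i\in G_n$) for which the associated orthonormal polynomials coincide with the $\tb p_k$; thus $\sigma$ is a spectral function of $\widetilde M$ in the sense of Definition~\ref{def:spectral-measure-gen-case}.

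The substance of the argument is then to upgrade the membership ``$\widetilde M\in\mathfrak M$'' furnished by Theorem~\ref{thm:reconstruction-spectral-function} to ``$\widetilde M\in\widetilde{\mathfrak M}$'', i.e.\ to check conditions (a) and (b) of Definition~\ref{def:ourclass}. For this I would appeal to the construction behind Theorem~\ref{thm:reconstruction-spectral-function}, namely \cite[Thms.\,6.4,\,6.5]{MR3543793}: in the algorithm the polynomials $\tb p_k$ are produced one at a time in order of strictly increasing height, the zero-norm polynomials $\tb q_j$ being set aside as degenerations are detected, and multiplication by the independent variable yields a relation $z\,\tb p_k=\sum_i c_{ki}\tb p_i+\sum_j s_j(z)\tb q_j$ from which the entries of $\widetilde M$ are read off. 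Because the heights strictly increase, the row-edge entry of each successive row sits in a strictly larger column, which is exactly condition (a) of Definition~\ref{def:ourclass}; condition (b), that every row-edge entry is simultaneously a column-edge entry, then follows from the Hermitian symmetry of $\widetilde M$ together with the banded shape just produced. This is precisely the output of the reconstruction in \cite[Sec.\,6]{MR3543793}, whose class coincides with $\widetilde{\mathfrak M}$ by Remark~\ref{rem:subclass}, so no argument beyond invoking those results is really required.

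The main obstacle, and the only point that is not bookkeeping, is this last verification: that the strict monotonicity of heights built into the orthonormalization algorithm translates into the strict column-monotonicity of row-edge entries demanded by Definition~\ref{def:ourclass}(a), so that the reconstructed matrix is genuinely in $\widetilde{\mathfrak M}$. The passage from ``$\sigma$ is a spectral function of some $M\in\mathfrak M$'' to ``$\sigma\in\mathfrak S_n$'', and the identification of $\{\tb p_k\}$ with the orthonormal polynomials $\{\pb p_k\}$ attached to $\widetilde M$ via $\mathscr T$, are routine. Assembling these pieces, together with the results of Section~\ref{sec:infin-march-slav}, completes the proof.
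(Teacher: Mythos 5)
Your proposal is correct and follows essentially the same route as the paper: the paper's proof of Theorem~\ref{thm:main} consists precisely of noting that $\sigma\in\mathfrak S_{n}$ by the results of Section~\ref{sec:infin-march-slav} and then invoking Theorems~\ref{thm:gram-schmidt} and \ref{thm:reconstruction-spectral-function}, whose underlying construction (from \cite[Thms.\,6.4,\,6.5]{MR3543793}) outputs a matrix in the class $\widetilde{\mathfrak M}$ together with the required boundary matrix. Your additional verification that the height-monotonicity of the orthonormalization algorithm yields conditions (a) and (b) of Definition~\ref{def:ourclass} is exactly the content the paper delegates to the cited results, so you have merely made explicit what the paper leaves implicit.
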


This proves that any operator associated with a matrix
in $\mathfrak M$ is unitary equivalent to a operator associated with a
matrix in $\widetilde{\mathfrak{M}}$. This equivalence gives a new
approach to the problem of linear multidimensional interpolation and
also shed light on mechanical systems, in particular on the
equivalence between systems of interacting particles.

\subsection*{Acknowledgments}
S.P. was supported by CONAHCYT Apoyos Complementarios para Estancia
Sabática Vinculadas a la Consolidación de Grupos de Investigación \textnumero\,852558. L.O.S. has been partially supported by
CONACyT Ciencia de Frontera 2019 \textnumero\,304005.
\def\cprime{$'$} \def\lfhook#1{\setbox0=\hbox{#1}{\ooalign{\hidewidth
  \lower1.5ex\hbox{'}\hidewidth\crcr\unhbox0}}} \def\cprime{$'$}
  \def\cprime{$'$} \def\cprime{$'$} \def\cprime{$'$} \def\cprime{$'$}
  \def\cprime{$'$} \def\cprime{$'$}

\end{document}